\title{On the Complexity of  {Intersection Non-emptiness} for Star-Free Language Classes} %TODO Please add
\author{Emmanuel Arrighi}{University of Bergen, Norway}{emmanuel.arrighi@uib.no}{https://orcid.org/0000-0002-0326-1893}{Research Council of Norway (no. 274526), IS-DAAD (no. 309319)
}
\author{Henning Fernau}
{Universit\"at Trier, Fachbereich IV, Informatikwissenschaften, Germany \and \url{https://www.uni-trier.de/index.php?id=49861}}
{fernau@informatik.uni-trier.de}
{https://orcid.org/0000-0002-4444-3220}
{DAAD PPP (no. 57525246)}
\author{Stefan Hoffmann}{Universit\"at Trier, Fachbereich IV, Informatikwissenschaften, Germany}{hoffmanns@informatik.uni-trier.de}{https://orcid.org/0000-0002-7866-075X}{}
\author{Markus Holzer}{Universit\"at Giessen, Institut f\"ur Informatik, %Arndtstr.\ 2, 35392 Giessen, 
Germany}{holzer@informatik.uni-giessen.de}{https://orcid.org/0000-0003-4224-4014}{}
\author{Isma\"el Jecker}{Institute of Science and Technology, Klosterneuburg, Austria}{ismael.jecker@ist.ac.at}{https://orcid.org/0000-0002-6527-4470}{Marie Skłodowska-Curie Grant Agreement no. 754411}
\author{Mateus de Oliveira Oliveira}{University of Bergen, Norway}{mateus.oliveira@uib.no}{https://orcid.org/0000-0001-7798-7446}{Research Council of Norway (no. 288761), IS-DAAD (no. 309319)}
\author{Petra Wolf}
{Universit\"at Trier, Fachbereich IV, Informatikwissenschaften, Germany \and \url{https://www.wolfp.net/}}
{wolfp@informatik.uni-trier.de}
{https://orcid.org/0000-0003-3097-3906}
{DFG project FE 560/9-1, DAAD PPP (no. 57525246)}
\authorrunning{E. Arrighi et al.}%and H. Fernau and S. Hoffmann and M. Holzer and S. Jecker and M. de Oliveira Oliveira and P. Wolf} %TODO mandatory. First: Use abbreviated first/middle names. Second (only in severe cases): Use first author plus 'et al.'
\keywords{Intersection Non-emptiness Problem, Star-Free Languages, Straubing-Th\'{e}rien Hierarchy, dot-depth Hierarchy, Commutative Languages, Complexity} %TODO mandatory; please add comma-separated list of keywords
\DeclareFontFamily{U}{shuffle}{}
\DeclareFontShape{U}{shuffle}{m}{n}{ <-8>shuffle7 <8->shuffle10}{}
\DeclareMathOperator{\lcm}{lcm}
\newcommand{\resp}{respectively}
\newcommand{\comm}{\texttt{perm}}
\newcommand{\dd}[1]{{\mathcal B}_{#1}}
\renewcommand{\st}[1]{{\mathcal L}_{#1}}
\newcommand{\FANEI}{\textsc{FA-NEI}\xspace}
\newcommand{\INE}{\textsc{Intersection Non-emptiness}\xspace}
\begin{document}

\maketitle
        
%TODO mandatory: add short abstract of the document
\begin{abstract}
  In the \INE\ problem, we are given a list of finite automata
  $A_1,A_2,\dots,A_m$ over a common alphabet~$\Sigma$ as input, and
  the goal is to determine whether some string $w\in \Sigma^*$ lies in
  the intersection of the languages accepted by the automata in the
  list.  We analyze the complexity of the \INE problem under the
  promise that all input automata accept a language in some level of
  the dot-depth hierarchy, or some level of the Straubing-Th\'{e}rien
  hierarchy. Automata accepting languages from the lowest levels of
  these hierarchies arise naturally in the context of model checking.
  We identify a dichotomy in the dot-depth hierarchy by showing that
  the problem is already \NP-complete when all input automata accept
  languages of the levels~$\dd{0}$ or~$\dd{1/2}$ and already
  \PSPACE-hard when all automata accept a language from the
  level~$\dd{1}$. Conversely, we identify a tetrachotomy in the
  Straubing-Th\'{e}rien hierarchy. More precisely, we show that the
  problem is in~$\AC^0$ when restricted to level~$\st{0}$; complete
  for $\L$ or $\NL$, depending on the input representation, when
  restricted to languages in the level~$\st{1/2}$; \NP-complete when
  the input is given as DFAs accepting a language in $\st{1}$ or
  $\st{3/2}$; and finally, \PSPACE-complete when the input automata
  accept languages in level~$\st{2}$ or higher. Moreover, we show that
  the proof technique used to show containment in $\NP$ for DFAs
  accepting languages in $\st{1}$ or $\st{3/2}$ does not generalize to
  the context of NFAs. To prove this, we identify a family of
  languages that provide an exponential separation between the state
  complexity of general NFAs and that of partially ordered NFAs. To
  the best of our knowledge, this is the first superpolynomial
  separation between these two models of computation.
\end{abstract}

\section{Introduction}
The \INE problem for finite automata is one of the most fundamental and well studied 
problems in the interplay between algorithms, complexity theory, and automata theory 
\cite{Kozen1977Lower,ie85,ie92,ie01,Lipton2003Intersection,Wehar2014hardness,FernauK17,Wehar2016thesis}.
Given a list $A_1,A_2,\dots,A_m$ of finite automata over a common alphabet $\Sigma$, 
the goal is to determine whether there is a string $w\in \Sigma^*$ that is accepted 
by each of the automata in the list. This problem is $\PSPACE$-complete when no restrictions 
are imposed \cite{Kozen1977Lower}, and becomes \NP-complete when the input automata accept
unary languages (implicitly contained already in~\cite{StMe73}) or %when the input automata 
%accept 
finite languages~\cite{Rampersad2010detecting}. 

In this work, we analyze the complexity of the \INE problem under the assumption that the languages accepted by 
the input automata belong to a given level of the Straubing-Th\'erien hierarchy \cite{place2017concatenation,St81,straubing1985finite,Th81} or 
to some level of the Cohen-Brzozowski dot-depth hierarchy \cite{Brz76,CoBr71,place2017concatenation}. Somehow, these languages are 
severely restricted, in the sense that both hierarchies, which are infinite, 
are entirely contained in the class of star-free languages, a class of languages 
 that can be represented by expressions that use union, concatenation, and complementation, but \emph{no} Kleene star operation \cite{Brz76,brzozowski1978dot,place2017concatenation}.
Yet, languages belonging to fixed levels of either hierarchy may already be 
very difficult to characterize, in the sense that the very problem of deciding whether the language accepted by 
a given finite automaton belongs to a given full level or half-level $k$ of either hierarchy is open, except 
for a few values of $k$ \cite{klima2010new,glasser2001level,gla2000decidable,place2017concatenation}. 
It is worth noting that while the problem of determining whether a given automaton accepts a language 
in a certain level of either the dot-depth or of the Straubing-Th\'{e}rien hierarchy is computationally hard (Theorem~\ref{thm:membership-pspace-hardness}),
automata accepting languages in lower levels of these hierarchies arise naturally in a variety of applications 
such as model checking where the \INE problem is of fundamental relevance \cite{Abdulla12,BouajjaniJNT00,BMT07}.

An interesting question to consider is how the complexity of the \INE problem changes as we 
move up in the levels of the Straubing-Th\'{e}rien hierarchy or in the levels of the dot-depth hierarchy. 
In particular, does the complexity of this problem changes gradually, as we increase the complexity 
of the input languages? In this work, we show that this is actually not the case, and that the 
complexity landscape for the \INE problem is already determined by the very first levels of 
either hierarchy (see \autoref{fig:diagram}).
%
%
%We start by observing that {\sc intersection non-emptiness} for finite automata accepting languages in 
%the 0-th level of the Straubing-Th\'{e}ien hierarchy is in  $\AC^0$ (Theorem \ref{thm:STH-0-AC0}). \todo{MH: This may confuse the reader, becaus ethe AC0 result only holds, because the input obeys the promise that all automata re from lelvel~$0$. This is very misleading and something has to be said here (and also in the abstract)} 
%
Our first main result states that the \INE{} problem for NFAs and DFAs
accepting languages from the level $1/2$ of the Straubing-Th\'erien hierarchy are $\NL$-complete and $\L$-complete,
respectively, under $\AC^0$ reductions (\autoref{thm:NL-completeness}). Additionally, this completeness 
result holds even in the case of unary languages. To prove hardness for $\NL$ and $\L$, respectively, 
we will use a simple reduction from the reachability problem for DAGs and for directed trees, respectively. 
Nevertheless, the proof of containment in $\NL$ and in $\L$, respectively, will require a new insight that
may be of independent interest. More precisely, we will use a characterization of languages in the level 
$1/2$ of the Straubing-Th\'{e}rien hierarchy as shuffle ideals to show 
that the \INE{} problem can be reduced to {\sc concatenation non-emptiness}
(\autoref{lem:half-level-intersection-nonemptiness-iff-emptiness}). This  allows us to decide 
\INE{} by analyzing each finite automaton given at the input individually. % (Theorem~\ref{theorem:LCompleteness}).
It is worth mentioning that this result is optimal in the sense that the problem becomes \NP-hard  
even if we allow a single DFA to accept a language from $\st{1}$, and require all the others to accept languages
from $\st{1/2}$ (\autoref{thm:ST-1-NP-hard}). 

Subsequently, we analyze the complexity of \INE{} when all input automata
are assumed to accept languages from one of the levels of  $\dd{0}$ or $\dd{1/2}$ of the dot-depth hierarchy, 
or from the levels $\st{1}$ or $\st{3/2}$ of the Straubing-Th\'{e}rien hierarchy.
It is worth noting that $\NP$-hardness follows straightforwardly from the fact that 
\INE for DFAs accepting finite languages is already \NP-hard \cite{Rampersad2010detecting}. Containment in \NP, on the other hand,
is a more delicate issue, and here the representation of the input automaton plays an important
 role. A characterization of languages in $\st{3/2}$ 
in terms of languages accepted by  partially ordered NFAs~\cite{STV01} is crucial for us,
combined with the fact that \INE{} when the input is given by such automata 
is \NP-complete \cite{MaTh15}. Intuitively, the proof in \cite{MaTh15} %\todo{MO:Is this the correct ref? HF: Unclear, also difference with TCS 2017 paper unclear} 
follows by showing that the minimum 
length of a word in the intersection of languages in the level 
$3/2$ of the Straubing-Th\'{e}rien hierarchy is bounded by a polynomial on the sizes of the 
minimum partially ordered NFAs accepting these languages. To prove that \INE{} 
is in $\NP$ when the input automata are given as DFAs, we prove a new result establishing that the number 
of Myhill-Nerode equivalence classes in a language in the level $\st{3/2}$ is at least as large as the 
number of states in a minimum partially ordered automaton representing the same language (\autoref{boundSizepoNFA}). 

Interestingly, we show that the proof technique used to prove this last result does not generalize to the context of NFAs. 
To prove this, we carefully design a sequence ${(L_n)}_{n\in \mathbb{N}_{\geq 1}}$ of languages over a binary alphabet such that for every $n\in \mathbb{N}_{\geq 1}$,  
the language $L_n$ can be accepted by an NFA of size $n$, but any partially ordered NFA accepting $L_n$ has size $2^{\Omega(\sqrt{n})}$.
This lower bound is ensured by the fact that the syntactic monoid of $L_n$ has many $\mathcal{J}$-factors. 
Our construction is inspired by a technique introduced by Klein and Zimmermann, in a completely different context,
to prove lower bounds on the amount of look-ahead necessary to win infinite games with delay~\cite{KleinZ14}. 
%
%
%
%
%It is worth noting that re
%It is worth noting that these languages are obtained as a variation of 
%the languages of words with \emph{bad $j$-pairs}, introduced by Klein and Zimmermann in the context of delay games \cite{KleinZ14}. 
%
%to show that \emph{delay games}
%
%
%
%\textcolor{red}{THIS LANGUAGE}\todo{MO:Does this language have a name?} which has been originally 
%introduced in \textcolor{red}{THIS CONTEXT} \cite{}\todo{MO: TO ISMAEL: What was the original use of the language? Some refs where the language has been used?}
%can be accepted by NFAs of polynomial size while the size of the minimum partially ordered NFA accepting the language
%is exponential in $n$.
%\todo[inline]{IJ: the original use was quite different, and it was not exactly the same languages. I don't know how much we want to go into the details, maybe we can say something like:
%
%More precisely, we construct an infinite family of binary languages $(L_n)_{n \in \mathbb{N}}$ such that, for every $n \in \mathbb{N}$,
%the language $L_n$ is accepted by an NFA of polynomial size,
%yet the size of the minimal partially ordered NFA accepting it is exponential in $n$.
%These languages are obtained as a variation of the languages of words with \emph{bad $j$-pairs},
%introduced by Klein and Zimmermann to show that in \emph{delay games},
%the players might need exponential lookaheaded to win the game~\cite{KleinZ14}.
%}
To the best of our knowledge, this is the first exponential separation between 
the state complexity of general NFAs and that of partially ordered NFAs.
While this result does not exclude the possibility that 
\INE for languages in $\st{3/2}$ represented by general NFAs is in~\NP, it gives some indication 
that proving such a containment  requires substantially new techniques. 

Finally, we show that \INE{} for both DFAs and for NFAs is already $\PSPACE$-complete if all 
accepting languages are from the level $\dd{1}$ of the dot-depth hierarchy or from the level
$\st{2}$ of the Straubing-Th\'{e}rien hierarchy. 
%The languages used in the original proof by Kozen in his classic \PSPACE-completeness proof for \INE{} are neither in $\dd{1}$ nor in $\st{2}$. Nevertheless, 
We can adapt   Kozen's classical \PSPACE-completeness proof by using the complement of languages introduced in~\cite{DBLP:journals/corr/abs-1907-13115} 
in the study of partially ordered automata. Since the languages in \cite{DBLP:journals/corr/abs-1907-13115} belong to 
$\st{3/2}$, their complement belong to $\st{2}$ (and to $\dd{1}$), and therefore, the proof follows.

\section{Preliminaries} %The Definitions for All of Us :-)}
We let $\mathbb{N}_{\geq k}$ denote the set of natural numbers greater or equal than $k$.

We assume the reader to be familiar with the basics in computational
complexity theory~\cite{Pa94a}. In particular, we recall the inclusion
chain:
$\AC^0\subset\NC^1\subseteq\L\subseteq\NL\subseteq\P\subseteq
\NP\subseteq\PSPACE$.
Let~$\AC^0$ ($\NC^1$, \resp) refer to the class of
problems accepted by Turing machines with a bounded (unbounded, \resp)
number of alternations in logarithmic time; alternatively one can
define these classes by uniform Boolean circuits.  Here, \L\ (\NL,
\resp) refers to the class of problems that are accepted by
deterministic (nondeterministic, \resp) Turing machines with
logarithmic space, \P\ (\NP, \resp) denotes the class of problems
solvable by deterministic (nondeterministic, \resp) Turing machines in
polynomial time, and \PSPACE\ refers to the class of languages
accepted by deterministic or nondeterministic Turing machines in
polynomial space~\cite{Sa70a}. Completeness and hardness are always
meant with respect to deterministic logspace many-one reductions
unless otherwise stated. We will also consider the parameterized class
\XP{} of problems that can be solved in time $n^{f(k)}$, where $n$ is the size
of the input, $k$ is a parameter, and $f$ is a computable function \cite{FluGro2006}.

We  mostly consider \emph{nondeterministic finite
  automata} (NFAs).  An NFA~$A$ is a tuple $A = (Q,
\Sigma, \delta, q_0, F)$, where~$Q$ is the finite \emph{state set}
with the \emph{start state}~$q_0\in Q$, the \emph{alphabet}~$\Sigma$
is a finite set of input symbols, and $F \subseteq Q$ is the
\emph{final state set}.  The \emph{transition function} $\delta :
Q\times \Sigma \to 2^Q$ extends to words from $\Sigma^*$ as usual. 
Here,~$2^Q$ denotes the powerset of~$Q$.  
%We sometimes refer to
%the function~$\delta$ as a relation and we identify a transition
%$q\in\delta(p, a)$ with the tuple $(p, a, q)$.  
By $L(A) = \{\, w \in \Sigma^* \mid \delta(q_0, w) \cap
F\neq\emptyset\,\}$, we denote the \emph{language accepted by}~$A$.
The NFA~$A$ is a \emph{deterministic finite automaton} (DFA) if
$|\delta(q,a)|=1$ for every $q\in Q$ and $a\in\Sigma$. Then, 
we simply write $\delta(q,a)=p$ instead of $\delta(q,a)=\{p\}$. If
$|\Sigma| = 1$, we call $A$ a \emph{unary} automaton.
%\todo{PW: I need the definition: For an alphabet $\Sigma$, we denote with $\Sigma^i$, $\Sigma^{\leq i}$, $\Sigma^{\geq i}$ the set of all words of length $i$, at most $i$, and at least $i$, respectively.}

%%%%%%%%%%%%%%%%%%%%%%%%%%%%%%%%%%%
%
% powerset automaton
%
%Every NFA $A=(Q,\Sigma,\delta,q_0,F)$ can be converted to an
%equivalent DFA $D(A)=(2^Q,\Sigma,\delta_D,\{q_0\},\{\,S\in 2^Q\mid
%S\cap F\neq\emptyset\,\}$, where~$\delta_D$ is the extension
%of~$\delta$ to sets of states in the following way. For every set $S
%\subseteq Q$ with~$S$ and $a \in \Sigma$, we set $\delta_D(S, a) :=
%\cup_{q\in S}\delta(q, a)$.  We call the DFA~$D(A)$ the
%\emph{subset} or \emph{powerset automaton} of~$A$.

We study \INE\ problems and their
complexity. For finite automata, this problem is defined as follows:
\begin{itemize}
    \item \emph{Input}: Finite automata $A_i=(Q_i,\Sigma,\delta_i,q_{(0,i)},F_i)$, for $1\leq i\leq m$.
\item \emph{Question}: Is there a word~$w$ that is accepted by all~$A_i$,
i.e., is $\bigcap_{i=1}^m L(A_i)\neq\emptyset$?
\end{itemize}
Observe that the automata have a common input alphabet. Note that the
complexity of the non-emptiness problem for finite automata of a
certain type is a lower bound for the \INE\ for this particular type
of automata.
%
%It is well known that the \INE\ problem
%for both deterministic and nondeterministic finite automata is
%\PSPACE-complete~\cite{Kozen1977Lower}.\todo{PW: For NFA this result even holds for $k=2$ automata}  There are several ways to restrict or
%generalize this problem.  For instance, as a consequence of~\cite{Ga76}
%and~\cite{StMe73}, the \INE\ problem for
%unary deterministic or nondeterministic finite state devices\todo{EA: Is this common to use "devices" for automata?} becomes
%\NP-complete.
%
Throughout the paper we are mostly interested in the complexity of the \INE\
problem for finite state devices %automata (deterministic or nondeterministic) 
whose languages are contained in a particular language class. 
%Since we are mainly interested in the complexity of the \INE\ problem and not in the verification of the language class containment, we consider this property as a promise.
%
%More generally, one possibility is to enforce that the input automata 
%obey a certain structural property~$P$ such as, for instance, being
%partial ordered, and/or a property~$Q$\todo{PW: Property $O$? $Q$ reminds of states. HF: $P_S$ vs. $P_L$?} that applies to the accepted
%languages of the involved automata,\todo{HF: Wouldn't this be `structural'?} such as, for instance, being
%commutative.  Thus, we speak of the \INE\ problem 
%  for automata with property~$P$ whose languages obey~$Q$. In
%our example, assuming that~$P$ and~$Q$ have to be fulfilled
%simultaneously, we are confronted with the \INE\ problem 
%  for partially ordered automata that accept
%commutative languages. Since we are mainly interested in the
%complexity of the \INE\ problem and not in
%the verification of the properties~$P$ and/or~$Q$, we consider these
%properties as promises
%
%\todo{EA\@: do we need the promise in our results} 
%
%to
%the given input that need not be verified by the intersection
%non-emptiness algorithm.

%\section{Results on Intersection Non-Emptiness for Classes from the Straubing-Th{\'e}rien and Dot-depth Hierarchy}

\begin{figure}[t]%
    \centering
    \scalebox{0.9}{
    \begin{tikzpicture}[>=stealth']
    \useasboundingbox (-1,-.75) rectangle (10, 2.5);
    \node at (0,0) (st0) {$\st{0}$};
    \node at (2,0) (st12) {$\st{\frac{1}{2}}$};
    \node at (2, 2) (dd0) {$\dd{0}$};
    \node at (4,0) (st1) {$\st{1}$};
    \node at (4, 2) (dd12) {$\dd{\frac{1}{2}}$};
    \node at (6,0) (st32) {$\st{\frac{3}{2}}$};
    \node at (6, 2) (dd1) {$\dd{1}$};
    \node at (8,0) (st2) {$\st{2}$};

    \draw[->] (st0) to (dd0);
    \draw[->] (st0) to (st12);
    \draw[->] (st12) to (dd12);
    \draw[->] (st12) to (st1);
    \draw[->] (st1) to (dd1);
    \draw[->] (st1) to (st32);
    \draw[->] (st32) to (st2);
    \draw (st32) to +(45:2.12);
    \draw[dotted] (st32) ++(45:2.12) to +(45:.70);
    \draw[dotted] (st2) to +(0:1);

    \draw[->] (dd0) to (dd12);
    \draw[->] (dd0) to (st1);
    \draw[->] (dd12) to (dd1);
    \draw[->] (dd12) to (st32);
    \draw[->] (dd1) to (st2);
    \draw (dd1) to +(0:1.5);
    \draw[dotted] (dd1) ++(0:1.5) to +(0:.5);

    \begin{pgfonlayer}{background}
        \fill[color=green!10] (st0) +(-.5, 1) to[out=30, in=90] +(2.5, 0) .. controls +(-90:2) and +(-150:3) .. cycle;

        \draw[color=green!30,line width=8,line cap=round,name path=nl] (st0) +(-.5, 1) to[out=30, in=90] node[color=black, very near start] [sloped] {\NL} +(2.5, 0) .. controls +(-90:2) and +(-150:3) .. cycle;
        \path[name path=top] (st0) ++(-1, 0.5) to +(-2,0);
        \path[name path=bottom] (st0) ++(1, -0.5) to +(0,-2);
        \path[name intersections={of= nl and top }] (intersection-1) node (top) {};
        \path[name intersections={of= nl and bottom }] (intersection-1) node (bottom) {};
        \draw[color=green!30,line width=2,line cap=round,dashed] (st0) +(-1, 0.75) to[out=0, in=90] +(1, -0.5) to (bottom);
        \draw[color=green!30,line width=2,line cap=round,dashed] (st0) +(-1, 0.75) to (top);

        \fill[color=orange!10] (dd0) ++(-0.7, .5) .. controls +(-70:1) and +(180:1) .. ++(3, -3) to ++(5,0) to ++(0, 3) .. controls +(180:1) and +(110:1) .. cycle;
        \draw[color=orange!30,line width=8,line cap=round] (dd0) ++(-0.7, .5) .. controls +(-70:1) and +(180:1) .. node[color=black, very near end] [sloped] {\NP} +(3, -3);

        \fill[color=red!10] (dd1) ++(-1, .5) to[out=-70, in=180] ++(4, -3) to ++(0, 3) .. controls +(180:1) and +(110:1) .. cycle;
        \fill[color=red!10,path fading=east] (st2) ++(1,-.5) rectangle ++(1,3);

        \draw[color=red!30,line width=8,line cap=round] (dd1) +(-1, .5) to[out=-70, in=180] node[color=black, near end] [sloped] {\PSPACE} +(3, -2.5);
    \end{pgfonlayer}

\end{tikzpicture}}
    \caption{Straubing-Th{\'e}rien and dot-depth hierarchies: the~\INE~status.}%
    \label{fig:diagram}
\end{figure}

We study the computational complexity of the intersection
non-emptiness for languages from the classes of the
Straubing-Th{\'e}rien~\cite{St81,Th81} and Cohen-Brzozowski's dot-depth
hierarchy~\cite{CoBr71}. Both hierarchies are concatenation hierarchies
that are defined by alternating the use of polynomial and Boolean
closures. Let's be more specific. Let~$\Sigma$ be a finite alphabet. A
language~$L\subseteq\Sigma^*$ is a \emph{marked product} of the
languages $L_0,L_1,\ldots, L_k$, if $L=L_0a_1L_1\cdots a_kL_k$, where
the~$a_i$'s are letters. For a class of languages~$\mathcal M$, the
\emph{polynomial closure} of~$\mathcal M$ is the set of languages that are
finite unions of marked product of languages from~$\mathcal M$.

The concatenation hierarchy of basis~$\mathcal M$ (a class of
languages) is defined as follows (also refer to~\cite{Pin98}): Level~$0$ is~$\mathcal M$, i.e.,
${\mathcal M}_0={\mathcal M}$ and, for each $n\geq 0$,
\begin{enumerate}
\item ${\mathcal M}_{n+1/2}$, that is, level~$n+1/2$, is the polynomial closure of level~$n$ and 
\item ${\mathcal M}_{n+1}$, that is, level~$n+1$, is the Boolean closure of level~$n+1/2$.
\end{enumerate}
The basis of the dot-depth hierarchy is the class of all finite and
co-finite languages\footnote{The dot-depth hierarchy, apart
  level~$\dd{0}$, coincides with the concatenation hierarchy starting
  with the language class
  $\{\emptyset,\{\lambda\},\Sigma^+,\Sigma^*\}$. } and their classes
are referred to as $\dd{n}$ ($\dd{n+1/2}$, \resp), while the basis of
the Straubing-Th{\'e}rien hierarchy is the class of languages that
contains only the empty set and~$\Sigma^*$ and their classes are denoted by
$\st{n}$ ($\st{n+1/2}$, \resp). Their inclusion relation is given by
$$\dd{n+1/2} \subseteq \dd{n+1} \subseteq \dd{n+3/2}\quad\mbox{and}\quad
\st{n+1/2} \subseteq \st{n+1} \subseteq \st{n+3/2},$$ for $n\geq 0$,
and 
$$\st{n-1/2}\subseteq \dd{n-1/2}\subseteq \st{n+1/2}\quad\mbox{and}\quad
\st{n}\subseteq \dd{n}\subseteq \st{n+1},$$ for $n\geq 1$.  In
particular, $\st{0}\subseteq\dd{0}$,\ $\dd{0}\subseteq \dd{1/2}$, and
$\st{0}\subseteq\st{1/2}$.  Both hierarchies are infinite for
alphabets of at least two letters and completely exhaust the class of
star-free languages, which can be described by expressions that use
union, concatenation, and complementation, but \emph{no} Kleene star
operation.  For singleton letter alphabets, both hierarchies collapse
to~$\dd{0}$ and~$\st{1}$, respectively.
Next, we describe the first few levels
of each of these hierarchies:
% in detail:
\begin{description} 
\item[Straubing-Th{\'e}rien hierarchy:] A language of~$\Sigma^*$ is of
  level~$0$ if and only if it is empty or equal to~$\Sigma^*$. 
%The corresponding language family is referred to as $\st{0}$. 
  The languages of level~$1/2$ are exactly those languages that
  are a finite (possibly empty) union of languages of the form
  $\Sigma^*a_1\Sigma^*a_2\cdots a_k\Sigma^*$, where the~$a_i$'s are
  letters from~$\Sigma$. 
%Alternatively, a language is of level~$\frac12$ if and only
%  if it is a shuffle ideal.  Here, a language~$L$ is a shuffle ideal
%  if for every word~$w\in L$ and $v\in\Sigma^*$ the set $w\shuffle v$
%  is contained in~$L$, where
%  $$w\shuffle v:=\{\,w_0v_0w_1v_1\ldots w_nv_n\mid\mbox{$w=w_0w_1\ldots
%    w_n$ and $v=v_0v_1\ldots v_n$}\,\}.$$ The operation~$\shuffle$
%  naturally generalizes to sets. 
%The family of all level~$\frac12$ languages is denoted by $\st{1/2}$. 
The languages of level~$1$ are
  finite Boolean combinations of languages of the form
  $\Sigma^*a_1\Sigma^*a_2\cdots a_k\Sigma^*$, where the~$a_i$'s are
  letters. 
These languages are also called \emph{piecewise} testable languages.
In particular, all finite and co-finite languages are of level~$1$. 
%The family of all languages of level~$1$ languages is denoted by $\st{1}$. 
Finally, the languages of level~$3/2$ of~$\Sigma^*$ are
  the finite unions of languages of the form
  $\Sigma_0^*a_1\Sigma_1^*a_2\cdots a_k\Sigma_k^*$, where the~$a_i$'s
  are letters from~$\Sigma$ and the~$\Sigma_i$ are subsets of~$\Sigma$. 
%Let $\st{3/2}$ refer to the family of all level~$\frac32$ languages.

\item[Dot-depth hierarchy:] A language of~$\Sigma^*$ is of
  dot-depth (level)~$0$ if and only if it is finite or co-finite. 
%The corresponding language family is referred to as $\dd{0}$. 
  The languages of dot-depth~$1/2$ are exactly those languages
  that are a finite union of languages of the form
  $u_0\Sigma^*u_1\Sigma^*u_2\cdots u_{k-1}\Sigma^*u_k$, where~$k\geq
  0$ and the~$u_i$'s are words from~$\Sigma^*$. 
%The family of all dot-depth~$\frac12$ languages is denoted by $\dd{1/2}$. 
The languages of dot-depth~$1$ are finite Boolean combinations of
  languages of the form $u_0\Sigma^*u_1\Sigma^*u_2\cdots
  u_{k-1}\Sigma^*u_k$, where~$k\geq 0$ and the~$u_i$'s are words
  from~$\Sigma^*$.  
%The family of all languages of dot-depth~$1$ languages is denoted by $\dd{1}$.
\end{description}
It is worth mentioning that in~\cite{STV01} it was shown that
partially ordered NFAs (with multiple initial states) characterize the
class~$\st{3/2}$, while partially ordered DFAs characterize the class
of $\mathcal R$-trivial languages~\cite{BrFi80}, a class that is
strictly in between $\st{1}$ and $\st{3/2}$. For an automaton~$A$
with input alphabet~$\Sigma$, a state~$q$ is reachable from a
state~$p$, written $p\leq q$, if there is a word $w\in \Sigma^*$ such
that $q\in\delta(p,w)$. An automaton is partially ordered 
%(we use the prefix po to denote this property)
if $\leq$ is a partial order.
%Partially ordered DFAs were introduced
%in~\cite{Si72,BrFi80} and poNFAs in~\cite{STV01}. 
Partially ordered automata are sometimes also called acyclic or weakly
acyclic automata. We refer to a partially ordered NFA (DFA,
respectively) as poNFA (poDFA, respectively).

%We also consider (star-free) \emph{commutative languages}. 
%
%A language $L\subseteq\Sigma^*$ is \emph{commutative} if, for any
%$a,b\in\Sigma$ and words $u,v\in\Sigma^*$, we have that $uabv\in L$
%implies $ubav\in L$.
%
%More formally,
%let~$\Sigma$ be an alphabet and $\Gamma\subseteq\Sigma$. Define the
%\emph{projection homomorphisms} $\pi_{\Gamma} : \Sigma^* \to \Gamma^*$
%by $\pi_{\Gamma}(\sigma) = \sigma$ for $\sigma \in \Gamma$ and
%$\pi_{\Gamma}(\sigma) = \varepsilon$ for $\sigma\notin \Gamma$.  By a
%\emph{projection} of a language, we mean the image of the language
%under some projection, i.e., the language resulting by the deletion of
%letters.  Additionally, we set $\comm(u) = \{\, v \in \Sigma^* \mid
%\pi_{\{a\}}(u) = \pi_{\{a\}}(v) \text{ for every $a\in\Sigma$}\,\}$
%and $\comm(L) = \bigcup_{u \in L}\comm(u)$. %, i.e., for a given
%%language~$L$, the language of all possible commutations of words
%%in~$L$.
%A language~$L$ is \emph{commutative} if $L=\comm(L)$.

%\todo[inline]{Mh: stress the need of fact that some of our results have a promise. For instance, checking for $\st{1}$ containment for a DFA can be done in \NL\ and is also complete for this class~\cite{ChHu91}, while for NFAs it becomes intractable, namely \PSPACE-complete~("`Alternating Towers and Piecewise Testable Separators"')}
%
The fact that some of our results have a promise looks a bit technical, but the following result implies that we cannot get rid of this condition in general. To this end, we study, for a language class $\mathcal{L}$, the following question of \textsc{$\mathcal{L}$-Membership}.
\begin{itemize}
\item \emph{Input}: A finite automaton $A$.
\item \emph{Question}: Is $L(A)\in\mathcal{L}$?
\end{itemize}
\begin{theorem}
\label{thm:membership-pspace-hardness}
For each level $\mathcal{L}$ of the Straubing-Th\'erien or the dot-depth hierarchies, the \textsc{$\mathcal{L}$-Membership} problem for NFAs is \PSPACE-hard, even when  restricted to binary alphabets.
\end{theorem}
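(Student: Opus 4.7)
The plan is to reduce from the NFA universality problem over the binary alphabet $\{a,b\}$, which is $\PSPACE$-complete~\cite{Kozen1977Lower}: given an NFA $A$ over $\{a,b\}$, build in deterministic logspace an NFA $A'$ over $\{a,b\}$ such that $L(A')\in\mathcal{L}$ iff $L(A)=\{a,b\}^*$.

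Because every level $\mathcal{L}$ of both hierarchies is contained in the class of star-free languages, a single non-star-free ``witness of non-membership'' works uniformly for all levels. A convenient choice is $W=(aa)^*$, seen as a language over $\{a,b\}$: it is accepted by a constant-size NFA, but its syntactic monoid is $\mathbb Z/2\mathbb Z$ and hence not aperiodic, so $W$ is not star-free and $W\notin\mathcal{L}$. From $A$ and $W$ I would construct $A'$ for the language
$$L(A')\;=\;L(A)\cdot u\cdot\{a,b\}^*\;\cup\;\{a,b\}^*\cdot u\cdot W,$$
where $u\in\{a,b\}^+$ is a short unbordered marker word; an NFA for $L(A')$ of size linear in $|A|$ is produced by the standard union and concatenation constructions. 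In the universal case $L(A)=\{a,b\}^*$, the first disjunct absorbs the second and $L(A')=\{a,b\}^*\cdot u\cdot \{a,b\}^*$, which sits in $\dd{1/2}$ and, by the inclusions between the levels, inside $\mathcal{L}$ for every level considered except the base $\st{0}$. In the non-universal case, some $w\notin L(A)$ exists, and one can quotient $L(A')$ by $wu$ and intersect the result with a fixed $\mathcal{L}$-definable context to extract exactly $W$; by the closure of every level of the two hierarchies under Boolean operations and left/right quotients, this would give $W\in\mathcal{L}$, contradicting the choice of $W$.

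The main obstacle I expect is making the ``residuation isolates $W$'' step work on a binary alphabet: occurrences of $u$ may overlap with themselves and with the missing word $w$, so the raw quotient $(wu)^{-1}L(A')$ mixes contributions from both disjuncts. I plan to control this by (i)~taking $u$ to be unbordered so that its occurrences do not overlap, and (ii)~choosing the ``context'' regular language against which the quotient is intersected so as to cancel the spurious contributions coming from $L(A)\cdot u\cdot\{a,b\}^*$. A separate boundary case is the base level $\st{0}=\{\emptyset,\{a,b\}^*\}$, which forces the universal case of the reduction to collapse all the way to $\{a,b\}^*$; this is handled by a slight variant of the construction, for instance $L(A')=\{\varepsilon\}\cup L(A)\cdot\{a,b\}$, whose language equals $\{a,b\}^*$ precisely when $L(A)=\{a,b\}^*$ and is never empty, so it lies in $\st{0}$ iff $L(A)=\{a,b\}^*$.
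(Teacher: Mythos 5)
Your high-level plan is exactly the one the paper takes: observe that every level $\mathcal{L}$ of either hierarchy is a positive variety, hence contains $\Sigma^*$ and is closed under (two-sided) quotients, is nontrivial since $\mathcal{L}\subseteq$ star-free, and reduce from NFA \textsc{Non-universality}. The paper packages all of this by invoking Theorem~3.1.1 of Hunt--Rosenkrantz \cite{HunRos78} rather than building the reduction by hand.

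However, the concrete construction you propose is broken, and the obstacle you flag is not merely a technicality that unborderedness repairs. Take $u=ab$ (unbordered) and $L(A)=\{a,b\}^*\setminus\{ab\}$, a non-universal language. Then $L(A)\cdot u\cdot\Sigma^*=\Sigma^* ab\,\Sigma^*$: indeed, for any $z$, write $abz=\varepsilon\cdot ab\cdot z$ with $\varepsilon\in L(A)$, so already the first disjunct absorbs everything that contains $ab$. Hence $L(A')=\Sigma^* ab\,\Sigma^*\cup\Sigma^* ab\,W=\Sigma^* ab\,\Sigma^*\in\st{1/2}\subseteq\mathcal{L}$, even though $L(A)\neq\Sigma^*$. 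So the reduction does not just have an unproved step; it gives the wrong answer. The root cause is that a marker \emph{word} $u\in\{a,b\}^+$ is not a fresh \emph{symbol}: the missing word $w$ (or, as here, $\varepsilon$ followed by an incidental occurrence of $u$) can sit inside a decomposition of the form $x\,u\,y$ with $x\in L(A)$, and then $L(A)uy\Sigma^*$ is already universal enough to swallow the witness. Unborderedness only prevents $u$ from overlapping itself; it does not prevent $u$ from occurring inside words of $L(A)$ or inside $w$, and no choice of a fixed ``$\mathcal{L}$-definable context'' to intersect with after quotienting can repair a reduction whose value is already wrong. To make this work over a binary alphabet you need a genuine fresh-letter construction, i.e., first do the reduction over $\{a,b,\#\}$ with $\#$ fresh and the language $L(A)\#\,\Sigma'^*\cup\Sigma^*\# W$, where the quotient by $w\#$ cleanly yields $W$ because the first $\#$ in the string unambiguously marks where the $L(A)$-prefix ends, and then apply a block encoding of $\{a,b,\#\}$ into $\{a,b\}$ that preserves this unambiguity (disallowing the codes from appearing in ``phase-shifted'' positions). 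This is precisely what the Hunt--Rosenkrantz theorem together with the standard translation from regular expressions to NFAs takes care of, which is why the paper cites it instead of redoing the argument. Your handling of the $\st{0}$ boundary case, by contrast, is fine.
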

\begin{proof}
For the \PSPACE-hardness, note that each of the classes contains $\{0,1\}^*$ and is closed under quotients, since each class is a positive variety. As \textsc{Non-universality} is \PSPACE-hard for NFAs, we can apply Theorem~3.1.1 of \cite{HunRos78}, first reducing regular expressions to NFAs.%\footnote{Hunt and Rosenkrantz talk about regular expressions, not about NFAs, but regular expression representations can be translated into NFA representations at an expense that is only polynomial.}
\end{proof}
For some of the lower levels of the hierarchies, we also have containment in $\PSPACE$, but in general, this is unknown, as it connects to the famous open problem if, for instance,  \textsc{$\mathcal{L}$-Membership} is decidable for $\mathcal{L}=\st{3}$; see~\cite{DBLP:journals/tcs/Masopust18,place2017concatenation} for an overview on the decidability status of these questions.
Checking for~$\st{0}$ up to $\st{2}$ and $\dd{0}$ up to~$\dd{1}$ containment for DFAs can be done in \NL\ and is also complete for this class by ideas  similar to the ones used in~\cite{ChHu91}.
%Moreover, checking for $\st{1}$ containment for a DFA can be done in \NL\ and is also complete for this class~\cite{ChHu91}, while for NFAs it becomes intractable, namely \PSPACE-complete, compare Theorem~4 in~\cite{DBLP:journals/corr/HolubMT14}.\todo{MH: deciding the levels or dot-depth can be done (in the cases we are interested in. HF: Complexity?}
%
%%% Local Variables: 
%%% mode: latex
%%% TeX-master: "sample.tex"
%%% End: 
%

%\input{Definitions}

\section{Inside Logspace} %:  Level~ $0$ and~ $1/2$ of the Straubing-Th{\'e}rien Hierarchy.}
\label{section:INELogspace}

%\todo[inline]{MO: The main insight of this section is Lemma 1 + the first part of Theorem 3. In other words, the 
%clever part is the containment in $\NL$ and in $\L$. The hardness part is more standard. So I would suggest that 
%we split the presentation in two statements: The easy one, showing hardness, and the insightful one, showing containment. 
%The proof of the easy one goes to the appendix. The proof of the insightful one stays.}

A language of~$\Sigma^*$ belongs to level~$0$ of the
Straubing-Th{\'e}rien hierarchy if and only if it is empty
or~$\Sigma^*$. The \INE\ problem for language from this language
family is not entirely trivial, because we have to check for
emptiness. Since by our problem definition the property of a language
being a member of level~$0$ is a promise, we can do the emptiness
check within $\AC^0$, since we only have to verify
whether the empty word belongs to the language~$L$ specified by the
automaton. In case $\varepsilon\in L$, then $L=\Sigma^*$; otherwise
$L=\emptyset$. Since in the definition of finite state devices we do
not allow for $\varepsilon$-transitions, we thus only have to check
whether the initial state is also an accepting one.  Therefore, we
obtain:
\begin{theorem}\label{thm:STH-0-AC0}
  The \INE\ problem for DFAs or NFAs accepting languages from~$\st{0}$
belongs to $\AC^0$. 
%\todo{MO: I think the best is to state the theorem using logspace uniform $\AC_0$}
\end{theorem}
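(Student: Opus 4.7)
The plan is to exploit the extreme rigidity of level~$\st{0}$: by definition every language in this class is either $\emptyset$ or $\Sigma^*$. Under the promise of the theorem, each input language $L(A_i)$ is one of these two. Hence
\[
\bigcap_{i=1}^m L(A_i) \;=\;
\begin{cases}
\Sigma^* & \text{if } L(A_i)=\Sigma^* \text{ for every } i,\\
\emptyset & \text{otherwise.}
\end{cases}
\]
So the \INE{} instance is a YES-instance iff \emph{none} of the $L(A_i)$ is empty, and this is the only question we have to answer.

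The key observation is that, within the two-element class $\{\emptyset,\Sigma^*\}$, the languages are separated by the single word~$\varepsilon$: we have $L(A_i)=\Sigma^*$ iff $\varepsilon\in L(A_i)$, and $L(A_i)=\emptyset$ otherwise. Because our NFAs are defined without $\varepsilon$-transitions, $\varepsilon\in L(A_i)$ holds iff the initial state is already accepting, i.e.\ iff $q_{(0,i)}\in F_i$. Testing this is a single lookup in the input encoding of $A_i$ (it depends only on two marked positions per automaton: the index of $q_{(0,i)}$ and the bit telling whether that state lies in $F_i$).

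The algorithm therefore computes the Boolean value
\[
\bigwedge_{i=1}^{m}\bigl[\,q_{(0,i)}\in F_i\,\bigr].
\]
This is an unbounded fan-in conjunction of $m$ bit-tests, each of which is an $\AC^0$ predicate over the input tape, so the whole test is performed by a uniform constant-depth, polynomial-size Boolean circuit. The construction makes no use of determinism or nondeterminism beyond extracting the initial-state and final-state information from the encoding, so it applies uniformly to both DFAs and NFAs. I do not anticipate any obstacle: the only mild subtlety is to make sure the chosen input encoding allows the ``initial state is final'' check to be phrased as a (constant number of) direct bit access, which is the standard assumption for $\AC^0$ computations on automata.
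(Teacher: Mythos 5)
Your proof is correct and follows essentially the same route as the paper: using the promise to reduce the problem to testing $\varepsilon \in L(A_i)$, which (in the absence of $\varepsilon$-transitions) is just the check $q_{(0,i)} \in F_i$, followed by an unbounded fan-in AND in $\AC^0$. The only difference is that you spell out the final conjunction explicitly, which the paper leaves implicit.
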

\noindent For the languages of level~$\st{1/2}$
%$\frac12$ of the Straubing-Th{\'e}rien hierarchy,
we find the following completeness result.
\begin{theorem}\label{thm:NL-completeness}
  The \INE\ problem for NFAs accepting languages from~$\st{1/2}$ is
  \NL-complete. Moreover, the problem remains \NL-hard even if we
  restrict the input to NFAs over a unary alphabet.
  If the input instance contains only DFAs, the problem
  becomes \L-complete (under weak reductions\footnote{Some form of
    $\AC^0$ reducibility can be employed.}).
\end{theorem}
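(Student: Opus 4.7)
The starting point is the characterization of $\st{1/2}$ as the class of \emph{shuffle ideals}: languages $L\subseteq\Sigma^*$ closed under the supersequence relation, equivalently finite unions of sets of the form $\Sigma^*a_1\Sigma^*\cdots a_k\Sigma^*$. The pivotal combinatorial observation is that intersection non-emptiness collapses to pointwise non-emptiness: for shuffle ideals $L_1,\ldots,L_m$, one has $\bigcap_{i=1}^m L_i\neq\emptyset$ iff each $L_i\neq\emptyset$. The forward direction is immediate; for the converse, any choice of witnesses $w_i\in L_i$ yields the concatenation $w_1w_2\cdots w_m$, which contains each $w_i$ as a subsequence and thus lies in every $L_i$. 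This realizes the reduction to \textsc{Concatenation Non-emptiness} announced in the introduction and reduces \INE for $\st{1/2}$ to running $m$ independent non-emptiness tests on the input automata.

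For the upper bounds, the NFA case follows at once: non-emptiness of an NFA is reachability from $q_0$ to $F$ in the transition graph, which is in \NL, and running this test on $m$ automata keeps the whole computation in \NL. The DFA case is more delicate since generic DFA non-emptiness is already \NL-complete; we must genuinely exploit the shuffle-ideal hypothesis. The key structural facts on a DFA $A$ accepting a shuffle ideal are (i) its final set $F$ is forward-closed, hence absorbing, and (ii) right languages grow monotonically along transitions, $L_{\delta(q,a)}\supseteq L_q$, because $w\in L_q$ entails $aw\in L_q$ in a shuffle ideal. Combining (i)--(ii) with the observation that non-emptiness propagates along reachable states should allow a deterministic log-space procedure to decide reachability of a final state by a controlled traversal that avoids invoking the full power of \NL.

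For the lower bounds, both cases proceed by $\AC^0$-reductions from restricted reachability problems. For \NL-hardness on unary NFAs, reduce from st-reachability in a DAG $G=(V,E)$ with source $s$ and sink $t$: form the unary NFA with state set $V$, initial state $s$, unique final state $t$, transitions $\delta(u,a)=\{v : (u,v)\in E\}$, and an added self-loop $\delta(t,a)\ni t$. The self-loop forces $L(A)$ to be either empty or of the form $\{a^k : k\geq k_0\}$---a unary shuffle ideal---with $L(A)\neq\emptyset$ iff $t$ is reachable from $s$. For \L-hardness on DFAs, the analogous reduction starts from \emph{functional reachability} (reachability in a graph of out-degree one, which is \L-complete under $\AC^0$-reductions): given $f\colon V\to V$ and $u,v$, take the unary DFA with transitions $\delta(q,a)=f(q)$ for $q\neq v$ and $\delta(v,a)=v$, initial state $u$, final state $v$. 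The expected main obstacle is the \L upper bound for DFAs, since the generic reachability algorithm only yields \NL; it must be replaced by a tailored log-space traversal that genuinely exploits the absorbing character of $F$ and the monotonicity of right languages forced by the shuffle-ideal promise.
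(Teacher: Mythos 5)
Your reduction of \INE{} to per-automaton non-emptiness via the shuffle-ideal collapse is exactly the paper's key lemma, and your hardness reductions (DAG $st$-reachability for \NL{} on unary NFAs, out-degree-one reachability for \L{} on DFAs) match the paper's use of 2-GAP and 1-GAP; the only cosmetic difference is that you add a self-loop only at $t$ while the paper adds self-loops at every state, and both yield unary shuffle ideals. The \NL{} containment for NFAs is also the same.

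Where your proposal has a genuine gap is the \L{} containment for DFAs, which you yourself flag as the ``expected main obstacle.'' You correctly observe that $F$ is absorbing on reachable states and that right languages grow along transitions, but you stop short of turning these observations into an algorithm; a ``controlled traversal'' of a graph with out-degree $|\Sigma|\ge 2$ is not obviously doable in deterministic logspace without a further idea. The paper closes this gap with a cleaner device: because $L_i$ is a shuffle ideal, its shortest word has length $\ell_i\le |Q_i|$ and is a subsequence of $(a_1a_2\cdots a_k)^{|Q_i|}$, so $L_i\neq\emptyset$ iff $(a_1a_2\cdots a_k)^{|Q_i|}\in L_i$. Testing membership of this single explicit word on a DFA is a deterministic simulation that needs only a state pointer and a $O(\log(k\cdot|Q_i|))$ counter, hence \L; for NFAs one guesses the run, hence \NL. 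This word-plugging step is the missing ingredient; your observations (i)--(ii) are true but not needed once you have it. I recommend replacing the vague ``tailored traversal'' with this concrete membership test on $(a_1\cdots a_k)^{|Q_i|}$, which also unifies the NFA and DFA arguments.
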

%Hardness is shown by standard contructions that reduce from variants
%of graph accessibility problems~\cite{HIM78,Su75b}.
\noindent Hardness is shown by standard reductions from variants of graph accessibility~\cite{HIM78,Su75b}.
\begin{restatable}{lemma}{RestateLemmaSTHhalfHARD}\label{lem:STH-1/2-hardness}
    The \INE\ problem for NFAs over unary alphabet accepting languages from~$\st{1/2}$ is
  \NL-hard. If the input instance contains only DFAs, the problem
  becomes \L-hard under weak reductions.
\end{restatable}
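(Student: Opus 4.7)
The plan is to establish both hardness results by exhibiting reductions from natural graph reachability problems to \INE{} with a \emph{single} input automaton whose language is guaranteed to lie in $\st{1/2}$. The common trick is to append a final ``trap'' state carrying a self-loop, which forces the accepted language to be upward closed with respect to length, hence a finite union of shuffle ideals of the shape $\{a\}^*\cdot a\cdots a\cdot\{a\}^*$.

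For the $\NL$-hardness over a unary alphabet, I would reduce from $s$-$t$ reachability in directed acyclic graphs, which is $\NL$-complete under $\AC^0$ reductions. Given an instance $(G,s,t)$ with $G=(V,E)$ a DAG, construct the unary NFA $A=(V\cup\{f\},\{a\},\delta,s,\{f\})$ where $v\in\delta(u,a)$ whenever $(u,v)\in E$, and additionally $f\in\delta(t,a)$ together with $\delta(f,a)=\{f\}$. Any accepting run on $a^n$ must reach $t$ at some step $k$, take the new edge to $f$, and loop there $n-k-1$ times. Consequently $L(A)=\{a^n : n\geq d+1\}$ when $t$ is reachable from $s$ with shortest distance $d$, and $L(A)=\emptyset$ otherwise. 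In either case $L(A)\in\st{1/2}$, and the single-instance \INE{} returns YES iff $t$ is reachable from $s$.

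For the $\L$-hardness in the DFA case, I would reduce from the ancestor problem in a tree specified by parent pointers (equivalently the order problem $\mathrm{Ord}$), which is $\L$-complete under $\AC^0$ reductions. Given a tree $T$ with parent function $p$ and query nodes $s,t$, build the unary DFA $A$ with states $V(T)\cup\{\bot\}$, initial state $s$, and $F=\{t\}$, equipped with the transitions $\delta(v,a)=p(v)$ for every non-root $v\neq t$, $\delta(t,a)=t$, $\delta(r,a)=\bot$ for the root $r$ (when $r\neq t$), and $\delta(\bot,a)=\bot$. A straightforward analysis shows that $L(A)=\{a^n : n\geq d\}$ when $t$ is an ancestor of $s$ at distance $d$, and $L(A)=\emptyset$ otherwise. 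Both outcomes lie in $\st{1/2}$, and the construction is manifestly $\AC^0$-computable, matching the ``weak reductions'' footnote.

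The main conceptual obstacle is not the hardness itself (non-emptiness of a single unary NFA, respectively DFA, is already $\NL$-hard, respectively $\L$-hard) but enforcing the promise that the constructed automaton accepts a language in $\st{1/2}$. A naive translation of a reachability instance to a unary automaton could produce a language such as $\{a^3,a^5\}$, which is not a shuffle ideal and therefore lies outside $\st{1/2}$. The trap-state gadget precisely circumvents this issue: once any accepting path of length $k$ exists, the self-loop at $f$ (respectively $t$) produces accepting paths of every length $\geq k$, so $L(A)$ is forced to be either empty or a single shuffle ideal $\{a^n : n\geq k\}$, the canonical shape of a level-$1/2$ language over a unary alphabet.
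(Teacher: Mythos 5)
Your proposal is correct and takes essentially the same approach as the paper: reduce from a suitable $\NL$-complete (respectively $\L$-complete) reachability problem and add self-loops so that the resulting unary language is upward closed in length, hence a shuffle ideal in $\st{1/2}$. The only differences are cosmetic: the paper reduces from ordered 2-GAP (respectively 1-GAP) rather than DAG reachability (respectively tree ancestry), and in the NL case it equips \emph{every} state with an $a$-self-loop rather than introducing a separate trap state $f$; both variants make the accepted language a set of the form $\{a^n : n\ge k\}$ or $\emptyset$, which is exactly what the promise requires.
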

%\begin{toappendix}
%\RestateLemmaSTHhalfHARD*
\begin{proof}
%	[Proof of Lemma~\ref{lem:STH-1/2-hardness}]
  The \NL-complete graph accessibility problem 2-GAP~\cite{Su75b} is
  defined as follows: given a directed graph $G=(V,E)$ with outdegree
  (at most) two and two vertices~$s$ and~$t$. Is there a path
  linking~$s$ and~$t$ in~$G$?  The problem remains \NL-complete if the
  outdegree of every vertex of~$G$ is exactly two and if the graph is
  ordered, that is, if $(i,j)\in E$, then $i<j$ must be satisfied. The
  complexity of the reachability problem drops to \L-completeness, if
  one considers the restriction that the outdegree is at most one. In this
  case the problem is referred to as 1-GAP~\cite{HIM78}.

  First we consider the \INE\ problem for
  NFAs. The \NL-hardness is seen as follows: let $G=(V,E)$ and
  $s,t\in V$ be an ordered 2-GAP instance. Without loss of generality,
  we assume that $V=\{1,2,\ldots,n\}$, the source vertex $s=1$, and
  the target vertex $t=n$. From~$G$ we construct a unary NFA
  $A=(V,\{a\},\delta,1,n)$, where $\delta(i,a)=\{\,j\mid (i,j)\in
  E\,\}\cup\{i\}$.  The 2-GAP instance has a solution if and only if
  the language accepted by~$A$ is non-empty. Moreover, by construction
  the automaton accepts a language of level~$1/2$, because (i) the
  NFA without $a$-self-loops is acyclic, since~$G$ is ordered, and thus
  does not contain any large cycles and (ii) all states do have
  self-loops.  This proves the hardness and moreover the
  \NL-completeness.

  Finally, we concentrate on the \L-hardness of the
  \INE\ problem for DFAs. Here we use
  the 1-GAP variant to prove our result. Let $G=(V,E)$ and $s,t\in V$
  be a 1-GAP instance, where we can assume that $V=\{1,2,\ldots,
  n\}$,\ $s=1$, and $t=n$. From~$G$ we construct a unary DFA
  $A=(V,\{a\},\delta,1,n)$ with $\delta(i,a)=j$, for $(i,j)\in E$ and
  $1\leq i<n$, and $\delta(n,a)=n$. By construction the DFA~$A$
  accepts either the empty language or a unary language where all
  words are at least of a certain length. In both cases~$L(A)$ is a
  language from level~$1/2$ of the Straubing-Th{\'e}rien
  hierarchy. Moreover, it is easy to see that there is a path in~$G$
  linking~$s$ and~$t$ if and only if $L(A)\neq\emptyset$. Hence, this
  proves \L-hardness and moreover \L-completeness for an intersection
  non-emptiness instance of DFAs.
\end{proof}
%\end{toappendix}
%
It remains to show containment in logspace. To this end, we utilize an
alternative characterization of the languages of level~$1/2$ of
the Straubing-Th{\'e}rien hierarchy
 as exactly those languages that are shuffle
ideals. A
%language~$L$ is a shuffle ideal if for every word~$w\in L$ and
%The languages of level~$\frac12$ of the Straubing-Th{\'e}rien
%hierarchy are excatly those languages that are shuffle ideals.  A
language~$L$ is a \emph{shuffle ideal}
if, for every word~$w\in L$ and
$v\in\Sigma^*$, the set $w\shuffle v$ is contained in~$L$, where
%\begin{multline*}
$w\shuffle v:=\{\,w_0v_0w_1v_1\ldots w_kv_k\mid
\mbox{$w=w_0w_1\ldots
  w_k$ and $v=v_0v_1\ldots v_k$}
\mbox{ with $w_i,v_i\in\Sigma^*$, for $0\leq i\leq k$}\,\}$.
%\end{multline*}
The operation~$\shuffle$
naturally generalizes to sets. 
%It can be decided in deterministic polynomial time whether
%the language accepted by a DFA is of level~$\frac12$.
%
%For the \INE{} problem for languages from
%level~$\frac12$, we find the following situation.
For the level $\st{1/2}$, we find the following situation.
\begin{restatable}{lemma}{RestateLemmaINTiffEMP}\label{lem:half-level-intersection-nonemptiness-iff-emptiness}
  Let $m\geq 1$ and languages $L_i\subseteq\Sigma^*$, for $1\leq i\leq
  m$, be shuffle ideals, i.e., they belong to~$\st{1/2}$. Then,
%  the Straubing-Th{\'e}rien hierarchy. Then,
%  %$$\bigcap_{i=1}^m L_i\neq\emptyset\quad\mbox{if and only if}\quad L_1L_2\cdots L_m\neq\emptyset.$$ 
  $\bigcap_{i=1}^m L_i\neq\emptyset$ iff the shuffle ideal $L_1L_2\cdots L_m\neq\emptyset$ iff 
%  Note that $L_1L_2\cdots L_m$ is a shuffle ideal, too, and that
%  $L_1L_2\cdots L_m\neq \emptyset$ if and only if 
$L_i\neq\emptyset$ for every~$i$ with $1\leq i\leq m$. Finally, $L_i\neq\emptyset$, for $1\leq i\leq m$, iff $(a_1a_2\ldots a_k)^{\ell_i}\in L_i$, where $\Sigma=\{a_1,a_2,\ldots a_k\}$ and the shortest word in~$L_i$ is of
  length~$\ell_i$.
%  Let $m\geq 1$ and languages $L_i\subseteq\Sigma^*$, for $1\leq i\leq
%  m$, be shuffle ideal, i.e., they belong to the first half level of
%  the Straubing-Th{\'e}rien hierarchy. Then,
%  %$$\bigcap_{i=1}^m L_i\neq\emptyset\quad\mbox{if and only if}\quad L_1L_2\cdots L_m\neq\emptyset.$$ 
%  $\bigcap_{i=1}^m L_i\neq\emptyset$ if and only if $L_1L_2\cdots L_m\neq\emptyset$. 
%  Note that $L_1L_2\cdots L_m$ is a shuffle ideal, too, and that
%  $L_1L_2\cdots L_m\neq \emptyset$ if and only if $L_i\neq\emptyset$ 
%  for every~$i$ with $1\leq i\leq m$. Finally, $L_i\neq\emptyset$, for
%  $1\leq i\leq m$, if and only if $(a_1a_2\ldots a_k)^\ell\in L_i$, where
%  $\Sigma=\{a_1,a_2,\ldots a_k\}$ and the shortest word in~$L_i$ is of
%  length~$\ell$.
\end{restatable}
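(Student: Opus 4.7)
The plan is to exploit the single structural fact about shuffle ideals: a shuffle ideal $L\subseteq\Sigma^*$ is \emph{upward closed under the scattered-subword order}, i.e., whenever $u\in L$ and $u$ is a subsequence of $w\in\Sigma^*$, then $w\in L$. This is immediate from the definition, since writing $w$ as an interleaving of $u$ with some $v$ places $w$ in $u\shuffle v\subseteq L$. All four directions needed in the lemma will follow from this single observation.

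I will prove the chain of equivalences by a cyclic argument. The implication $\bigcap_i L_i\neq\emptyset\Rightarrow L_1L_2\cdots L_m\neq\emptyset$ is immediate: a common element $w$ gives $w^m$ in the concatenation (one may also just concatenate arbitrary witnesses). The implication $L_1L_2\cdots L_m\neq\emptyset\Rightarrow L_i\neq\emptyset$ for every $i$ is trivial, because a concatenation is non-empty only when every factor is non-empty. The interesting step is the converse: assuming each $L_i\neq\emptyset$, pick any $w_i\in L_i$ and set $w:=w_1w_2\cdots w_m$. Each $w_i$ appears as a (contiguous, hence scattered) subword of $w$, so by upward closure $w\in L_i$ for all $i$, giving $w\in\bigcap_i L_i$.

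For the final equivalence, one direction is trivial since $(a_1a_2\cdots a_k)^{\ell_i}\in L_i$ forces $L_i\neq\emptyset$. For the other direction, suppose $L_i\neq\emptyset$ and let $u\in L_i$ be a shortest word, of length $\ell_i$. The word $(a_1a_2\cdots a_k)^{\ell_i}$ contains every letter of $\Sigma$ exactly $\ell_i$ times in a fixed cyclic pattern, so any word of length at most $\ell_i$ embeds into it as a subsequence — we read $u$ one letter at a time, each time advancing to the next occurrence of the required letter within the next copy of $a_1a_2\cdots a_k$. In particular $u$ is a subsequence of $(a_1a_2\cdots a_k)^{\ell_i}$, and upward closure yields $(a_1a_2\cdots a_k)^{\ell_i}\in L_i$.

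I do not anticipate a serious obstacle: the entire lemma reduces to the one observation that shuffle ideals are closed upward under the subsequence order. The only point that requires a little care is the combinatorial embedding used in the last step — namely verifying that $\ell_i$ full copies of $a_1a_2\cdots a_k$ suffice to embed every word of length $\ell_i$ — but this is a straightforward greedy argument. A minor caveat worth flagging in the write-up is that if some $L_i=\emptyset$ then $\ell_i$ is undefined; the statement is read as asserting the equivalence only in the regime where the $\ell_i$ exist, which matches the ``only if'' direction we need for the intersection-non-emptiness reduction.
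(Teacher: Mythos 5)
Your proof is correct and rests on the same key observation as the paper's: shuffle ideals are upward closed (the paper phrases this via $w\in L\Rightarrow uwv\in L$ in the concatenation step and via $w\shuffle\Sigma^*\subseteq L$ in the embedding step, which are both special cases of the subsequence closure you state once up front). The cyclic organization of the three equivalences and the greedy embedding of a length-$\ell_i$ word into $(a_1\cdots a_k)^{\ell_i}$ match the paper's argument in substance, so this is essentially the same proof.
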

%
%\begin{toappendix}
%\RestateLemmaINTiffEMP*
\begin{proof}
%	[Proof of Lemma~\ref{lem:half-level-intersection-nonemptiness-iff-emptiness}]
  The implication from left to right holds, because if
  $\bigcap_{i=1}^m L_i\neq\emptyset$, then there is a word~$w$ that
  belongs to all~$L_i$, and hence the concatenation $L_1L_2\cdots L_m$
  is nonempty, too. Since this argument has not used the prerequisite
  that the~$L_i$'s belong to the first half level of the
  Straubing-Th{\'e}rien hierarchy, this implication does hold in
  general.

  For the converse implication, recall that a language~$L$ of the first
  half level is a finite (possibly empty) union of languages of the
  form $\Sigma^*a_1\Sigma^*a_2\cdots a_k\Sigma^*$, where the~$a_i$'s
  are letters. Hence, whenever word~$w$ belongs to~$L$, any word of
  the form $uwv$ with $u,v\in\Sigma^*$ is a member of~$L$, too. 
  Now assume that $L_1L_2\cdots L_m\neq\emptyset$, which can be witnessed
  by words $w_i\in L_i$, for $1\leq i\leq m$. But then the word
  $w_1w_2\ldots w_m$ belongs to every~$L_i$, by setting
  $u=w_1w_2\ldots w_{i-1}$ and $v=w_{i+1}w_{i+2}\ldots w_m$ and using
  the argument above. Therefore, the intersection of all~$L_i$, i.e.,
  the set $\bigcap_{i=1}^m L_i$, is nonempty, because of the word
  $w_1w_2\ldots w_m$. 

  The statement that $L_1L_2\cdots L_m$ is an ideal and that
  $L_1L_2\cdots L_m\neq\emptyset$ if and only if~$L_i\neq\emptyset$,
  for every~$i$ with $1\leq i\leq m$, is obvious.

  For the last statement, assume $\Sigma=\{a_1,a_2\ldots,a_k\}$. The
  implication from right to left is immediate, because if $(a_1a_2\ldots
  a_k)^{\ell_i}\in L_i$, for~$\ell_i$ as specified above, then~$L_i$ is
  non-empty. Conversely, if~$L_i$ is non-empty, then there is a
  shortest word~$w$ of length~$\ell_i$ that is contained in~$L_i$. But then
  $(a_1a_2\ldots a_k)^{\ell_i}$ belongs to $w\shuffle\Sigma^*$, which by
  assumption is a subset of the language~$L_i$, since~$L_i$ is an
  ideal. Therefore, $L_i\neq\emptyset$ implies $(a_1a_2\ldots
  a_k)^{\ell_i}\in L_i$, which proves the stated claim.
\end{proof}

%\end{toappendix}
\noindent
Now, we are ready to prove containment in logspace.
\begin{lemma}
  \label{lem:ST1/2-L-containment}
  The \INE\ problem for NFAs accepting languages from~$\st{1/2}$ belongs to
  \NL. If the input instance contains only DFAs, the problem is
  solvable in \L.
\end{lemma}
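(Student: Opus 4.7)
The plan is to invoke \autoref{lem:half-level-intersection-nonemptiness-iff-emptiness} to reduce \INE{} to $m$ individual non-emptiness checks, one per input automaton, and then to argue that these checks can be carried out within the required space bound and combined by a simple outer loop.

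For the NFA case, fix an input of $m$ NFAs $A_i=(Q_i,\Sigma,\delta_i,q_{(0,i)},F_i)$. By the lemma, $\bigcap_i L(A_i)\neq\emptyset$ iff $L(A_i)\neq\emptyset$ for every~$i$. Non-emptiness of a single NFA is the classical $\NL$-complete reachability-style problem: guess a state sequence $q_{(0,i)}=p_0,p_1,\ldots,p_r$ with $p_{j+1}\in\delta_i(p_j,a_{j+1})$ for some letter $a_{j+1}$, $r\leq |Q_i|$, and $p_r\in F_i$, using only $O(\log|Q_i|)$ space for the current and next state plus an $O(\log|Q_i|)$ step counter. The overall algorithm iterates over $i=1,\ldots,m$ with an additional $O(\log m)$ counter, running the above check and rejecting as soon as one of the checks fails. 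Since $\NL$ is closed under such conjunctive iteration (one can equivalently reformulate the whole procedure as a single $\NL$ computation, or appeal to $\coNL=\NL$), the total space is logarithmic in the input size.

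For the DFA case, the naive NFA algorithm does not suffice because DFA non-emptiness over a non-unary alphabet is in general $\NL$-complete. Here the structural assumption $L(A_i)\in\st{1/2}$ is crucial: by the last part of \autoref{lem:half-level-intersection-nonemptiness-iff-emptiness}, writing $\Sigma=\{a_1,\ldots,a_k\}$ and $n_i=|Q_i|$, we have $L(A_i)\neq\emptyset$ iff $(a_1a_2\cdots a_k)^{n_i}\in L(A_i)$, because the shortest word in a non-empty $L(A_i)$ has length at most $n_i$, and the shuffle-ideal property lifts this witness to the fixed string $(a_1\cdots a_k)^{n_i}$. Since $A_i$ is deterministic, we can test membership of this specific polynomial-length word by straight simulation, keeping only the current state (space $O(\log n_i)$) together with two position counters (space $O(\log(k\cdot n_i))$), and reporting whether the final state reached lies in $F_i$. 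Wrapping the whole thing in an outer loop over $i$ adds an $O(\log m)$ counter, yielding a deterministic logspace algorithm.

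The only genuine subtlety is thus the logspace containment for DFAs, which would otherwise fail for unrestricted DFAs; the shuffle-ideal characterization of $\st{1/2}$ is exactly what converts an a priori reachability problem into the deterministic simulation of a fixed polynomial-length candidate word. Everything else is a routine composition of standard space-bounded subroutines.
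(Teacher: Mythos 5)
Your proof is correct and follows essentially the same route as the paper: both rely on \autoref{lem:half-level-intersection-nonemptiness-iff-emptiness} to decompose \INE{} into $m$ individual non-emptiness tests, and both exploit the shuffle-ideal structure to turn the DFA test into deterministic simulation on a fixed polynomial-length witness. Two minor remarks. First, for the NFA case you run ordinary $\NL$ reachability per automaton, which is simpler than the paper's approach of guessing a run on the explicit word $(a_1\cdots a_k)^{\ell_i}$; both are fine, and your appeal to $\coNL=\NL$ is unnecessary since a conjunction of $\NL$ tests is handled by sequential guessing without any complementation. Second, your substitution of the state bound $n_i$ for the unknown shortest-word length $\ell_i$ in the DFA case is a genuine tightening of the exposition: the paper phrases the test in terms of $\ell_i$ and only afterwards observes that $\ell_i$ is bounded by the state count, whereas you correctly note that, by the shuffle-ideal property, $(a_1\cdots a_k)^{n_i}$ directly serves as the witness without ever computing $\ell_i$. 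This makes the logspace determinism cleaner.
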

\begin{proof}
  In order to solve the \INE\ problem  
  for given finite automata $A_1,A_2,\ldots, A_m$ with a common input
  alphabet~$\Sigma$, regardless of whether they are deterministic or
  nondeterministic, it suffices to check non-emptiness for all
  languages~$L(A_i)$, for $1\leq i\leq m$, in sequence, because of
  Lemma~\ref{lem:half-level-intersection-nonemptiness-iff-emptiness}. To
  this end, membership of the words $(a_1a_2\ldots a_k)^{\ell_i}$
  in~$L_i$ has to be tested, where~$\ell_i$ is the length of the
  shortest word in~$L_i$. Obviously, all~$\ell_i$ are linearly bounded
  in the number of states of the appropriate finite automaton that
  accepts~$L_i$. Hence, for NFAs as input instance, the test
  can be done on a nondeterministic logspace-bounded Turing machine,
  guessing the computations in the individual NFAs on
  the input word $(a_1a_2\ldots a_k)^{\ell_i}$. For DFAs as input instance,
  nondeterminism is not needed, so that the  procedure can be
  implemented on a deterministic Turing machine. %Thus, the \INE\ problem   for NFAs belongs to \NL, while it is contained in \L\ for DFAs.
\end{proof}

\section{\NP-Completeness}
In contrast to the Straubing-Thérien hierarchy, the \INE problem for languages from the dot-depth hierarchy is already \NP-hard in the lowest level $\dd{0}$. More precisely, \INE for finite languages is \NP-hard~\cite[Theorem~1]{Rampersad2010detecting} and~$\dd{0}$ already contains all finite languages. Hence, the \INE problem for languages from the Straubing-Thérien hierarchy of level $\st{1}$ and above is \NP-hard, too. 
%More precisely, we prove that the \INE problem is \NP-hard, even if only one automaton $A_i$ from the input DFAs accepts a language from $\st{1}$ and all other DFAs $A_j$ with $i \neq j$ accept languages from $\st{1/2}$. 
For the levels  $\dd{0}$,\ $\dd{1/2}$,\ $\st{1}$, or~$\st{3/2}$, we give matching complexity upper bounds if the input are DFAs, yielding the first main result of this section proven in \autoref{sec:NP-membership}.
\begin{theorem}
	\label{thm:NPCompleteness}
	The \INE problem for DFAs accepting languages from either $\dd{0}$, $\dd{1/2}$, $\st{1}$, or~$\st{3/2}$ is \NP-complete. The same holds for poNFAs instead of~DFAs. The results hold even for a binary alphabet.
\end{theorem}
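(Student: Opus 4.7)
\textbf{Hardness.} The \NP-hardness is immediate: by~\cite[Theorem~1]{Rampersad2010detecting}, \INE{} is already \NP-hard for DFAs accepting finite languages over a binary alphabet, and every finite language lies in $\dd{0}$. Since $\dd{0} \subseteq \dd{1/2}$ and $\dd{0} \subseteq \st{1} \subseteq \st{3/2}$ (and every finite language also belongs to $\st{1}$ and hence to $\st{3/2}$), the same lower bound transfers verbatim to all four classes and to poNFAs, which can simulate DFAs without loss.

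\textbf{Containment in \NP{} -- the poNFA case.} Among the four classes under consideration, $\st{3/2}$ is the largest: we have the inclusions $\dd{0}\subseteq\dd{1/2}\subseteq\st{3/2}$ and $\st{1}\subseteq\st{3/2}$ recorded in the preliminaries. Hence it suffices to show \INE{} $\in \NP$ for inputs whose languages lie in $\st{3/2}$. When the input automata are poNFAs, the characterization of $\st{3/2}$ by poNFAs from~\cite{STV01} together with the result of~\cite{MaTh15} already yields \NP-membership: if $\bigcap_{i=1}^m L(A_i) \neq \emptyset$, then there is a witness word whose length is polynomially bounded in $\sum_{i=1}^m |A_i|$, which serves as an \NP-certificate verifiable in polynomial time by running each $A_i$ on the guessed word.

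\textbf{Containment in \NP{} -- the DFA case.} Suppose now that the input consists of DFAs $A_1,\dots,A_m$ each accepting a language in $\st{3/2}$. The plan is to lift the poNFA result to DFAs via the state-complexity comparison asserted in Lemma~\ref{boundSizepoNFA}: for any language $L\in\st{3/2}$, the size of a minimum poNFA for $L$ is bounded by the number of Myhill--Nerode classes of $L$, i.e., by the size of the minimum DFA for $L$. Consequently, for each $i$ there exists a poNFA $B_i$ with $|B_i| \le |A_i|$ and $L(B_i)=L(A_i)$. Applying the polynomial witness-length bound from~\cite{MaTh15} to the (non-constructive) automata $B_1,\dots,B_m$ yields that whenever the intersection is non-empty, there is a word $w$ of length polynomial in $\sum_{i=1}^m |A_i|$ that is accepted by all $L(A_i)$. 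Such a $w$ is guessed as the \NP-certificate and verified by simulating each DFA $A_i$ on $w$ in polynomial time, without needing to construct the $B_i$ explicitly.

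\textbf{Main obstacle and conclusion.} The crucial ingredient is Lemma~\ref{boundSizepoNFA}, which bridges the DFA input representation with the witness-length bound proven for poNFAs; once that lemma is in place, the argument above composes cleanly. Combining the hardness and membership statements, and recalling that both the hardness reduction of~\cite{Rampersad2010detecting} and the membership argument are alphabet-insensitive (so that a binary alphabet is sufficient for the lower bound), we obtain \NP-completeness over $\{0,1\}$ for all four classes and both representation models.
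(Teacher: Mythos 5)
Your proposal is correct and matches the paper's own proof in all essential respects: \NP-hardness via \cite[Theorem~1]{Rampersad2010detecting} and the observation that finite languages lie in $\dd{0}$, which embeds into all four classes; containment by reducing everything to the largest class $\st{3/2}$ and then, for DFAs, invoking Lemma~\ref{boundSizepoNFA} to pass non-constructively to an equivalent poNFA of no larger size so that the polynomial witness-length bound of Masopust--Kr\"otzsch (the paper's \cite[Theorem~3.3]{DBLP:journals/corr/abs-1907-13115}, which you reference via the equivalent \cite{MaTh15}) applies. The only nominal deviation is citation bookkeeping; the decomposition and the role of Lemma~\ref{boundSizepoNFA} as the bridging step are exactly as in the paper.
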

For the level $\st{1}$ of the Straubing-Thérien hierarchy, we obtain with the next main theorem a stronger result.
% than the result implied by \cite[Theorem~1]{Rampersad2010detecting}. 
%Namely we prove that the \INE problem for DFAs with a binary alphabet is \NP-hard even if only one DFA accepts a language from $\st{1}$ and all other DFAs accept languages from $\st{1/2}$. 
Recall that if all input DFAs accept languages from $\st{1/2}$, the \INE problem is \L-complete due to Lemmata~\ref{lem:STH-1/2-hardness} and~\ref{lem:ST1/2-L-containment}.
\begin{restatable}{theorem}{RestateThmHardnesSTone}
	\label{thm:ST-1-NP-hard}
	The \INE problem for DFAs is \NP-complete even if only one DFA accepts a language from $\st{1}$ and all other DFAs accept languages
	from~$\st{1/2}$ and the alphabet is binary.
\end{restatable}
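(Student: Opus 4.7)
The plan is to establish \NP-hardness by reduction from the \textsc{Shortest Common Supersequence} problem, which is classically \NP-complete already over a binary alphabet (Maier, 1978). An instance consists of binary strings $s_1,\ldots,s_n \in \{0,1\}^*$ together with a bound $k \in \mathbb{N}_{\geq 0}$, and asks whether some $w \in \{0,1\}^*$ with $|w| \leq k$ has each $s_i$ as a subsequence. The matching upper bound is free: since $\st{1/2} \subseteq \st{1}$, all DFAs in the constructed instance accept languages from $\st{1}$, so \NP-containment is immediate from Theorem~\ref{thm:NPCompleteness}.

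From an instance $(s_1,\ldots,s_n,k)$, I would construct $n+1$ DFAs over $\Sigma = \{0,1\}$. For each $i\in\{1,\ldots,n\}$, let $A_i$ be the chain-shaped $(|s_i|+1)$-state DFA accepting the shuffle ideal $L_i := \{w \in \Sigma^* \mid s_i \text{ is a subsequence of } w\}$; by construction $L_i \in \st{1/2}$. In addition, let $A_0$ be the $(k+2)$-state length-counter DFA accepting $L_0 := \{w \in \Sigma^* \mid |w| \leq k\}$. Since $L_0$ is finite it lies in $\dd{0} \subseteq \st{1}$, so the instance contains exactly one $\st{1}$ automaton and $n$ automata from $\st{1/2}$, exactly as the statement requires. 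Correctness is transparent: $\bigcap_{i=0}^{n} L(A_i) \neq \emptyset$ iff some word of length at most $k$ is a common supersequence of $s_1,\ldots,s_n$, i.e., iff the \textsc{Shortest Common Supersequence} instance is positive. The construction is clearly logspace computable, and the alphabet stays binary because the source problem is already \NP-hard over $\{0,1\}$.

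The one point that needs a small argument is that $L_0 \notin \st{1/2}$, so that the extra $\st{1}$ automaton is genuinely needed and cannot be silently demoted to another shuffle ideal. This follows from the observation that every nonempty shuffle ideal is infinite: for any word $w$ in the ideal, the set $w \shuffle 0^m \subseteq L$ contains words of unbounded length, whereas $L_0$ is finite. I expect this subtlety, together with the citation of the binary-alphabet variant of \textsc{Shortest Common Supersequence}, to be the two load-bearing ingredients; the remainder of the proof is essentially syntactic bookkeeping on the sizes of the automata and the straightforward direction of the iff.
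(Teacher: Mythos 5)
Your reduction is correct and takes a genuinely different (and in fact cleaner) route than the paper. The paper reduces from \textsc{Vertex Cover}: it encodes a candidate cover as a binary word of length exactly $n$ (one bit per vertex), builds for each edge a two-chain DFA accepting $\Sigma^{i_1}1\Sigma^{n-i_1-1}\cup\Sigma^{i_2}1\Sigma^{n-i_2-1}\cup\Sigma^{\geq n+1}$ (padding with all sufficiently long words to force the language into $\st{1/2}$), and uses a single $(n\times k)$-grid DFA accepting words of length exactly $n$ with at most $k$ ones as the lone $\st{1}$ automaton. Your reduction from \textsc{Shortest Common Supersequence} is a much more natural fit: the supersequence language of $s_i$ \emph{is} the canonical $\st{1/2}$ language $\Sigma^* s_i[1]\Sigma^* s_i[2]\cdots \Sigma^*$, so the $\st{1/2}$ automata need no padding trick at all, and the single length-bound automaton plays exactly the role of the paper's grid automaton. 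Both proofs hinge on the same structural idea — shuffle ideals for constraints, one finite language for the length cap — but yours makes the shuffle-ideal structure of $\st{1/2}$ do all the work rather than engineer it, which is a real gain in clarity. Your side observation that nonempty shuffle ideals are infinite, hence $L_0\notin\st{1/2}$, is a nice sanity check (the paper does not bother to argue it).

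Two small points to tighten. First, the attribution: Maier (1978) established \NP-hardness of \textsc{SCS} only for alphabets of unbounded size; the binary-alphabet case is due to R\"aih\"a and Ukkonen (\emph{The shortest common supersequence problem over binary alphabet is NP-complete}, Theoret.\ Comput.\ Sci.\ 16, 1981). Since the entire point of the theorem is the binary-alphabet restriction, you should cite the latter. Second, you should note that the bound $k$ may be assumed to satisfy $k \le \sum_i |s_i|$ (else the instance is trivially a yes-instance by concatenation), so that the length-counter DFA $A_0$ has polynomially many states even if $k$ is given in binary; without this remark the claim that the reduction is logspace computable is incomplete.
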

The proof of this theorem will be given in \autoref{sec:NP-hardness}.

For the level $\dd{0}$, we obtain a complete picture of the complexity of the \INE problem, independent of structural properties of the input finite automata, i.e., we show that here the problem is \NP-complete for general NFAs. 

%Notice that Masopust and Kr\"otzsch have shown in \cite{DBLP:journals/corr/abs-1907-13115}  that \INE for poDFAs and for poNFAs is \NP-complete.

For the level~$\st{3/2}$, if the input NFA are from the class of poNFA, which characterize level~$\st{3/2}$, then the \INE problem is known to be \NP-complete~\cite{DBLP:journals/corr/abs-1907-13115}. Recall that $\st{3/2}$ contains the levels $\dd{1/2}$, and $\st{1}$ and hence also languages from these classes can be represented by poNFAs.
%
%
%For the levels $\dd{1/2}$,\ $\st{1}$,\ and $\st{3/2}$ we show that the \INE problem is \NP-complete if the input automata are in the ``natural'' class of automata corresponding to the language class, that are poDFAs for $\st{1}$ and poNFAs for $\dd{1/2}$ and $\st{3/2}$.
%
But if the input automata are given as NFAs without any structural property, then the precise complexity of \INE for $\dd{1/2}$,\ $\st{1}$,\ and $\st{3/2}$ is an open problem and narrowed by \NP-hardness and membership in \PSPACE. We present a ``No-Go-Theorem'' by proving that for an NFA accepting a co-finite language, the smallest equivalent poNFA is exponentially larger in Subsection~\ref{sec:LargePONFA}.

\begin{restatable}{theorem}{RestateThmLargePoNFA}\label{thm:largePONFA}
For every $n \in \mathbb{N}_{\geq 1}$, there exists a language $L_n \in \dd{0}$ on a binary alphabet
	such that $L_n$ is recognized by an NFA of size $O(n^2)$,
	but the minimal poNFA recognizing $L_n$ has more than $2^{n-1}$ states.
\end{restatable}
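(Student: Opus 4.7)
The plan is to exhibit an explicit co-finite language $L_n$ (so that $L_n \in \dd{0}$ is automatic), build a quadratic-size NFA by hand, and then block small poNFAs by proving that the syntactic monoid $M(L_n)$ has at least $2^n$ pairwise distinct $\mathcal{J}$-classes, which via a structural property of poNFAs yields the required state lower bound.

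I would define $F_n := \{\,0\,\chi_S\,1\,\chi_S\,0 : S \subseteq [n]\,\} \subseteq \{0,1\}^{2n+3}$, where $\chi_S \in \{0,1\}^n$ is the characteristic word of $S$, and set $L_n := \{0,1\}^* \setminus F_n$. An NFA for $L_n$ nondeterministically guesses a witness that the input $w$ does not lie in $F_n$: either the length differs from $2n+3$, or one of the anchor positions $1$, $n+2$, $2n+3$ carries the wrong symbol, or there is an index $i \in [n]$ such that $w[1+i] \neq w[n+2+i]$. A length counter and the anchor checks take $O(n)$ states each, and the mismatch gadget (guess $i$, memorise the bit $w[1+i]$, wait $n+1$ further steps, compare) contributes $O(n^2)$ states, yielding the claimed $O(n^2)$ bound.

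For the lower bound, for each $S \subseteq [n]$ I would set $u_S := 1\,\chi_S\,0$ and study its forbidding-context set $C(v) := \{(x,y) : xvy \in F_n\}$. The combinatorial heart is to verify that $C(u_S) = \{(0\chi_S, 0)\}$: the inner $1$ of $u_S$ can only be aligned with the central $1$ of the forbidden word, because any off-centre alignment forces the trailing $0$ of $u_S$ to fall on a position whose symbol is pinned down by $\chi_T$ to be a $1$. Substituting into the identity $C(x u_S y) = \{(a,b) : ax = 0\chi_S \text{ and } yb = 0\}$ then shows that $x u_S y \sim_{L_n} u_T$ forces $x = y = \varepsilon$ and $S = T$, so the principal two-sided ideals $M(L_n)\,[u_S]\,M(L_n)$ are pairwise distinct, hence $M(L_n)$ has at least $2^n$ distinct $\mathcal{J}$-classes.

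The final step converts this $\mathcal{J}$-class count into a state count. I would either invoke, or prove in a short auxiliary lemma, the structural result that the number of states of any poNFA accepting a language $L \in \st{3/2}$ dominates the number of $\mathcal{J}$-classes of $M(L)$, up to a small constant factor which absorbs the gap between $2^n$ and the strict bound $2^{n-1}$ of the statement. The intuition is that a run in a poNFA is monotone in the partial order and cannot reuse states, so two words generating distinct principal two-sided ideals must drive the runs apart at some point in the poset of states.

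The main obstacle is the combinatorial core of the syntactic analysis: one has to rule out every off-centre factor-alignment of $u_S$ inside the forbidden word $0\chi_T 1\chi_T 0$, and then propagate this rigidity through arbitrary left/right multiplication in the monoid. The design of $F_n$---wrapping $\chi_S$ inside the frame $0 \cdots 1 \cdots 0$ and repeating $\chi_S$ on both sides of the central $1$, in the spirit of Klein and Zimmermann's delay-game lower bound---is what makes the off-centre collisions unavoidable and, simultaneously, keeps the NFA upper bound quadratic.
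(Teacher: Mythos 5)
Your combinatorial analysis of $F_n$ is essentially sound (up to the small slip that $C(u_S)=\{(0\chi_S,\varepsilon)\}$ rather than $\{(0\chi_S,0)\}$, since $u_S = 1\chi_S 0$ already ends with the trailing $0$), and the words $u_S$ do land in $2^n$ pairwise distinct $\mathcal{J}$-classes of $M(L_n)$. But two things go wrong afterwards, and together they break the argument.

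First, and most directly, your chosen language $L_n$ does not satisfy the conclusion of the theorem: the NFA you build for the upper bound is itself already partially ordered. The length counter is a chain, the three anchor checkers are chains, and each mismatch gadget (guess $i$, remember $w[1+i]$, wait $n+1$ steps, compare, then pad to the end) is an acyclic DAG on $O(n)$ states. Taking the disjoint union of these branches (poNFAs may have several initial states) yields a poNFA of size $O(n^2)$ for $L_n$. So the minimal poNFA for your $L_n$ has polynomially many states, not more than $2^{n-1}$, and there is no valid choice of the ``auxiliary lemma'' that could rescue this.

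Second, the auxiliary lemma you intend to invoke or prove---that the number of states of a poNFA recognizing $L$ dominates the number of $\mathcal{J}$-classes of $M(L)$, up to a constant---is false. Your own $L_n$ refutes it (polynomial-size poNFA, $\geq 2^n$ $\mathcal{J}$-classes), and so does a classical example: the piecewise-testable language $\Sigma^* a_1 \Sigma^* a_2 \cdots \Sigma^* a_m \Sigma^*$ over $\Sigma=\{a_1,\ldots,a_m\}$ is recognized by an $(m+1)$-state poDFA, yet its syntactic monoid is $\mathcal{J}$-trivial of size exponential in $m$, so it has exponentially many $\mathcal{J}$-classes. What constrains poNFA size is a depth parameter, not a cardinality: an accepting run traverses a chain in the state poset, so a long $\mathcal{J}$-chain forces many states, while a wide $\mathcal{J}$-antichain does not. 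Your $u_S$ are pairwise $\mathcal{J}$-incomparable, so they form a width-$2^n$ antichain and contribute nothing to depth.

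The paper's construction is engineered precisely to create depth rather than width. It takes $M_n$ to be the union of the pattern language $M_n''$ (two occurrences of a letter $i$ with only smaller letters in between) with the set $M_n'$ of all odd-length words, and then encodes over $\{a,b\}$. The odd-length component is what defeats partial order: a poNFA can only process unboundedly long inputs via self-loops, and self-loops cannot track parity. Concretely, the Zimin word $u_n$ (odd length $2^n-1$) lies in $M_n$, but deleting any one of its $2^{n-1}$ occurrences of the letter $1$ yields an even-length word still outside $M_n''$, hence outside $M_n$. An accepting poNFA run on $u_n$ therefore cannot self-loop at any of those $2^{n-1}$ positions, forcing a strictly increasing chain of more than $2^{n-1}$ states. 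Your $F_n$, by contrast, pins the entire forbidden event inside a window of fixed length $2n+3$, so no such ``locally minimal long witness'' exists and the lower bound collapses.
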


While for NFAs the precise complexity for \INE of languages from $\st{1}$ remains open, we can tackle this gap by narrowing the considered language class to \emph{commutative} languages in level $\st{1}$; recall that a language $L\subseteq\Sigma^*$ is \emph{commutative} if, for any
$a,b\in\Sigma$ and words $u,v\in\Sigma^*$, we have that $uabv\in L$
implies $ubav\in L$.
We show that for DFAs, this restricted \INE problem  remains \NP-hard, in case the alphabet is unbounded. Concerning membership in \NP, we show that even for NFAs, the \INE problem for \emph{commutative} languages is contained in \NP\ in general and in particular for commutative languages on each level. This generalizes the case of unary NFAs.
Note that for commutative languages, the Straubing-Thérien hierarchy collapses at level $\st{3/2}$. See Subsection~\ref{sec:CommLang} for the proofs.

\begin{restatable}{theorem}{RestateThmComm}
	\label{thm:comm_case}
	The \INE{} problem %for %concerning \emph{commutative} languages is
	\begin{itemize}
		\item  is \NP-hard for DFAs accepting \emph{commutative} languages in $\st{1}$, but
		%\NP-hard for DFAs accepting \emph{commutative} languages in $\st{1}$,
		%\item contained in \NP\ for NFAs accepting \emph{commutative} languages that might not be star-free,
		\item  is contained in $\NP$ for NFAs accepting \emph{commutative} languages that might not be star-free.
		%       \item contained in \XP\ (with the size of the alphabet as the  parameter) for totally star-free NFAs accepting \emph{star-free commutative} languages, i.e., commutative languages in $\st{3/2}$.
	\end{itemize}

\end{restatable}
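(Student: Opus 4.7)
The theorem has two parts, which I plan to attack separately. For the \NP-hardness of the first part, I would reduce from Vertex Cover with an unbounded alphabet. Given a graph $G=(V,E)$ and a parameter $k$, set $\Sigma := V$ and, for every edge $e=\{u,v\} \in E$, build a constant-size DFA $A_e$ accepting $L_e = \Sigma^* u \Sigma^* \cup \Sigma^* v \Sigma^*$. Each $L_e$ is commutative and is a union of languages of the form $\Sigma^* a \Sigma^*$, hence lies in $\st{1/2} \subseteq \st{1}$. In addition, include a DFA $A_k$ of size $O(k)$ accepting $L_k = \Sigma^{\leq k}$, which is finite, hence lies in $\dd{0} \subseteq \st{1}$ by the inclusion diagram of the hierarchies, and is commutative since length is permutation-invariant. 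A word in the intersection must have length at most $k$ and must contain at least one endpoint of every edge, so the set of distinct letters appearing in it forms a vertex cover of size at most $k$; conversely, writing out the vertices of any such cover in some order gives a witness. Since Vertex Cover is \NP-hard, this yields the claimed hardness with DFAs of polynomial size.

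For the \NP-upper bound, my plan is to reduce \INE{} to the satisfiability of an existential Presburger sentence of polynomial size, a problem well known to lie in \NP. Applying the theorem of Verma, Seidl and Schwentick to each input NFA $A_i$ produces, in polynomial time, an existential Presburger formula $\varphi_i(x_1,\dots,x_{|\Sigma|})$ whose models over $\mathbb{N}$ are exactly the Parikh images of words accepted by $A_i$. The commutativity of $L(A_i)$ is crucial here: it guarantees that $w \in L(A_i)$ iff $\varphi_i(\Psi(w))$ holds, where $\Psi$ denotes the Parikh map. Therefore $\bigcap_{i=1}^{m} L(A_i) \neq \emptyset$ iff the polynomial-size sentence $\exists x_1,\dots,x_{|\Sigma|} \in \mathbb{N} \colon \bigwedge_{i=1}^{m} \varphi_i(x_1,\dots,x_{|\Sigma|})$ is satisfiable, which places the problem in \NP. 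In particular, this subsumes the classical Stockmeyer--Meyer \NP-bound for the unary case, and it applies to arbitrary commutative regular languages, not only those in the star-free hierarchies.

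The step that I expect to require the most care is not the reduction itself but the correct use of commutativity in the second part: without it, $\varphi_i(\Psi(w))$ would only guarantee that \emph{some} rearrangement of $w$ is accepted by $A_i$, not $w$ itself, and the reduction would break. As an alternative to quoting Verma--Seidl--Schwentick as a black box, one can guess in \NP a Parikh vector written as an initial path contribution plus a binary-encoded nonnegative integer combination of simple-cycle Parikh vectors of each $A_i$, and verify consistency via Borosh--Treybig/Pottier-type bounds on the smallest nonnegative solution of a system of linear Diophantine (in)equalities, but the Presburger route seems cleanest.
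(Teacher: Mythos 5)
Your proof is correct, but it differs from the paper's in both halves, and in interesting ways. For \NP-hardness, the paper reduces from \textsc{3-CNF-SAT}: it builds one clause language $L_{c_i}=\bigcup_{x\in c_i}\Sigma^*x\Sigma^*$ per clause (these lie in $\st{1/2}$) and one variable-consistency language $L_{x_j}=\Sigma^*\setminus(\Sigma^*x_j\Sigma^*\overline{x}_j\Sigma^*\cup\Sigma^*\overline{x}_j\Sigma^*x_j\Sigma^*)$ per variable (complements of $\st{1/2}$ languages, hence in $\st{1}\setminus\st{1/2}$). Your \textsc{Vertex Cover} reduction is equally valid and slightly leaner: all edge languages are in $\st{1/2}$, and only the single bounded-length automaton $A_k$ leaves $\st{1/2}$; both reductions require an unbounded alphabet, as the paper itself remarks. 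For the \NP upper bound, the paper's argument is more hands-on: fixing an order on $\Sigma=\{a_1,\dots,a_r\}$, it splits each input NFA $A_i$ into $r$ unary sub-automata $B_{i,j}$, nondeterministically guesses the intermediate state tuples reached after each block $a_j^{\ell_j}$, and then solves $r$ instances of unary \INE{} (which is in \NP by Stockmeyer--Meyer), the key step being that by commutativity one may assume the witness has the sorted shape $a_1^{\ell_1}\cdots a_r^{\ell_r}$. Your Parikh-image/existential-Presburger route via Verma--Seidl--Schwentick reaches the same conclusion more abstractly: it avoids the guess-and-decompose bookkeeping, naturally subsumes the unary case, and makes the exact role of commutativity (that membership depends only on the Parikh image) explicit. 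You correctly flag this as the crux; without it the equivalence $w\in L(A_i)\iff\varphi_i(\Psi(w))$ fails in the backward direction. Both approaches are sound; the paper's is more elementary and self-contained given its earlier citation of Stockmeyer--Meyer, while yours is shorter and more modular if one is willing to import the Presburger machinery.
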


The proof of \NP-hardness for commutative star-free languages in $\st{1}$ requires an arbitrary alphabet. However, we show that \INE is contained in $\XP$ for specific forms of NFAs such as poNFAs or DFAs accepting commutative languages, with the size of the alphabet as the parameter, i.e., for fixed input alphabets, our problem is solvable in polynomial time.

%%% Local Variables: 
%%% mode: latex
%%% TeX-master: "sample.tex"
%%% End: 

%\paragraph{\bf Level $3/2$ of STH.}
\subsection{\NP-Membership}
\label{sec:NP-membership}
Next, we focus on the \NP-membership part of \autoref{thm:NPCompleteness} and begin by proving that for~$\dd{0}$, regardless of whether the input automata are NFAs or DFAs, the \INE problem is contained in \NP\ and therefore \NP-complete in combination with~\cite{Rampersad2010detecting}.

%\begin{theorem}\label{thm:DD-0-NPc}
%  The \textsc{Intersection Non-Emptiness Problem} for DFAs or NFAs all accepting languages from~$\dd{0}$ is \NP-complete.
%\end{theorem}
\begin{lemma}\label{thm:DD-0-NPc}
	The \INE problem for DFAs or NFAs all accepting languages from~$\dd{0}$ is contained in \NP.
\end{lemma}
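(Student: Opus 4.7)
The plan is to exploit the fact that $\dd{0}$ consists precisely of the finite and the co-finite languages. Given input automata $A_1,\dots,A_m$, I would first classify each $A_i$ as accepting a finite or a co-finite language; under the promise $L(A_i)\in\dd{0}$ these two cases are exhaustive. Finiteness of $L(A_i)$ is decidable in deterministic polynomial time by testing whether the transition graph of $A_i$ contains a state that is simultaneously reachable from $q_{(0,i)}$, lies on a cycle, and co-reaches $F_i$; absence of such a state means $L(A_i)$ is finite, while its presence forces $L(A_i)$ to be infinite and hence, by the promise, co-finite. Moreover, a standard pumping argument shows that if $L(A_i)$ is finite, then every word in $L(A_i)$ has length strictly less than $|Q_i|$.

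With the classification in hand, I would split into two cases. If every $L(A_i)$ is co-finite, then
\[
\overline{\bigcap_{i=1}^{m}L(A_i)} \;=\; \bigcup_{i=1}^{m}\overline{L(A_i)}
\]
is a finite union of finite sets, hence finite, so $\bigcap_{i=1}^{m}L(A_i)$ is co-finite and in particular non-empty; the algorithm accepts unconditionally. Otherwise, let $k$ be any index with $L(A_k)$ finite. Any word $w\in\bigcap_{i}L(A_i)$ lies in $L(A_k)$, so $|w|<|Q_k|$, a bound that is linear in the input size. The \NP-algorithm then guesses such a short word $w$ and verifies in polynomial time, by simulating each $A_i$ on $w$, that $w$ is accepted by every $A_i$.

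Combining these pieces yields an \NP\ procedure: the classification step is deterministic polynomial, the all-co-finite branch decides YES without a certificate, and the remaining branch uses a polynomial-size certificate that is polynomial-time verifiable. I do not foresee any real obstacle; the argument is a clean case split driven by the very simple structure of $\dd{0}$, and the length bound $|w|<|Q_k|$ for a finite NFA-language is a textbook pumping-lemma observation.
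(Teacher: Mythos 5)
Your proposal is correct and matches the paper's own argument almost line for line: both first test in deterministic polynomial time whether every input automaton accepts a co-finite language (accepting unconditionally in that case), and otherwise use the linear length bound coming from a finite-language automaton to guess-and-verify a short witness. The extra detail you supply (the reachable/on-a-cycle/co-reachable criterion and the pumping bound $|w|<|Q_i|$) simply makes explicit what the paper leaves as standard.
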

\begin{proof}
	%It remains to show containment in \NP, since the NP-hardness is already shown in Corollary~\ref{cor:DD-0-NP-hard}.
	%
	Let $A_1,A_2,\ldots,A_m$ be %an input instance to the \INE problem, where all devices are 
NFAs accepting languages from $\dd{0}$. If all NFAs accept co-finite languages, which can be verified in deterministic polynomial time, the intersection $\bigcap_{i=1}^m L(A_i)$ is non-empty. Otherwise, there is at least one NFA accepting a finite language, where the longest word is bounded by the number of states of this device. Hence, if $\bigcap_{i=1}^m L(A_i)\neq\emptyset$, there is a word~$w$ of length polynomial in the length of the input that witnesses this fact. Such a~$w$ can be nondeterministically guessed by a Turing machine checking membership of~$w$ in~$L(A_i)$, for all NFAs~$A_i$, in sequence. This shows containment in \NP\ as desired.
\end{proof}

Notice that Masopust and Kr\"otzsch have shown in \cite{DBLP:journals/corr/abs-1907-13115}  that \INE for poDFAs and for poNFAs is \NP-complete. Also the unary case is discussed there, which can be solved in polynomial time. We cannot directly make use of these results, as we consider arbitrary NFAs or DFAs as inputs, only with the promise that they accept languages from a certain level of the studied hierarchies.
In order to prove that for the levels~$\dd{0}$,\ $\dd{1/2}$,\ $\st{1}$, and~$\st{3/2}$, the \INE problem for DFAs is contained in \NP, it is sufficient to prove the claim for $\st{3/2}$ as all other stated levels are contained in~$\st{3/2}$. 
%We prove the latter statement by obtaining a length bound on the length of a shortest word accepted by all DFAs of length polynomial in the size of the biggest DFA\@.\todo{EA: This sentence seems odd to me.}
We prove the latter statement by obtaining a bound, polynomial in the size of the largest DFA, on the length of a shortest word accepted by all DFAs\@.
Therefore, we show that for a minimal poNFA $A$, the size of an equivalent DFA is lower-bounded by the size of~$A$ and use a result of% Masopust and Kr\"otzsch
~\cite{DBLP:journals/corr/abs-1907-13115} for poNFAs.
%Masopust and Kr\"otzsch have shown 
They have shown 
that given poNFAs $A_1,A_2,\ldots,A_m$,
if the intersection of these automata is non-empty, then there exists a word of size at most
$\sum_{i\in\{1,\ldots,m\}}d_i$,
where $d_i$ is the \emph{depth} of $A_i$ \cite[Theorem~3.3]{DBLP:journals/corr/abs-1907-13115}.
%We say that a poNFA $A$ is of \emph{depth} $d$
Here, the depth of $A_i$ is the length of the longest path (without self-loops) in the state
graph of $A_i$.
% starting from an initial state. is of length $d$.
%
This result implies that the \INE problem for poNFAs accepting languages from $\st{3/2}$ is contained in \NP. 
We will further use this
result to show that the \INE problem for DFAs accepting languages from $\st{3/2}$ is \NP-complete.
%\begin{lemma}
%  The \textsc{Intersection Non-emptiness problem} for poNFAs accepting
%  languages from~$\st{3/2}$ belongs to \NP.
%\end{lemma}
%\begin{proof}
%    Let $A_1,A_2,\ldots,A_n$ be an input instance to the intersection non-emptiness problem,
%    where all automata are poNFAs. Suppose that $\cap_{i\in\{1,\ldots,n\}}L(A_i) \neq \emptyset$.
%    Let $w \in \cap_{i\in\{1,\ldots,n\}}L(A_i)$. Let $k_i$ be the number of states in $A_i$.
%    For each $i \in \{1,\ldots,n\}$, we mark letter in $w$ that make $A_i$ change state when
%    reading $w$. $A_i$ is partially ordered, so we mark at most $k_i$ letters. In $w$,
%    there is at most $\sum_{i\in\{1,\ldots,n\}}k_i$ marked letters. Each unmarked letter in $w$
%    correspond to loop in all $A_i$. Let $w'$ be the word obtain from $w$ after
%    removing all unmarked letters. We have that $|w'| \leq \sum_{i\in\{1,\ldots,n\}}k_i$
%    and $w' \in \cap_{i\in\{1,\ldots,n\}}L(A_i)$. Thus if the intersection is not empty,
%    there is a certificate of polynomial size.
%\end{proof}
First, we show that the number of states in a minimal poNFA is at most the number of classes in
the Myhill-Nerode equivalence relation.
%Let $A = (Q, \Sigma, \delta, q_0, F)$ be a NFA\@. For each state $q \in Q$, we define
%${}_{q}A$ as the automaton obtained from $A$ by replacing the initial state of $A$ by $q$.\todo{HF: We need something similar elsewhere. Should we unify notations for changing initial / final states?}
%Therefore, $L({}_{q}A)$ is the language accepted from $q$. More formally,
%$L({}_{q}A) = \{\,w \in \Sigma^{*} \mid F \cap \delta(q,w) \neq \emptyset\,\}$. Note that
%there is a one-to-one correspondence between $L({}_{q}A)$ and classes in the Myhill-Nerode
%equivalence relation.

\begin{restatable}{lemma}{RestateboundSizepoNFA}%
    \label{boundSizepoNFA}
    Let $A = (Q, \Sigma, \delta, q_0, F)$ be a minimal poNFA\@. Then, $L({}_{q_1}A) \neq L({}_{q_2}A)$ for all states
    $q_1, q_2 \in Q$, where ${}_qA$ is defined as $(Q,\Sigma,\delta,q,F)$.
\end{restatable}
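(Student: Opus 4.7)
The plan is a proof by contradiction. Assume that two distinct states $q_1, q_2 \in Q$ of a minimal poNFA $A$ satisfy $L({}_{q_1}A) = L({}_{q_2}A)$; I will construct a strictly smaller poNFA $A'$ with $L(A')=L(A)$, contradicting minimality. Split into two cases according to whether $q_1$ and $q_2$ are comparable under the reachability partial order~$\leq$.

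\textbf{Incomparable case.} Let $A'$ be obtained from $A$ by merging $q_1$ and $q_2$ into a single new state $\hat q$ whose incident transitions are the union of those of $q_1$ and $q_2$, and whose final status is inherited from either (note that $L({}_{q_1}A) = L({}_{q_2}A)$ forces $q_1 \in F \iff q_2 \in F$). A routine projection-and-lifting argument using the equality of the right languages gives $L(A')=L(A)$. For the partial-order property, any non-trivial simple cycle of $A'$ must pass through $\hat q$, and lifts to a walk in $A$ whose two endpoints lie in $\{q_1,q_2\}$; this walk is either a non-trivial cycle of $A$ (contradicting that $A$ is a poNFA) or a non-trivial path between $q_1$ and $q_2$, contradicting incomparability.

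\textbf{Comparable case.} Without loss of generality, $q_1 < q_2$. Let $A'$ be obtained from $A$ by deleting $q_1$ and redirecting each transition $s \xrightarrow{a} q_1$ of $A$ to a transition $s \xrightarrow{a} q_2$ in $A'$, while leaving the outgoing edges of $q_2$ untouched; the initial state is moved to $q_2$ if $q_0 = q_1$. The key structural observation is that, by antisymmetry of~$\leq$, no path of $A$ starting at $q_2$ can reach $q_1$ (otherwise $q_2 \leq q_1 < q_2$). Hence runs of $A$ from $q_2$ never visit $q_1$, so $L({}_{q_2}A')=L({}_{q_2}A)$; combined with $L({}_{q_1}A)=L({}_{q_2}A)$, an induction on word length then yields $L({}_sA')=L({}_sA)$ for every state $s$ of $A'$, and thus $L(A')=L(A)$.

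It remains to check that $A'$ is still partially ordered: any non-trivial simple cycle of $A'$ must use at least one redirected transition (otherwise the cycle already lives in $A$), hence passes through $q_2$, say as $q_2 \to s_1 \to \cdots \to s_k \to q_2$. Lifting the last edge yields either a non-trivial cycle $q_2 \to \cdots \to q_2$ in $A$ (impossible) or a non-trivial path $q_2 \to \cdots \to q_1$ in $A$, again in contradiction with antisymmetry. Thus $|A'|=|Q|-1$ and $A'$ is a poNFA equivalent to $A$, contradicting minimality. The delicate step is the language-preservation argument in the comparable case, which rests entirely on the antisymmetry observation that $q_2$ cannot reach $q_1$ in~$A$.
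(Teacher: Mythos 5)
Your proof is correct and follows essentially the same route as the paper: the same case split on comparability of $q_1$ and $q_2$ under the reachability order, and in each case the same merge-style construction producing a strictly smaller poNFA. In particular, your comparable-case construction (delete $q_1$ and reroute its incoming arcs to $q_2$) is, up to renaming, exactly the paper's (strip $q_1$'s outgoing arcs and then merge $q_1$ with $q_2$), and your length-induction showing $L({}_sA')=L({}_sA)$ is a slightly more explicit rendering of the paper's run-splitting argument.
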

%\begin{toappendix}
%    \section{Proofs for Subsection~\ref{sec:NP-membership} (\NP-Membership)}
%    \RestateboundSizepoNFA*
\begin{proof}
	Let $A = (Q, \Sigma, \delta, q_0, F)$  be a minimal poNFA and $q_1, q_2 \in Q$
    be two states. Suppose that $L({}_{q_1}A) = L({}_{q_2}A)$. We have two cases.
    \begin{enumerate}
		\item If $q_1$ and $q_2$ are pairwise not reachable from each other, then
		      let $A' = (Q', \Sigma, \delta', q_0, F')$ be the NFA obtained from~$A$, where $q_1$ and~$q_2$
		      are merged into a new state~$q_{1,2}$, so that  $Q' = (Q\setminus\{q_1, q_2\}) \cup \{q_{1,2}\}$,\
		      $\delta'(q_{1,2}, a) = \delta(q_1, a)
              \cup \delta(q_2, a)$, for all $q \in Q'$, $q_{1,2} \in \delta'(q, a)$ if and only if $q_1 \in \delta(q,a)$ or $q_2 \in \delta(q, a)$, and $q_{1,2} \in F'$ if and only if $q_1 \in F$ or
		      $q_2 \in F$. Automata~$A'$ is a partially ordered NFA\@. As $q_1$ and~$q_2$ are not reachable
		      one from the other, they are incomparable in the partial order relation defined by~$A$. Therefore, there is no state $q$ such that $q_1 < q$ and $q < q_2$. One can check
		      that $L(A') = L(A)$, which contradicts the minimality of~$A$.
		\item Otherwise, $q_1$ is reachable from~$q_2$, or $q_2$ is reachable from~$q_1$.
		      Without loss of generality, we assume that $q_2$ is reachable from~$q_1$.
		      Let $A' = (Q', \Sigma, \delta', q_0, F')$ be the NFA obtained from~$A$ in two steps as described next.
		      First, we remove all outgoing transitions from $q_1$ and then we merge $q_1$ and~$q_2$
		      into a new state $q_{1,2}$ as done before. After removing all outgoing
		      transitions from~$q_1$, state~$q_2$ is no longer reachable from $q_1$, therefore,
		      as before, $A'$ is a partially ordered NFA\@. Now we will prove that $L(A) = L(A')$.
		      \begin{itemize}
			      \item Let~$w \in L(A)$. Let $\rho$ be an accepting run in $A$.
			            If~$\rho$ does not contain~$q_1$, then the run obtained by replacing every~$q_2$ by~$q_{1,2}$ is an accepting run in~$A'$.
			            If~$\rho$ contains~$q_1$, then we
			            split $w$ into $w_1$ and $w_2$ such that $w = w_1w_2$ and $w_1$ is the shortest
			            prefix of~$w$ such that, after reading~$w_1$, we reach $q_1$ in $\rho$.
			            Because we merged~$q_1$ and~$q_2$ into~$q_{1,2}$, we have that
                        $q_{1,2} \in \delta'(q_0, w_1)$ in~$A'$. Because $L({}_{q_1}A) = L({}_{q_2}A)$, we
                        have that $L({}_{q_1}A) = L({}_{q_2}A) = L({}_{q_{1,2}}A')$ and therefore
			            $\delta'(q_{1,2}, w_2) \cap F' \neq \emptyset$. So, $w$ is accepted by~$A'$.
			      \item Conversely, let $w \in L(A')$. Let $\rho$ be an accepting run in $A'$.
			            If $\rho$ does not contain $q_{1,2}$, then the same run is accepting in~$A$, too.
			            If $\rho$ contains $q_{1,2}$, we split $w$ into $w_1$ and~$w_2$  such
			            that $w = w_1w_2$, where $w_1$ is the shortest
			            prefix of $w$ such that, after reading~$w_1$, we reach $q_{1,2}$ in~$\rho$.
                        Then, by definition of $q_{1,2}$, $\delta(q_0, w_1) \cap \{q_1, q_2\} \neq \emptyset$,
                        and $\delta(q_1, w_2) \cap F \neq \emptyset$ iff $\delta(q_2, w_2) \cap F \neq
                        \emptyset$ iff $\delta'(q_{1,2}, w_2) \neq \emptyset$.
                        Therefore, $w \in L(A)$.
		      \end{itemize}
              This contradicts the minimality of $A$.
              \qedhere
      \end{enumerate}
\end{proof}
%\end{toappendix}
Now, we can use the result from Masopust and Kr\"otzsch to prove that the \INE{} problem
for DFAs accepting languages in $\st{3/2}$ is in \NP\@.
\begin{lemma}%
  \label{lemma:ST-3/2-DFA}
  The \INE{} problem for DFAs accepting
  languages from~$\st{3/2}$ belongs to \NP\@.
\end{lemma}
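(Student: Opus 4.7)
The plan is to show that whenever $\bigcap_i L(A_i)$ is non-empty, there exists a witness word of length polynomial in the total input size; the \NP{} algorithm then guesses such a word and verifies membership in each $L(A_i)$. The argument rests on three ingredients: the characterization of $\st{3/2}$ by partially ordered NFAs from~\cite{STV01}; Lemma~\ref{boundSizepoNFA}, which (as emphasized in the preceding discussion) yields that the minimum poNFA recognizing a language in $\st{3/2}$ has at most as many states as the Myhill--Nerode index of that language; and the depth-based bound of Masopust and Kr\"otzsch \cite[Theorem~3.3]{DBLP:journals/corr/abs-1907-13115}.

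Concretely, I would first associate to each DFA $A_i$ a minimum poNFA $B_i$ recognizing $L_i := L(A_i)$; such a $B_i$ exists by~\cite{STV01}. The Myhill--Nerode index of $L_i$ is at most $|Q_i|$, since $A_i$ is itself a DFA for $L_i$, so by Lemma~\ref{boundSizepoNFA} the state count of $B_i$, and in particular its depth $d_i$, is at most $|Q_i|$. Next, I would invoke \cite[Theorem~3.3]{DBLP:journals/corr/abs-1907-13115}: if $\bigcap_i L(B_i) \neq \emptyset$ then there is a witness word of length at most $\sum_i d_i \leq \sum_i |Q_i|$, which is polynomial in the input size. A witness $w$ of this length can be guessed nondeterministically, and each test $w \in L(A_i)$ can be performed deterministically in polynomial time by simulating the DFA $A_i$ on $w$.

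The main obstacle is precisely the step where the size of the minimum poNFA $B_i$ is controlled by the size of the given DFA $A_i$: a minimum poNFA for a regular language can in general be considerably smaller than the minimum DFA, so upper bounds going the other way are not automatic. This direction is exactly what Lemma~\ref{boundSizepoNFA} supplies, after which the rest of the argument is a clean combination of the existence of a small poNFA representation with the Masopust--Kr\"otzsch witness bound. Note that the poNFAs $B_i$ are only invoked abstractly: they are never constructed as part of the \NP{} algorithm, which operates purely on the given DFAs $A_i$.
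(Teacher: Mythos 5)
Your proposal is correct and follows essentially the same route as the paper's own proof: use Lemma~\ref{boundSizepoNFA} to infer that the minimal poNFA for each $L(A_i)$ has at most as many states as the given DFA (via the Myhill--Nerode index), then apply the depth-based witness bound of Masopust and Kr\"otzsch \cite[Theorem~3.3]{DBLP:journals/corr/abs-1907-13115} to obtain a polynomial-length certificate that the \NP{} machine can guess and verify directly on the input DFAs. The additional remarks you make (that the depth is bounded by the state count and that the poNFAs are used only abstractly, not constructed by the algorithm) are accurate and merely spell out what the paper leaves implicit.
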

\begin{proof}
    By \autoref{boundSizepoNFA}, we have that the number of states in a minimal
    poNFA is at most the number of classes of the Myhill-Nerode equivalence relation.
    Hence, given a DFA accepting a language $L\in\st{3/2}$, there exists a smaller
    poNFA that recognizes~$L$. By~%Hence, by using the result from Masopust and Kr\"otzsch
\cite[Theorem~3.3]{DBLP:journals/corr/abs-1907-13115}, if the intersection is not empty, then
    there is a certificate of polynomial size. %Thus, the \INE     problem for DFAs belongs to \NP\@.
\end{proof}

%\todo[inline]{MO: I think the best is to state Corollary \ref{corollary:NPCompleteness} as a theorem in the 
%beginning of the section. Then we could have 4 subsections, two for the hardness of $\dd{0}$ and $\st{1}$ and 
%two for the easiness of $\dd{1/2}$ and $\st{3/2}$.}

%Combining Lemmata~\ref{lem:NP-hard-FiniteLang}\todo{EA: this lemma is no longer in the paper.} and~\ref{lemma:ST-3/2-DFA}, and \cite[Theorem~3.3]{DBLP:journals/corr/abs-1907-13115} with the inclusion
%between levels in the Straubing-Th{\'e}rien and the dot-depth hierarchy, we conclude the proof of \autoref{thm:NPCompleteness}.

%have the following result.

%\begin{corollary}
%\label{corollary:NPCompleteness}
%The \textsc{Intersection Non-Emptiness Problem} for DFAs accepting languages from $\dd{0}$,\ $\dd{1/2}$,\ $\st{1}$, or~$\st{3/2}$ is \NP-complete.
%\end{corollary}

%\input{ST3-2-Exchange-Lemma}

%%% Local Variables: 
%%% mode: latex
%%% TeX-master: "sample.tex"
%%% End: 

\subsection{\NP-Hardness}
\label{sec:NP-hardness}
\begin{figure}
	\centering
	\begin{tikzpicture}[scale=0.4,inner sep=1pt,->,>=stealth',shorten >= 1pt,auto,node distance = 1.3cm,semithick, initial text =,state/.style={circle, draw,minimum size=0.7cm, initial text = }] 
		\node[state, initial] (0) {0};
		\node[state, right of=0, node distance = 1.6cm] (1) {\scalebox{0.6}{$i_1$}};
		\node[state, right of=1] (2) {\scalebox{0.6}{$i_1$+$1$}};
		\node[state, right of=2] (3) {\scalebox{0.6}{$i_1$+$2$}};
		\node[state, right of=3, node distance = 1.6cm] (4) {\scalebox{0.6}{$i_2$}};			
		\node[state, right of=4] (5) {\scalebox{0.6}{$i_2$+$1$}};
		\node[state, right of=5] (5') {\scalebox{0.6}{$i_2$+$2$}};
		\node[right of=5, node distance = 1.6cm] (ghost) {};
		\node[state, right of=ghost, node distance = 1.6cm] (6) {\scalebox{0.6}{$n$}};
		\node[state, accepting, right of=6] (7) {\scalebox{0.6}{$n$+$1$}};
		
		\node[state, below of=2] (8) {\scalebox{0.6}{$i_1$+$1$}};
		\node[state, right of=8] (9) {\scalebox{0.6}{$i_1$+$2$}};
		\node[state, below of=5] (10) {\scalebox{0.6}{$i_2$+$1$}};
		\node[state, right of=10] (11) {\scalebox{0.6}{$i_2$+$2$}};			
		\node[state, accepting, below of=6] (12) {\scalebox{0.6}{$n$}};

		\path
		(0) edge[dotted] node[pos=0.4] {$\Sigma$} (1)
		(1) edge node[pos=0.4] {$0$} (2)
		(1) edge node[pos=0.4] {$1$} (8)
		(2) edge node[pos=0.4] {$\Sigma$} (3)
		(3) edge[dotted] node[pos=0.4] {$\Sigma$} (4)
		(4) edge node[pos=0.4] {$0$} (5)
		(4) edge node[pos=0.4] {$1$} (10)
		(5) edge node[pos=0.4] {$\Sigma$} (5')
		(5') edge[dotted] node[pos=0.4] {$\Sigma$} (6)
		(6) edge[pos=0.4] node {$\Sigma$} (7)
		(7) edge[loop right] node {$\Sigma$} (7)
		(8) edge node[pos=0.4, above] {$\Sigma$} (9)
		(9) edge[dotted] node[pos=0.4, above] {$\Sigma$} (10)
		(10) edge node[pos=0.4, above] {$\Sigma$} (11)
		(11) edge[dotted] node[pos=0.4, above] {$\Sigma$} (12)
		(12) edge[loop right] node {$\Sigma$} (12);				
	\end{tikzpicture}
	\caption{DFA $A_{e_i}$ with $L(A_{e_i})=\Sigma^{i_1}\cdot1\cdot\Sigma^{n-i_1-1} \cup
		\Sigma^{i_2}\cdot1\cdot\Sigma^{n-i_2-1}
		\cup
		\Sigma^{\geq n+1}$. A dotted arrow between some states $j$ and $j'$ represents a chain of length $j'-j$ with the same transition labels.}
	\label{fig:st-1-2-lineDFA}
\end{figure}
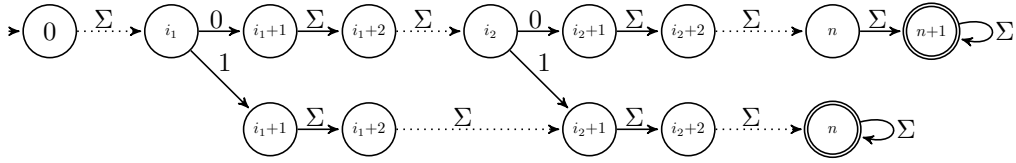
Recall that by~\cite[Theorem~1]{Rampersad2010detecting} \INE for finite languages accepted by DFAs is already \NP-complete. As the level $\dd{0}$ of the dot-depth hierarchy contains all finite language, the \NP-hardness part of Theorem~\ref{thm:NPCompleteness} follows directly from inclusion of language classes. Combining Lemma~\ref{lemma:ST-3/2-DFA}, and \cite[Theorem~3.3]{DBLP:journals/corr/abs-1907-13115} with the inclusion
between levels in the Straubing-Th{\'e}rien and the dot-depth hierarchy, we conclude the proof of \autoref{thm:NPCompleteness}.

\begin{remark}
	Recall that the dot-depth hierarchy, apart form~$\dd{0}$, coincides with the concatenation hierarchy starting with the language class $\{\emptyset,\{\lambda\},\Sigma^+,\Sigma^*\}$. The \INE problem for DFAs or NFAs accepting only languages from $\{\emptyset,\{\lambda\},\Sigma^+,\Sigma^*\}$ belongs to $\AC^0$, by similar arguments as in the proof of Theorem~\ref{thm:STH-0-AC0}.
\end{remark}
We showed in Section \ref{section:INELogspace} that \INE for DFAs, all accepting languages from $\st{1/2}$, belongs to $\L$. If we allow only one DFA to accept a language from $\st{1}$, the problem becomes \NP-hard. The statement also holds if the common alphabet is binary. 
%\begin{toappendix}
%	For the matter of completeness we give a formal definition of the \textsc{Vertex Cover} problem used in the proof of Theorem~\ref{thm:ST-1-NP-hard}.
%	\begin{definition}[\textsc{Vertex Cover}]\  \\
%		Input: Undirected graph $G=(V, E)$ with vertex set $V$ and edge set $E \subseteq V \times V$ and integer $k$. \\
%		Output: Is there a subset $S \subseteq V$ with $|S| \leq k$ and for all $e \in E$, $S \cap e \neq \emptyset$? \\
%		If yes, we call $S$ a \emph{vertex cover} of $G$ of size at most $k$. 
%	\end{definition} 
%\end{toappendix}
\RestateThmHardnesSTone*
\begin{proof}[Proof idea.]
    The reduction is from \textsc{Vertex Cover}. Let $k\in \mathbb{N}_{\geq 0}$ and let $G=(V, E)$ be a graph with vertex set $V=\{v_0, v_1, \dots, v_{n-1}\}$ and edge set $E=\{e_0, e_1, \dots, e_{m-1}\}$. 
	The only words $w=a_0a_1\dots a_\ell$ accepted by all DFAs will be of length exactly $n=\ell+1$ and encode a vertex cover by: $v_j$ is in the vertex cover if and only if $a_j = 1$. 
	Therefore, we construct for each edge $e_i = \{v_{i_1}, v_{i_2}\} \in E$, with $i_1 < i_2$, a DFA $A_{e_i}$, as depicted in Figure~\ref{fig:st-1-2-lineDFA}, that accepts the language $L(A_{e_i})=\Sigma^{i_1}\cdot1\cdot\Sigma^{n-i_1-1} \cup
	\Sigma^{i_2}\cdot1\cdot\Sigma^{n-i_2-1}
	\cup
	\Sigma^{\geq n+1}$. We show that $L(A_{e_i})$ is from $\st{1/2}$, as it also accepts all words of length at least $n+1$. 
	We further construct a DFA $A_{=n,\leq k}$ that accepts all words of length exactly $n$ that contain at most $k$ letters $1$. 
	The finite language $L(A_{=n, \leq k})$ is the only language from $\st{1}$ in the instance.
\end{proof}
\begin{proof}
%In order to prove \autoref{thm:ST-1-NP-hard} 
The \NP-membership follows from Lemma~\ref{lemma:ST-3/2-DFA} by inclusion of language classes. For the hardness, we give a reduction from the \textsc{Vertex Cover} problem: given an undirected graph $G=(V, E)$ with vertex set $V$ and edge set $E \subseteq V \times V$ and integer $k$. 
	Is there a subset $S \subseteq V$ with $|S| \leq k$ and for all $e \in E$, $S \cap e \neq \emptyset$? 
	If yes, we call $S$ a \emph{vertex cover} of $G$ of size at most $k$. 

%\begin{definition}[\textsc{Vertex Cover}]\  \\
%	Input: Undirected graph $G=(V, E)$ with vertex set $V$ and edge set $E \subseteq V \times V$ and integer $k$. \\
%	Output: Is there a subset $S \subseteq V$ with $|S| \leq k$ and for all $e \in E$, $S \cap e \neq \emptyset$? \\
%	If yes, we call $S$ a \emph{vertex cover} of $G$ of size at most $k$. 
%\end{definition} 
%\begin{theorem}\label{thm:DD-1/2-NP-hard}
%	The intersection non-emptiness problem for DFAs all accepting languages
%	from~$\dd{1/2}$ %or $\ddp{1/2}$ 
%is \NP-hard.
%\end{theorem}
%\begin{lemma}\label{lem:NP-hard-FiniteLang}
%	The \INE problem for DFAs all accepting finite binary languages is \NP-hard.
%\end{lemma}
%\begin{proof}[Proof of Theorem~\ref{thm:ST-1-NP-hard}]
    Let $k\in \mathbb{N}_{\geq 0}$ and let $G = (V, E)$ be an undirected graph with vertex set $V = \{v_0, v_1, \dots, v_{n-1}\}$ and edge set $E= \{e_0, e_1, \dots, e_{m-1}\}$.
    From $(G, k)$ we construct $m+1$ DFAs over the common alphabet $\Sigma = \{0, 1\}$. The input word for these automata will encode which vertices are in the vertex cover. Therefore, we assume a linear order on $V$ indicated by the indices of the vertices. More precisely, a word accepted by all automata will have a $1$ at position $j$ if and only if the vertex $v_j$ will be contained in the vertex cover $S$. For a word $w=a_0a_1\dots a_\ell$ with $a_j \in \Sigma$ for $0\leq j \leq \ell$ 
	we denote $w[j]=a_j$. We may call a word $w$ of length $n$ a \emph{vertex cover} and say that the vertex cover covers an edge $e = \{v_{j_1}, v_{j_2}\}$ if $w[j_1] = 1$ or $w[j_2] = 1$.
	
    For every edge $e_i = \{v_{i_1}, v_{i_2}\}$ in~$E$ with $i_1 < i_2$, we construct a DFA $A_{e_i}$ as depicted in~Figure~\ref{fig:st-1-2-lineDFA} consisting of two chains, one of length $n+1$ and one of length $n-(i_1+1)$ (The length of a chain is the number of transitions in the chain). 	
	The DFA is defined as $A_{e_i} = (Q, \Sigma, \delta, q^0, F)$ with state set 
	$Q = \{\,q^j \mid 0 \leq j \leq n+1\,\} \cup \{\,{q'}^j \mid i_1+1 \leq j \leq n\,\}$
    and final states $F = \{\,q^{n+1}, {q'}^n\,\}$.
    We first focus on the states $\{\,q^j \mid 0 \leq j \leq n+1\,\}$. The idea is that there, the first $n+1$ states correspond to the sequence of vertices and reading a $1$ at position $j$ for which $v_j \in e_i$ will cause the automaton to switch to the chain consisting of states $\{\,{q'}^j \mid i_1+1 \leq j \leq n\,\}$. There, only one state is accepting namely the state that we reach after reading a vertex cover of length exactly $n$ that satisfies the edge $e_i$. Note that the paths from $q_0$ to $q'^n$ are one transition shorten than the path from $q_0$ to $q^{n+1}$.
	To be more formal, we define
    $\delta(q^{i_1}, 1) = {q'}^{i_1+1}$ and $\delta(q^{i_2}, 1) = {q'}^{i_2+1}$.
%	 for $j \leq n-1$, we define  $\delta_i(q_i^j, 1) = q_i^{j+n}$ if $v_{j+1} \in e_i$. 
    All other transitions are leading to the next state in the corresponding chain. Formally, we define $\delta(q^{i_1}, 0) = q^{i+1}$ and $\delta({q}^{i_2}, 0)=q^{i_2+1}$, and for all $0 \leq j \leq n$ with $j \notin \{i_1, i_2\}$, we define $\delta(q^j, \sigma) = q^{j+1}$, for both $\sigma \in \Sigma$, and for all $i+1 \leq j \leq n-1$, we define $\delta({q'}^j, \sigma) = {q'}^{j+1}$. We conclude the definition of $\delta$ by defining self-loops for the two accepting states, i.e., we define $\delta(q^{n+1}, \sigma) = q^{n+1}$ and $\delta({q'}^n, \sigma) = {q'}^n$ for both $\sigma\in\Sigma$. Clearly, $A_{e_i}$ is deterministic and of size $\mathcal{O}(n)$.
    
    Note that the only words of length \emph{exactly} $n$ that are accepted by $A_{e_{i}}$ contain a $1$ at position $i_1$ or position $i_2$ and therefore cover the edge $e_i$. All other words accepted by $A_{e_i}$ are of length at least $n+1$. More precisely $A_{e_i}$ accepts \emph{all} words which are of size at least $n+1$.
    Hence, we can describe the language accepted by $A_{e_i}$ as 
    \[L(A_{e_i}) = 
	    \Sigma^{i_1}\cdot1\cdot\Sigma^{n-i_1-1} \cup
	    \Sigma^{i_2}\cdot1\cdot\Sigma^{n-i_2-1}
	    \cup
	    \Sigma^{\geq n+1}.
    \]
    Consider a word $w \in L(A_{e_i})$ of length $n$. W.l.o.g., assume $w[i_1] = 1$. If we insert into $w$ one letter somewhere before or after position $i_1$, then the size of $w$ increases by $1$ and hence $w$ falls into the subset $\Sigma^{\geq n+1}$ of $L(A_{e_i})$. Hence, we can rewrite the language $ L(A_{e_i})$ by the following equivalent expression.
	\[L(A_{e_i}) = 
		\Sigma^{i_1}\cdot \Sigma^*\cdot1\cdot\Sigma^{n-i_1-1}\cdot \Sigma^* \cup
		\Sigma^{i_2}\cdot \Sigma^*\cdot1\cdot\Sigma^{n-i_2-1} \cdot \Sigma^*
		\cup
		\Sigma^{n+1}\Sigma^*.
	\]
    As we can rewrite a language of the form $\Sigma^\ell \Sigma^*$ equivalently as a union of languages of the form $\Sigma^* w_1 \Sigma^* w_2 \dots w_\ell \Sigma^*$ for $w_i \in \Sigma$, for $1 \leq i \leq \ell$, it is clear that $L(A_{e_i})$ is a language of level $\st{1/2}$.

%    \todo{EA: You can use \\useasboundingbox to fix the bounding box before drawing the edges}

	Next, we define a DFA $A_{= n, \leq k}$ which accepts the finite language of all binary words of length $n$ which contain at most $k$ appearances of the letter $1$. 	
    We define $A_{= n, \leq k} = (\{q_i^j \mid 0 \leq i \leq n+1, 0 \leq j \leq k+1\}, \Sigma, \delta, q_0^0, \{q_n^j \mid j \leq k\})$. The state graph of $A_{= n, \leq k}$ is a $(n,k)$-grid graph, where each letter increases the $x$ dimension represented by the subscript $i$ up to the value $n+1$, and each letter that is a $1$ increases the $y$ dimension represented by the superscript $j$ up to the value $k+1$. More formally, we define $\delta(q_i^j, 0) = q_{i+1}^j$, and $\delta(q_i^j, 1) = q_{i+1}^{j+1}$ for $0 \leq i \leq n$ and $0 \leq j \leq k$; and $\delta(q_{i}^j, \sigma) = q_i^j$ for $i = n+1$ or $j = k+1$. The size of $A_{= n, \leq k}$ is bounded by $\mathcal{O}(nk)$. For readability, we defined $A_{=n, \leq k}$ as a non-minimal DFA. As $L(A_{=n, \leq k})$ is finite, it is of level $\dd{0} \subseteq \st{1}$.
	
	By the arguments discussed above, the set of words accepted by all of the automata $(A_{e_i})_{e_i \in E}$ and $A_{= n, \leq k}$ are of size exactly $n$ and encode a vertex cover for~$G$ of size at most~$k$.
\end{proof}

\newcommand{\AuxLang}{M}

\subsection{Large Partially Ordered NFAs}
\label{sec:LargePONFA}
The results obtained in the last subsection left the precise complexity membership of \INE in the case of input automata being NFAs without any structural properties for the levels $\dd{1/2}$,\ $\st{1}$, and $\st{3/2}$ open. We devote this subsection to the proof of Theorem~\ref{thm:largePONFA}, showing that already for languages of $\dd{0}$ being accepted by an NFA, the size of an equivalent minimal poNFA can be exponential in the size of the NFA.
\RestateThmLargePoNFA*
%\begin{theorem}
%	There exists a family $(L_n')_{n \in \mathbb{N}} \in \dd{0}$ of languages
%	such that for every $n \in \mathbb{N}$ the language $L_n'$ is recognised by an NFA of size polynomial in $n$,
%	yet the minimal poNFA recognising $n$ has depth exponential in $n$.
%\end{theorem}

\begin{proof}
While the statement requires languages over a binary alphabet,
we begin by constructing an auxiliary family $(\AuxLang_n)_{n \in \mathbb{N}_{\geq 1}}$ of languages over an unbounded alphabet.
For all $n \in \mathbb{N}_{\geq 1}$ we then define $L_n$ by encoding $\AuxLang_n$ with a binary alphabet,
and we prove three properties of these languages that directly imply the statement of the Theorem.

For every~$n \in \mathbb{N}_{\geq 1}$, we define the languages~$\AuxLang_n'$ and~$\AuxLang_n''$ over the alphabet~$\{1,2,\ldots,n\}$ as follows.
The language~$\AuxLang_n'$ contains all the words of odd length,
and~$\AuxLang_n''$ contains all the words in which there are two occurrences of some letter~$i \in \{1,2,\ldots,n\}$
with only letters smaller than~$i$ appearing in between.\footnote{The languages $(\AuxLang_n'')_{n \in \mathbb{N}_{\geq 1}}$ were previously studied in~\cite{KleinZ14} with a game-theoretic background. We also refer to~\cite{Nay2011} for similar ``fractal languages.''}
Formally,
\begin{align*}
\AuxLang_n' & = \{\,x \in \{1,2,\ldots,n\}^* \mid |x| \textup{ is odd}\,\},\\
%\noalign{\hbox{and}}
\AuxLang_n'' &= \{\,xiyiz \in \{1,2,\ldots,n\}^* \mid  i \in \{1,2,\ldots,n\},y \in \{1,2,\ldots,i-1\}^*\,\}.
\end{align*}
We then define $\AuxLang_n$ as the union $\AuxLang_n' \cup \AuxLang_n''$.
Moreover, we define $L_n$ by encoding $\AuxLang_n$ with the binary alphabet $\{a,b\}$:
Let us consider the function $\phi_n: \{1,2,\ldots,n\}^* \rightarrow \{a,b\}^*$
defined by $\phi(i_1i_2 \ldots i_m) = a^{i_1}b^{n-i_1}a^{i_2}b^{n-i_2} \ldots a^{i_m}b^{n-i_m}$.
We set $L_n \subseteq \{a,b\}^*$ as the union of $\phi_n(\AuxLang_n)$ with
the language $\{a,b\}^* \setminus \phi(\{1,2,\ldots,n\}^*)$ containing all the words
that are not a proper encoding of some word in $\{1,2,\ldots,n\}^*$.

The statement of the theorem immediately follows from the following claim
{
\begin{restatable}{claim}{RestateClaimPONFA}\label{claim:poNFA}
	\begin{enumerate}
		\item\label{claim:cofinite} The languages $\AuxLang_n$ and $L_n$ are cofinite, thus they are in $\dd{0}$.
		\item\label{claim:smallNFA} The languages $\AuxLang_n$ and $L_n$ are recognized by NFAs of size $n+4$, resp.~$O(n^2)$.
		\item\label{claim:largePONFA} Every poNFA recognizing either $\AuxLang_n$ or $L_n$ has a size greater than $2^{n-1}$.
	\end{enumerate}
\end{restatable}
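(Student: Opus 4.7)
\textbf{Items (\ref{claim:cofinite}) and (\ref{claim:smallNFA}).} The first two items can be handled directly. For (\ref{claim:cofinite}), I would argue by induction on $n$: every odd-length word lies in $\AuxLang_n'$, so only even-length words avoiding the $\AuxLang_n''$-pattern remain. In such a word the maximal letter $n$ can occur at most once, since any two $n$-positions would enclose a factor over $\{1,\ldots,n-1\}$ witnessing the pattern with $i=n$. Decomposing the word as $u$ or $u \cdot n \cdot v$ with $u,v \in \{1,\ldots,n-1\}^*$ avoiding $\AuxLang_{n-1}''$ and invoking the inductive hypothesis yields a length bound of $2^n - 1$; hence $\AuxLang_n$ is cofinite, and cofiniteness of $L_n$ follows since $\{a,b\}^* \setminus L_n$ consists of the $\phi_n$-images of a finite set. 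For (\ref{claim:smallNFA}), I would exhibit the $(n+4)$-state NFA with state set $\{q_0, s_0, s_1, p_1, \ldots, p_n, q_f\}$: the initial state $q_0$ carries a self-loop on every letter and on letter $i$ also branches nondeterministically to $p_i$ (guessing the first letter of an $\AuxLang_n''$-witness) and to $s_1$ (entering the parity branch); the toggle $s_0 \leftrightarrow s_1$ with $s_1$ accepting realises $\AuxLang_n'$; each $p_i$ self-loops on $\{1,\ldots,i-1\}$ and on letter $i$ jumps to the absorbing accepting sink $q_f$. The encoded version for $L_n$ replaces each letter-transition by a short gadget matching $a^i b^{n-i}$ and runs a parallel branch accepting all non-$\phi_n$-encodings, yielding $O(n^2)$ states.

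\textbf{Item (\ref{claim:largePONFA}): strategy.} This is the crux, and the key subtlety is that standard NFA lower bounds (fooling sets, biclique covers) cannot produce a super-linear bound for $\AuxLang_n$, by (\ref{claim:smallNFA}). Any proof must therefore exploit the acyclicity hypothesis of the poNFA in an essential way. My plan is to follow the strategy of~\cite{KleinZ14}, adapted from their delay-game setting to the poNFA context. For each $S\subseteq\{1,\ldots,n-1\}$ I would recursively construct a word $z_S\in\{1,\ldots,n\}^*$ of even length, lying outside $\AuxLang_n$, whose ``pending set'' $P(z_S):=\{i\mid z_S\text{ has an occurrence of } i \text{ followed only by letters } <i\}$ is exactly $S$. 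A Zimin-style recursion $z_{S\cup\{k\}}=z_S\cdot k\cdot z_S$ with $k=\max(S\cup\{k\})$ and a parity-fixing padding by a letter below $\min S$ is the natural candidate, giving at least $2^{n-1}$ distinct pending-set words, each characterised by the fact that adjoining a single letter $i$ creates an $\AuxLang_n''$-witness precisely when $i\in S$.

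\textbf{Item (\ref{claim:largePONFA}): counting states and transfer to $L_n$.} Fix a poNFA $A$ for $\AuxLang_n$. The plan is to associate to each $S$ a distinguished state $q_S$ of $A$ (reached by some accepting run on a carefully chosen completion of $z_S$) and to show that $S\mapsto q_S$ is injective. The partial-order hypothesis is what blocks the usual NFA collapses: in a poNFA, a run cannot re-enter a state it has already left, so the future behaviour from any state is independent of the prefix that led there. Combining this with a careful analysis of the chain of state-jumps along each accepting run on a completion of $z_S$, I expect to derive a pairwise-distinctness statement among the $q_S$'s, forcing $|Q_A|>2^{n-1}$. The bound transfers to $L_n$ via the induced poNFA $A'$ on $\{1,\ldots,n\}$ defined by $\delta'(q,i):=\delta_A^*(q,a^ib^{n-i})$: $A'$ recognizes $\AuxLang_n$, is still partially ordered, and satisfies $|Q_{A'}|\leq|Q_A|$. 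The main obstacle is formalising the passage from ``distinct pending profiles'' to ``distinct poNFA states'': the classical fooling-set argument is provably insufficient here (it would contradict (\ref{claim:smallNFA})), so the proof must use not merely reachable-set distinctness but the partial-order geometry of accepting runs, presumably via the $\mathcal{J}$-class count in the syntactic monoid alluded to in the paper.
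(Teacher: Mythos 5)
Items 1 and 2 of your argument are sound in spirit. For item 1, your decomposition around the unique occurrence of the largest letter yields the same $2^n$ length bound as the paper's split-into-halves induction; both work. For item 2, your $(n+4)$-state template matches the paper's count (a two-state parity automaton plus an $(n+2)$-state pattern automaton), but the way you weld them together is broken: since $q_0$ self-loops on every letter and may jump to $s_1$ at any time, the toggle no longer measures the length of the whole input, and the parity branch accepts every nonempty word (stay in $q_0$ until the final letter, then jump to $s_1$). Keep the parity automaton genuinely disjoint, with its own initial state, as the paper does, and the construction is correct.

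Item 3 is where the genuine gap lies, and you essentially concede it yourself. Your plan is a fooling-set-style argument: build $2^{n-1}$ words $z_S$ indexed by subsets, associate to each a state $q_S$ on some accepting run, and argue that $S\mapsto q_S$ is injective by exploiting the partial order. You correctly observe that the bare fooling-set argument cannot give a super-linear bound (it would contradict item 2), but you leave the crucial injectivity step unresolved; ``I expect to derive a pairwise-distinctness statement'' and ``presumably via the $\mathcal{J}$-class count'' gesture at a mechanism without supplying one, and it is not clear how to extract distinct states from distinct pending profiles even with the acyclicity hypothesis in hand. The paper's proof is genuinely different and simpler: it works with the \emph{single} Zimin word $u_n$ (your $z_{\{1,\dots,n-1\}}$), which has odd length $2^n-1$ and lies in $\AuxLang_n$ while lying outside $\AuxLang_n''$. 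The three key observations are: (i) $u_n$ contains exactly $2^{n-1}$ occurrences of the letter $1$, and deleting any one of them yields an even-length word still outside $\AuxLang_n''$, hence outside $\AuxLang_n$; (ii) therefore an accepting run of a poNFA on $u_n$ cannot self-loop while reading any of these $1$'s, since deleting that $1$ would otherwise preserve acceptance; (iii) partial order means a run never revisits a state it has left, so the $2^{n-1}$ forced state-changes give at least $2^{n-1}+1$ pairwise-distinct states on the run. This is a run-length argument on one witness word rather than a distinguishability argument over a family of words, and that is precisely the extra leverage the acyclicity hypothesis buys -- the missing piece you identified but did not supply.
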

}
%
% three claims:
%\vspace{.5em}

%The formal proof of this claim is presented in the long version~\cite{arxiveVersion}.
%\end{proof}

%\begin{toappendix}
%\section{Proofs for Subsection \ref{sec:LargePONFA} (Large Partially Ordered NFAs)}
%\RestateClaimPONFA*

%\begin{restatable}{claim}{RestateClaimCofinite}\label{claim:cofinite}
%	The languages $\AuxLang_n$ and $L_n$ are cofinite, thus they are in $\dd{0}$.
%\end{restatable}	
%	\RestateClaimCofinite* 
\begin{claimproof}[Proof of Item \ref{claim:cofinite}.]
We begin by proving that $\AuxLang_n$ is cofinite.
Note that, by itself, the language~$\AuxLang_n'$
is not in $\dd{0}$, as it is not even star-free.
We show that $\AuxLang_n''$ is cofinite,
which directly implies that $\AuxLang_n = \AuxLang_n' \cup \AuxLang_n''$ is also cofinite.
This follows from the fact that
every word $u \in \{1,2,\ldots,n\}^*$ satisfying $|u| \geq 2^n$ is in $\AuxLang_n''$~\cite{KleinZ14}.
%
%\todo{check for $M_n$ vs. $M_n''$}
%
This is easily proved by induction on~$n$:
If $n=1$, we immediately get that $1^j \in \AuxLang_1''$ for every~$j \geq 2 = 2^1$: such a word contains two adjacent occurrences of $1$.
Now suppose that~$n>1$, and that the property holds for $n-1$.
Every word $u \in \{1,2,\ldots,n\}^*$ satisfying~$|u| \geq 2^n$ can be split into
two parts $u_0$, $u_1$ such that $|u_0|,|u_1| \geq 2^{n-1}$.
We consider two possible cases, and prove that $u \in \AuxLang_n''$ in both of them.
\begin{enumerate}
\item If either $u_0$ or $u_1$ contains no occurrence of the letter~$n$,
then by the induction hypothesis, either $u_0 \in \AuxLang''_{n-1}$ or $u_1 \in \AuxLang''_{n-1}$,
 which directly implies that $u \in \AuxLang_n''$.
\item If both $u_0$ and $u_1$ contain (at least) one occurrence of the letter~$n$,
then $u \in \AuxLang_n''$ since it contains two occurrences of the letter~$n$
with only letters smaller than $n$ appearing in between (the latter part trivially holds, as~$n$ is the largest letter).
\end{enumerate}
Finally, we also get that $L_n$ is cofinite:
for all $u \in \{a,b\}^*$ satisfying $|u| \geq 2^n \cdot n$,
either~$u$ is not a proper encoding of a word of $\{1,2,\ldots,n\}^*$, thus $u \in L_n$,
or $u$ encodes a word $v \in \{1,2,\ldots,n\}^*$ satisfying $|v| \geq 2^n$,
hence $v \in \AuxLang_n$, which again implies that $u \in L_n$.
\end{claimproof}

%\begin{restatable}{claim}{RestateClaimSmallNFA}\label{claim:smallNFA}
%	The languages $\AuxLang_n$ and $L_n$ are recognised by NFAs of size $n+4$, resp.~$n(2n+5)$.
%\end{restatable}
%\RestateClaimSmallNFA*
\begin{claimproof}[Proof of Item \ref{claim:smallNFA}.]
    \begin{figure}
        \centering
        \begin{tikzpicture}[>=stealth',->,initial text =]
    \node[draw,circle,state,initial] at (0,0) (0) {$0$};
    \node[draw,circle,state,accepting] at (3,0) (1) {$1$};

    \draw (0) to[bend left] node[above, midway] {$\Sigma$} (1);
    \draw (1) to[bend left] node[below, midway] {$\Sigma$} (0);
\end{tikzpicture}
        \caption{Automaton $A'$ recognizing $\AuxLang_n'$.}%
        \label{fig:m'}
    \end{figure}
    \begin{figure}
        \centering
        \begin{tikzpicture}[>=stealth',->,initial text =]
    \node[draw,circle,state,initial] at (0,0) (I) {$I$};
    \node[draw,circle,state,accepting] at (6,0) (F) {$F$};
    \node[draw,circle,state] at (3,3) (1) {$q_1$};
    \node[draw,circle,state] at (3,1) (2) {$q_2$};
    \node at (3,-0.2) (3) {$\vdots$};
    \node[draw,circle,state] at (3,-2) (n) {$q_n$};

    \draw (I) to[loop above] node[above, midway] {$\Sigma$} (I);
    \draw (F) to[loop above] node[above, midway] {$\Sigma$} (F);
    \draw (2) to[loop above] node[above, midway] {$< 2$} (2);
    \draw (n) to[loop below] node[below, midway] {$< n$} (n);
    \draw (I) to node[above, midway] {$1$} (1);
    \draw (1) to node[above, midway] {$1$} (F);
    \draw (I) to node[above, midway] {$2$} (2);
    \draw (2) to node[above, midway] {$2$} (F);
    \draw (I) to node[below, midway] {$n$} (n);
    \draw (n) to node[below, midway] {$n$} (F);
%    \draw[-,dotted] (2) to (n);
\end{tikzpicture}
        \caption{Automaton $A''$ recognizing $\AuxLang_n''$.}%
        \label{fig:m''}
    \end{figure}
We first construct an NFA $A$ of size $n+4$ recognizing $\AuxLang_n = \AuxLang_n'\cup \AuxLang_n''$
as the disjoint union of an NFA $A'$ (\autoref{fig:m'}) of size $2$ recognizing $\AuxLang_n'$
and an NFA $A''$ (\autoref{fig:m''}) of size $n+2$ recognizing~$\AuxLang_n''$.
The language $\AuxLang_n'$ of words of odd length is trivially recognized by an NFA of size $2$,
thus we only need to build an NFA~$A'' = (Q, \{1,2,\ldots,n\},\delta, q_I, \{q_F\})$
of size~$n+2$ that recognizes~$\AuxLang_n''$.
The state space~$Q$ is composed of the start state~$q_I$,
the single final state~$q_F$,
and~$n$ intermediate states~$\{q_1,q_2, \ldots, q_n\}$.
The NFA $A''$ behaves in three phases:
\begin{enumerate}
	\item First, $A''$ loops over its start state
	until it non-deterministically guesses that it will read two copies of some $i \in \Sigma$ with smaller letters in between:
	$\delta(q_I,i) = \{q_I,q_i\}$ for all $i \in \Sigma$.
	\item To check its guess, $A''$ loops in $q_i$ while reading letters smaller than $i$ until it reads a second $i$:
	$\delta(q_i,j) = \{q_i\}$ for all $j \in \{1,2,\ldots,i-1\}$ and $\delta(q_i,i) = \{q_F\}$. 
	\item The final state $q_F$ is an accepting sink: $\delta(q_F,j) = \{q_F\}$ for all $j \in \Sigma$.
\end{enumerate}
This definition guarantees that $A''$ accepts the language $\AuxLang_n''$.

Finally, we build an NFA $B$ of size $O(n^2)$
that recognizes $L_n$ by following similar ideas.
Once again, $B$ is defined as the disjoint union of two NFAs $B'$ and $B''$:
The NFA $B'$ uses $4n$ states to check that either the input is
\emph{not} a proper encoding,
or the input encodes a word $u \in \{1,2,\ldots,n\}^*$ of odd length.
Then, the NFA $B''$ with $O(n^2)$ states is obtained by adapting the NFA $A''$ to the encoding of the letters $\{1,2,\ldots,n\}$:
we split each of the $2n$ intermediate transitions of $A''$ into $n$ parts by adding $n-1$ states,
and we add $2(n-1)$ states to each self-loop of $A''$ in order to check that the encoding of an adequate letter is read.
\end{claimproof}

%\begin{restatable}{claim}{RestateClaimLargePONFA}\label{claim:largePONFA}
%	Every poNFA recognising either $\AuxLang_n$ or $L_n$ has a size greater than $2^{n-1}$.
%\end{restatable}
%\RestateClaimLargePONFA* 
\begin{claimproof}[Proof of Item \ref{claim:largePONFA}.]
It is sufficient to prove the result for $\AuxLang_n$,
as we can transform each poNFA $A = (Q, \{a,b\},\delta_A, q_I, F)$ recognizing $L_n$
into a poNFA $B = (Q, \{1,2,\ldots,n\},\delta_B, q_I, F)$ recognizing~$\AuxLang_n$ with the same set of states
by setting $\delta_B(q,i) = \delta_A(q,a^ib^{n-i})$. 

Note that, by itself, the language $\AuxLang_n''$ is recognized
by the poNFA $A$ of size $n+2$ defined in the proof of Item~\ref{claim:smallNFA}.
Let $A'$ be a poNFA recognizing $\AuxLang_n$.
To show that $A'$ has more than~$2^{n-1}$ states,
we study its behavior on the \emph{Zimin words},
defined as follows:
\[
\textup{Let $u_1 = 1$ and $u_j = u_{j-1}ju_{j-1}$  for all $1 < j \leq n$.}
\]
For instance, $u_4 = 121312141213121$.
It is known that
$|u_j| = 2^j-1$ and
$u_j \notin \AuxLang_n''$
for every $1 \leq j \leq n$~\cite{KleinZ14}.
These two properties are easily proved by induction on~$j$:
Trivially, $u_1$ is not in $\AuxLang_1''$ and its size is $1 = 2^1-1$.
Now suppose that $j>1$ and that $u_{j-1}$ satisfies both properties:
$|u_{j-1}| = 2^{j-1}-1$ and
$u_{j-1} \notin \AuxLang_n''$.
The first property follows immediately from the induction hypothesis.
\[
|u_j| = |u_{j-1}ju_{j-1}| = 2 \cdot |u_{j-1}| + 1 = 2 \cdot (2^{j-1}-1) + 1 = 2^j-1;\\
\]
To prove the induction step for the second property,
we suppose, towards building a contradiction,
that $u_j \in \AuxLang_n''$.
Then $u_j$ contains two occurrences of some letter $i \in \{1,2,\ldots,n\}$
with only letters smaller than $i$ appearing in between.
Since~$u_j$ contains only one occurrence of the letter~$j$ and no letter is greater than $j$,
$i$ is strictly smaller than $j$.
Moreover, as only letters smaller than $i$ (thus no $j$) can appear between these two occurrences,
they both need to appear in one of the copies of $u_{j-1}$.
Therefore $u_{j-1}$ is also in~$\AuxLang_n''$, which contradicts the induction hypothesis.

To conclude, remark that the word $u_n$ is not in $\AuxLang_n''$,
but since $|u_n| = 2^n-1$ is odd, it is in~$\AuxLang_n = L(A')$.
Consider a sequence $\rho \in Q^*$ of states leading $A'$ from its start state to a final state over the input $u_n$.
Observe that the word $u_n$ contains $2^{n-1}$ occurrences of the letter~$1$,
and deleting (any) one of these occurrences results in a word of even length that is still not in $\AuxLang_n''$,
thus it is also not in $\AuxLang_n= L(A')$.
This proves that the sequence $\rho$ cannot loop over any of the $1$'s in $u_n$.
Moreover, as $A'$ is partially ordered by assumption, once it leaves a state, it can never return to it.
Therefore, $\rho$ contains at least $2^{n-1}+1$ distinct states while processing the $2^{n-1}$ occurrences of $1$ in $u_n$,
which shows that the automaton $A'$ has more than $2^{n-1}$ many states.
\end{claimproof}
%\end{toappendix}
This concludes the proof.
\end{proof}

%%% Local Variables: 
%%% mode: latex
%%% TeX-master: "sample.tex"
%%% End: 

\subsection{Commutative Star-Free Languages}\label{sec:CommLang}
%\INE instroduced; m is the number of automata; n is the number of states; k is the number of polynomial factors.

%\newcommand{\comm}{\texttt{perm}}

In the case of commutative languages, we have a complete picture of
the complexities for both hierarchies, even for arbitrary input NFAs.
%%
%%A language $L\subseteq\Sigma^*$ is \emph{commutative} if, for any
%%$a,b\in\Sigma$ and words $u,v\in\Sigma^*$, we have that $uabv\in L$
%%implies $ubav\in L$.
%%
%More formally, let~$\Sigma$ be an alphabet and
%$\Gamma\subseteq\Sigma$. Define the \emph{projection homomorphisms}
%$\pi_{\Gamma} : \Sigma^* \to \Gamma^*$ by $\pi_{\Gamma}(\sigma) =
%\sigma$ for $\sigma \in \Gamma$ and $\pi_{\Gamma}(\sigma) =
%\varepsilon$ for $\sigma\notin \Gamma$.  By a \emph{projection} of a
%language, we mean the image of the language under some projection,
%i.e., the language resulting by the deletion of letters.
%Additionally, we set $\comm(u) = \{\, v \in \Sigma^* \mid
%\pi_{\{a\}}(u) = \pi_{\{a\}}(v) \text{ for every $a\in\Sigma$}\,\}$
%and $\comm(L) = \bigcup_{u \in L}\comm(u)$, i.e., for a given
%language~$L$, the language of all possible commutations of words
%in~$L$. 
Observe, that commutative languages generalize unary
languages, where it is known that for unary star-free languages both hierarchies collapse. For commutative star-free languages, a similar result holds, employing~\cite[Prop.~30]{Hoffmann2021ciaa}.

\begin{restatable}{theorem}{RestateTheoremCollapse}\label{thm:collapse_commutative}
  For \emph{commutative star-free languages} the levels $\st{n}$ of
  the Straubing-Th\'erien and $\dd{n}$ of the dot-depth hierarchy
  coincide for all full and half levels, except for $\st{0}$ and
  $\dd{0}$. Moreover, the hierarchy collapses at level one.%~$3/2$.
 %Every commutative star-free language is contained in $\st{3/2}$.
\end{restatable}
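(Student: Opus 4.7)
The plan is as follows. The collapse at level one follows from a characterization of commutative star-free languages that is essentially Proposition~30 of~\cite{Hoffmann2021ciaa}: every commutative star-free language is a finite Boolean combination of threshold languages $T_{a,n} = \{\,w \in \Sigma^* \mid |w|_a \geq n\,\}$, for $a \in \Sigma$ and $n \in \mathbb{N}_{\geq 0}$. Each $T_{a,n}$ equals the marked product $\Sigma^* a \Sigma^* a \Sigma^* \cdots a \Sigma^*$ with $n$ copies of $a$, so it is a marked product of level-$0$ languages and therefore lies in $\st{1/2}$. Being finite Boolean combinations of $\st{1/2}$-languages, all commutative star-free languages then lie in $\st{1}$. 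Combined with the standard inclusion chain $\st{1} \subseteq \dd{1} \subseteq \st{3/2} \subseteq \dd{3/2} \subseteq \st{2} \subseteq \cdots$, this forces every higher full and half level of both hierarchies to coincide with $\st{1}$ when restricted to commutative star-free languages, which gives the announced collapse at level one.

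What remains is the coincidence of $\st{1/2}$ and $\dd{1/2}$ on the commutative side, and I expect this to be the main obstacle, since it requires extracting information from $\dd{1/2}$-products in the presence of commutativity. Because $\st{1/2} \subseteq \dd{1/2}$ is known, I only need the reverse inclusion for commutative $L$. I will write a commutative $L \in \dd{1/2}$ as $L = \bigcup_i L_i$ with $L_i = u_0^{(i)} \Sigma^* u_1^{(i)} \Sigma^* \cdots \Sigma^* u_{k_i}^{(i)}$ and set $u^{(i)} = u_0^{(i)} u_1^{(i)} \cdots u_{k_i}^{(i)}$. The key claim is the identity $L = \bigcup_i \{\,w \in \Sigma^* \mid |w|_a \geq |u^{(i)}|_a \text{ for all } a \in \Sigma\,\}$. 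The inclusion $\subseteq$ is immediate by letter counting inside each $L_i$; for the converse, any word $w$ with the prescribed letter counts for some index $i$ admits a permutation $w'$ of the form $u_0^{(i)} x_1 u_1^{(i)} \cdots x_{k_i} u_{k_i}^{(i)} \in L_i \subseteq L$, and commutativity of $L$ forces $w \in L$.

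To finish, I will observe that each Parikh-upward-closed set $\{\,w \mid |w|_a \geq n_a \text{ for all } a \in \Sigma\,\}$ is the shuffle ideal generated by the multiword $\prod_a a^{n_a}$, and may be written as the finite union of principal shuffle ideals $\Sigma^* b_1 \Sigma^* b_2 \Sigma^* \cdots \Sigma^* b_N \Sigma^*$ ranging over all distinct orderings $(b_1, \ldots, b_N)$ of that multiword, hence it belongs to $\st{1/2}$; a finite union of $\st{1/2}$-languages remains in $\st{1/2}$. Combining the two parts then gives the theorem, with the exclusion of level~$0$ being inherent: $\st{0} = \{\emptyset, \Sigma^*\}$ cannot express the commutative star-free language $\{\,w \mid |w|_a \geq 1\,\}$, whereas $\dd{0}$ contains commutative finite and cofinite languages that $\st{0}$ misses.
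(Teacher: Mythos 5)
Your argument for the level-one collapse is essentially the paper's: both invoke Proposition~30 of Hoffmann~\cite{Hoffmann2021ciaa}. The paper applies it as a finite union of sets $\comm(u)\shuffle\Gamma^*$ and then rewrites $v\shuffle\Gamma^*$ as an intersection of a $\st{1/2}$-language with a complement of a finite union of $\st{1/2}$-languages; you reformulate the characterization as finite Boolean combinations of threshold languages $T_{a,n}$, observe each $T_{a,n}=\Sigma^*a\Sigma^*\cdots a\Sigma^*\in\st{1/2}$, and conclude membership in $\st{1}$. These are equivalent and equally valid; your version is arguably a cleaner path to the same fact, and the rest of your collapse argument via the inclusion chain is correct.

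The $\dd{1/2}\subseteq\st{1/2}$ step, however, has a real gap, and it is exactly where you said you expected trouble. Your key identity $L=\bigcup_i\{\,w\mid |w|_a\ge|u^{(i)}|_a\text{ for all }a\,\}$ breaks when some marked product $L_i$ in the $\dd{1/2}$-decomposition has $k_i=0$, i.e., $L_i=\{u_0^{(i)}\}$ is a singleton. In that case the right-hand side is genuinely larger than $L_i$, and the permutation step in your $\supseteq$ direction has no $\Sigma^*$-slot into which to place the extra letters. A concrete witness: over $\Sigma=\{a,b\}$, take $L=\{a\}$. This $L$ is commutative, star-free, finite, hence in $\dd{0}\subseteq\dd{1/2}$, but it is not a shuffle ideal, so $L\notin\st{1/2}$; your identity would incorrectly conclude $L=\{\,w\mid |w|_a\ge 1\,\}$. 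Be aware that the paper's own proof shares this lacuna: it is explicitly offered only ``in a special case'' (marked products already bounded by $\Sigma^*$ on both ends), and the written argument never touches the case of singleton or boundary-anchored factors. So neither your proposal nor the paper closes the $\dd{1/2}\subseteq\st{1/2}$ direction, and the example above suggests the statement at level $1/2$ itself needs to be revisited or hedged.
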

%\begin{toappendix}
%\section{Proofs for Subsection \ref{sec:CommLang} (Commutative Star-Free Languages)}
%\RestateTheoremCollapse*
\begin{proof}
%\todo{HF: Do we ever talk about this algebraic perspective throughout the paper? Also, references to these algebraic things are missing here. SH: I added another, non-algebraic, argument without the need to introduce any additional notions.}
  The strict inclusion $\st{0}\subset\dd{0}$ even in the commutative
  case is obvious.  Since $\st{1/2} \subseteq \dd{1/2}$ we only need
  to show the converse inclusion in the case of commutative languages.
  For the sake of notational simplicity, we shall give the proof only
  in a special case. Observe that, by commutativity, if $\Sigma^* ab
  \Sigma^* \subseteq L$, then $\Sigma^* a \Sigma^* b \Sigma^*
  \subseteq L$; moreover, $\Sigma^* ab \Sigma^* \subseteq \Sigma^* a
  \Sigma^* b \Sigma^*$.  Using this idea repeatedly for marked
  products as they describe languages from $\dd{1/2}$, we can write
  them as equivalent polynomials used for defining languages
  from~$\st{1/2}$.
%
%  THE OLD PROOF
% 
%  Let $L \in \dd{1/2}$. Then, it could be written as a finite union
%  of languages of the form
%  \[
%   u_0 \Sigma^* u_1 \Sigma^* \cdots u_{k_1} \Sigma^* u_k
%  \]
%  for $k \ge 0$ and words $u_i$ from $\Sigma^*$.
%  Write $u_i = u_{i,1} \cdots u_{i,n_i}$ with $u_{i,j} \in \Sigma$.
%  Then, by commutativity, as we can move words from the starred languages freely around,
%  \[
%  u_0 \Sigma^* u_1 \Sigma^* \cdots u_{k_1} \Sigma^* u_k \subseteq L
%  \]
%  implies
%  \[
%   \Sigma^* u_{0,1} \Sigma^* u_{0,2} \Sigma^* \cdots u_{0,n_0} \Sigma^* u_{1,0} \cdots u_{k-1,1} \Sigma^* \cdots u_{k-1,n_{k-1}} \Sigma^* u_{k,1} \Sigma^* \cdots \Sigma^* u_{k,n_k} \Sigma^*
%   \subseteq L.
%  \]
%  As
%  \begin{multline*}
%      u_0 \Sigma^* u_1 \Sigma^* \cdots u_{k_1} \Sigma^* u_k \subseteq \\
%      \Sigma^* u_{0,1} \Sigma^* u_{0,2} \Sigma^* \cdots u_{0,n_0} \Sigma^* u_{1,0} \cdots u_{k-1,1} \Sigma^* \cdots u_{k-1,n_{k-1}} \Sigma^* u_{k,1} \Sigma^* \cdots \Sigma^* u_{k,n_k} \Sigma^*,
%  \end{multline*}
%  we can write $L$ as a finite union
%  of languages in $\st{1/2}$, giving the result.

  It remains to show that every commutative star-free language is
  contained in~$\st{1}$. As shown
  in~\cite[Prop.~30]{Hoffmann2021ciaa}, every star-free commutative
  language can be written as a finite union of languages of the form
  $L = \comm(u) \shuffle \Gamma^*$ for some $u \in \Sigma^*$ and
  $\Gamma \subseteq \Sigma$. Here
  $\comm(u)=\{\,w\in\Sigma^*\mid\mbox{$|u|_a=|w|_a$ for every
    $a\in\Sigma$}\,\}$, where $|w|_a$ is equal to the number of
  occurrences of~$a$ in~$w$. Since $\comm(u)$ is a finite language,
  clearly, language~$L$ is equal to the finite union of all
  $v\shuffle\Gamma^*$ for $v\in\comm(u)$, and thus belongs to
  $\st{3/2}$, since $\Gamma\subseteq\Sigma$.

  Now, note that $v \shuffle \Sigma^* = \Sigma^* v_1 \Sigma^* \cdots \Sigma^* v_{|v|} \Sigma^*$, where $v = v_1 \cdots v_{|u|}$ with $v_i \in \Sigma$,
  is in level one of the hierarchy. Further,
  \[
   v \shuffle \Gamma^* = ( v \shuffle \Sigma^* ) \cap \overline{ \bigcup_{a \in \Sigma \setminus \Gamma} \comm(va) \shuffle \Sigma^* }.
  \]
  Hence, we can conclude containment in $\st{1}$.
\end{proof}
%\end{toappendix}

%Also, what is quite curious compared to the general case, for fixed alphabets,
%we have a polynomial-time algorithm, showing that the problem is in \XP\  for alphabet size as a  parameter, for a class of NFAs generalizing, among others, poNFAs and DFAs. 
%%In the general case of non-commutative star-free languages, \INE is \NP-hard even for binary alphabets.\todo{EA: I found this sentence a bit to long and hard to read. SH:Reformulated. HF: No idea what this last sentence should tell us.}.
%
Next we will give the results, summarized in \autoref{thm:comm_case},
for the case of the commutative (star-free) languages. The \NP-hardness follows by a reduction from \textsc{3-CNF-SAT}.

%\RestateThmComm*
%
%\begin{theorem}%
%    \label{thm:comm_case}
%    The \INE{} problem for %concerning \emph{commutative} languages is
%	\begin{itemize}
%		\item DFAs accepting \emph{commutative} languages in $\st{1}$ is \NP-hard, and for
%%\NP-hard for DFAs accepting \emph{commutative} languages in $\st{1}$,
%		%\item contained in \NP\ for NFAs accepting \emph{commutative} languages that might not be star-free,
%		\item NFAs accepting \emph{commutative} languages that might not be star-free is contained in \NP.
% %       \item contained in \XP\ (with the size of the alphabet as the  parameter) for totally star-free NFAs accepting \emph{star-free commutative} languages, i.e., commutative languages in $\st{3/2}$.
%	\end{itemize}
%\end{theorem}

\begin{restatable}{lemma}{RestateLemmaCommHard}
\label{lem:com_np_hardness}
	The \INE problem is
	\NP-hard for DFAs accepting \emph{commutative} languages in $\st{1}$.
\end{restatable}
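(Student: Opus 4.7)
The plan is to reduce from \textsc{3-SAT}. Given a formula $\varphi = C_1 \wedge \cdots \wedge C_m$ on variables $x_1,\ldots,x_n$, introduce the alphabet $\Sigma = \{x_1,\overline{x}_1,\ldots,x_n,\overline{x}_n\}$, where the letter $x_i$ (resp.\ $\overline{x}_i$) represents assigning \emph{true} (resp.\ \emph{false}) to $x_i$. Since every language I construct will be commutative, the accepting condition of each DFA depends only on the Parikh image of the input. The unbounded alphabet is essential here: for each variable I use a dedicated pair of letters, matching the remark in the paper that the hardness proof requires the alphabet to grow with the input.

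For each variable $x_i$, let $A_i$ be a DFA recognising
\[ L_i = \{\, w \in \Sigma^* \mid |w|_{x_i} + |w|_{\overline{x}_i} = 1 \,\}, \]
which forces $w$ to commit to exactly one consistent truth value for $x_i$. For each clause $C_j = \ell_{j,1} \vee \ell_{j,2} \vee \ell_{j,3}$, let $A'_j$ be a DFA recognising
\[ L'_j = \Sigma^* \ell_{j,1} \Sigma^* \cup \Sigma^* \ell_{j,2} \Sigma^* \cup \Sigma^* \ell_{j,3} \Sigma^*, \]
requiring that at least one satisfied literal occurs in $w$. Each such DFA has a constant number of relevant states with a transition table of size $O(n)$, so the whole construction is logspace-computable.

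For correctness, if $\tau$ satisfies $\varphi$, the word $w_\tau$ of length $n$ whose $i$-th letter is $x_i$ when $\tau(x_i)=1$ and $\overline{x}_i$ otherwise lies in every $L_i$ (its $i$-th letter is the unique occurrence of a letter of $\{x_i,\overline{x}_i\}$) and in every $L'_j$ (since some literal of $C_j$ is satisfied and thus appears). Conversely, any $w \in \bigcap_i L_i \cap \bigcap_j L'_j$ induces a truth assignment $\tau$ by setting $\tau(x_i)=1$ iff $x_i$ occurs in $w$ (equivalently, iff $\overline{x}_i$ does not, by $L_i$), and the membership of $w$ in every $L'_j$ guarantees that $\tau$ satisfies $C_j$.

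It remains to check that each $L_i$ and $L'_j$ is commutative and lies in $\st{1}$. Commutativity is immediate since both languages are invariant under permutations of letters. The language $L'_j$ is already a union of basic piecewise languages $\Sigma^* a \Sigma^*$, and
\[ L_i = \bigl( \Sigma^* x_i \Sigma^* \setminus (\Sigma^* x_i \Sigma^* x_i \Sigma^* \cup \Sigma^* \overline{x}_i \Sigma^*) \bigr) \cup \bigl( \Sigma^* \overline{x}_i \Sigma^* \setminus (\Sigma^* \overline{x}_i \Sigma^* \overline{x}_i \Sigma^* \cup \Sigma^* x_i \Sigma^*) \bigr) \]
is a finite Boolean combination of piecewise testable languages, hence in $\st{1}$. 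No genuine obstacle is expected: the reduction is a direct adaptation of the standard \textsc{3-SAT} encoding, with the only subtlety being that commutativity forces us to use dedicated letters per literal rather than encoding assignments positionally.
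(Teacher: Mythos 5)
Your proposal is correct and follows essentially the same route as the paper: a reduction from \textsc{3-SAT} over the alphabet $\{x_1,\overline{x}_1,\ldots,x_n,\overline{x}_n\}$, with clause languages $\bigcup_{\ell\in C_j}\Sigma^*\ell\Sigma^*$ and per-variable consistency languages expressed as Boolean combinations of languages of the form $\Sigma^*a\Sigma^*$. The only cosmetic difference is that your $L_i$ enforces \emph{exactly one} occurrence from $\{x_i,\overline{x}_i\}$, whereas the paper's $L_{x_j}$ merely forbids $x_j$ and $\overline{x}_j$ from co-occurring; both variants yield a correct reduction.
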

%\begin{toappendix}
%\RestateLemmaCommHard*
\begin{proof}
  The \NP-complete \textsc{3-CNF-SAT} problem is defined as follows:
  given a Boolean formula~$\varphi$ as a set of clauses
  $C=\{c_1,c_2,\ldots, c_m\}$ over a set of variables
  $V=\{x_1,x_2,\ldots,x_n\}$ such that $|c_i|\leq 3$ for $i\leq m$. Is
  there a variable assignment $\beta:V\to\{0,1\}$ such that~$\varphi$
  evaluates to true under~$\beta$?

  Let~$\varphi$ be a Boolean formula in \textsc{3-CNF} with clause set
  $C = \{c_1,c_2, \ldots, c_m\}$ and variable set $V = \{x_1,x_2,
  \ldots, x_n\}$.  Let $\Sigma = \{ x_1,x_2, \ldots, x_n,
  \overline{x}_1,\overline{x}_2, \ldots, \overline{x}_n \}$. It is
  straightforward to construct polynomial-size DFAs for the following
  languages from~$\dd{1}$:
 \begin{align*} 
   L_{c_i} = \bigcup_{x \in c_i} \Sigma^* x
   \Sigma^*\quad\mbox{and}\quad L_{x_j} = \Sigma^*\setminus
   (\Sigma^* x_j \Sigma^* \overline{x}_j \Sigma^* \cup \Sigma^*
   \overline{x}_j \Sigma^* x_j \Sigma^*)\,,
 \end{align*}
 where $1\leq i \leq m$ and $1\leq j\leq n$. Then, the intersection
 of all $L_{c_i}$ and all $L_{x_j}$ is non-empty if and only if the
 \textsc{3-CNF-SAT} instance~$\varphi$ is satisfiable.
\end{proof}
%\end{toappendix}
The upper bound shown next also holds for arbitrary commutative languages.
% that are
%in case the involved languages are only commutative and 
%not %necessarily 
%star-free.

\begin{theorem}
\label{lem:com_containment_np}
The \INE problem for NFAs accepting arbitrary, i.e., not necessarily
star-free, \emph{commutative} languages is in $\NP$.
\end{theorem}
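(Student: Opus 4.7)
The plan is to reduce the problem to satisfiability of an existential Presburger formula of polynomial size, which is known to lie in $\NP$.

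Write $\Sigma = \{a_1, \ldots, a_k\}$ and let $\psi : \Sigma^* \to \mathbb{N}^k$, $w \mapsto (|w|_{a_1}, \ldots, |w|_{a_k})$, denote the Parikh mapping. Since each $L(A_i)$ is commutative, membership of a word $w$ in $L(A_i)$ depends only on $\psi(w)$, so $\bigcap_{i=1}^m L(A_i) \neq \emptyset$ if and only if $\bigcap_{i=1}^m \psi(L(A_i)) \neq \emptyset$. The heart of the argument is thus to express non-emptiness of this intersection of semilinear sets.

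Next, for each NFA $A_i$ I would invoke the classical construction of Verma, Seidl and Schwentick, which produces an existential Presburger formula $\varphi_i(x_1, \ldots, x_k)$ of size polynomial in $|A_i|$ such that $\varphi_i(\vec{v})$ holds iff $\vec{v} \in \psi(L(A_i))$. Concretely, one introduces integer variables $y_{p,a,q}$ counting how often each transition is used, imposes Kirchhoff-style flow balance at every state except the initial and a nondeterministically chosen final state, encodes reachability/connectivity of the multiset of used transitions from $q_0$ via polynomially many auxiliary variables (e.g.\ representing a spanning arborescence), and links the letter-counts to the transition-counts through $x_j = \sum_{p,q} y_{p, a_j, q}$. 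The conjunction
\[
 \Phi \;\equiv\; \exists x_1 \cdots \exists x_k \bigwedge_{i=1}^m \varphi_i(x_1, \ldots, x_k)
\]
is then an existential Presburger formula of size polynomial in the total input, and it is satisfiable iff the intersection of the Parikh images is non-empty. As satisfiability of existential Presburger arithmetic lies in $\NP$, the claimed upper bound follows.

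The main technical ingredient is the polynomial-size Presburger encoding of the Parikh image of an NFA, which is the only step requiring genuine care; however, this construction is well documented in the literature. The new observation is merely that commutativity of each $L(A_i)$ turns \INE into a question about intersection of Parikh images, thereby collapsing the whole problem to a pure existential Presburger satisfiability question of polynomial size, for which $\NP$-membership is classical.
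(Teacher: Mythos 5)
Your proof is correct but takes a genuinely different route from the paper's. The paper's argument is more elementary: fixing an order $a_1, a_2, \ldots, a_r$ on $\Sigma$, it observes that by commutativity $\bigcap_i L(A_i)$ is non-empty iff it contains a word of the sorted form $a_1^{\ell_1} a_2^{\ell_2} \cdots a_r^{\ell_r}$; one then nondeterministically guesses, for each automaton $A_i$, the intermediate state reached after each block $a_j^{\ell_j}$, which decomposes the instance into $r$ independent unary \INE{} problems (over the single-letter restrictions of the $A_i$ with the guessed source and target states), each of which lies in $\NP$ by the classical Stockmeyer--Meyer result cited in the paper. Your approach instead passes through the Parikh abstraction and invokes the Verma--Seidl--Schwentick polynomial-size existential Presburger encoding of Parikh images of NFAs, together with $\NP$-membership of existential Presburger satisfiability. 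This is a heavier toolkit, but it buys generality and conceptual cleanliness: the reduction to intersection of Parikh images applies verbatim if the commutative languages are presented by, say, context-free grammars, and it sidesteps the explicit bookkeeping of intermediate states. The paper's route is more self-contained and elementary, relying only on the unary $\NP$-membership result and the sorting observation; yours makes the underlying semilinear structure explicit at the cost of importing a nontrivial external construction.
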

\begin{proof}
  It was shown in~\cite{StMe73} that \INE is $\NP$-complete for unary
  NFAs as input. Fix some order $\Sigma = \{a_1, a_2,\ldots, a_r\}$ of
  the input alphabet.  Let $A_1, A_2, \ldots, A_m$ be the NFAs
  accepting commutative languages with $ A_i = (Q_i, \Sigma, \delta_i,
  q_{0,i}, F_i)$ for $1\leq i\leq m$. Without loss of generality, we
  may assume that every~$F_i$ is a singleton set, namely
  $F_i=\{q_{f,i}\}$. For each $1\leq i\leq m$ and $1\leq j\leq r$, let
  $ B_{i,j}$ be the automaton over the unary alphabet~$\{a_j\}$
  obtained from $A_i$ by deleting all transitions labeled with letters different
  from~$a_j$ and only retaining those labeled with~$a_j$. Each $B_{i,j}$
  will have one initial and one final state.
  Let $\vec{q}_{0}=(q_{0,1}, q_{0,2},\ldots, q_{0,m})$ be the tuple of initial
  states of the NFAs; they are the initial states of $B_{1,1},
  B_{2,1}, \ldots, B_{m,1}$, respectively.  Then, nondeterministically
  guess further  tuples~$\vec{q}_j$ from $Q_1 \times
  Q_2\times \ldots \times Q_m$ for $1\leq j\leq r-1$.  The $j$th tuple
  is considered as collecting the final states of the~$B_{i,j}$ but
  also as the start states for the $ B_{i,j+1}$. Finally, let
  $\vec{q_f}=(q_{f,1}, q_{f,2},\ldots, q_{f,m})$ and consider this as
  the final states of $ B_{1,r}, B_{2,r},\ldots, B_{m,r}$.  Then, for
  each $1\leq j\leq r$ solve \INE\ for the unary automata $B_{1,j},
  B_{2,j},\ldots, B_{m,j}$.  If there exist words~$w_j$ in the
  intersection of $L(B_{1,j}), L(B_{2,j}),\ldots, L(B_{m,j})$, for
  each $1\leq j\leq r$, then, by commutativity, there exists one in
  $a_1^* a_2^*\cdots a_r^*$, namely, $w_1w_2\cdots w_m$, and so the
  above procedure finds it. Conversely, if the above procedure finds a
  word, this is contained in the intersection of the languages
  induced by the $A_i$'s.
\end{proof}

%\begin{toappendix}
% \begin{proof}[A different proof for Lemma~\ref{lem:com_containment_np}:]
% This argument closely resembles 
% an argument from~\cite[Theorem 6.1]{StMe73}.
%  Fix some order $\Sigma = \{a_1, \ldots, a_r\}$
% of our alphabet.
% Let $A_1, \ldots, A_m$ be given NFAs accepting commutative languages, each
% having at most $n$ states.
% By the well-known power automaton construction, there exist $m$ DFAs, each of size at most $2^n$,
% for each language $L(A_i)$, $i \in \{1,\ldots,m\}$.
% Now, by commutativity, we can assume a common word
% has the form $a_1^{n_1} a_2^{n_2} \cdots a_r^{n_r}$
% for some numbers $n_1, \ldots, n_r \ge 0$.
% As we also have deterministic automata of size at most $2^n$ for the languages $L(A_i)$, we can
% assume $n_1 + \ldots + n_r \le 2^n$.
% So, we can non-deterministically ``guess'' the binary representations of the numbers $n_1, \ldots, n_r$.
% Let $M_{a_1}, \ldots, M_{a_r}$ be the transition matrices of a single automaton $A_i$.
% By successively multiplying and squaring them, we can compute in time polynomially bounded in $m$
% the results of $M_{a_1}^{n_1}, \ldots, M_{a_r}^{n_r}$ and finally multiply the result
% to test if $A_i$ accepts the guessed word. This could be done for every input automaton
% in order to verify that the guessed word is accepted by them all in polynomial time.
% \end{proof}
%\end{toappendix}

For fixed
alphabets, we have a polynomial-time algorithm, showing that the
problem is in \XP\ for alphabet size as a parameter, for a class of
NFAs generalizing, among others, poNFAs and DFAs (accepting star-free
languages). This is in contrast to the other results on the
\INE\ problem in this paper. We say that an
NFA $A=(Q,\Sigma,\delta,q_0,F)$ is \emph{totally star-free}, if 
the language accepted by $_q A_{p}= (Q, \Sigma, \delta,q, \{p\})$ is star-free
for
any states $q,p\in Q$. For instance, partially ordered NFAs
are totally star-free. 
%An example of a non-totally star-free NFA accepting~$\Sigma^*$ can be found in the appendix.

%\begin{toappendix}
An example of a non-totally star-free NFA accepting a star-free language is given next. 
%Conversely, consider the following example of
Consider the following NFA
$A=(\{q_0,q_1,q_2,q_3\},\delta,q_0,\{q_0,q_2\})$ with
$\delta(q_0,a) =\{q_1,q_2\}$,\
$\delta(q_1,a) =\{q_0\}$,\
$\delta(q_2,a) =\{q_3\}$,
and $\delta(q_3,a) =\{q_2\}$
%$A=(\{q_0,q_1,q_2,q_3,q_4,q_5,q_6,q_7\},\delta,q_0,\{q_3,q_6\})$ with
%the transitions
%%$$\delta=\{(0,a,1),(1,a,2),(2,a,3),(3,a,4),(4,a,5),(5,a,3),(1,a,6),(6,a,7),(7,a,6)\}\,.$$
%$\delta(q_0,a) =\{q_1\}$,\
%$\delta(q_1,a) =\{q_2,q_6\}$,\
%$\delta(q_2,a) =\{q_3\}$,\
%$\delta(q_3,a) =\{q_4\}$,\
%$\delta(q_4,a)  =\{q_5\}$,\
%$\delta(q_5,a)  =\{q_3\}$,\
%$\delta(q_6,a)  =\{q_7\}$, and 
%$\delta(q_7,a)  =\{q_6\}$
%%\vspace*{-2ex}
%%
%%\begin{minipage}{0.5\textwidth}
%%\begin{align*}
%%\delta(q_0,a) &=\{q_1\},\\
%%\delta(q_1,a) &=\{q_2,q_6\},\\
%%\delta(q_2,a) &=\{q_3\},\\
%%\delta(q_3,a) &=\{q_4\},
%%\end{align*}
%%\end{minipage}
%%\hfill\begin{minipage}{0.5\textwidth}
%%\begin{align*}
%%\delta(q_4,a) & =\{q_5\},\\
%%\delta(q_5,a) & =\{q_3\},\\
%%\delta(q_6,a) & =\{q_7\},\\
%%\delta(q_7,a) & =\{q_6\}
%%\end{align*}
%%\end{minipage}
%that accepts the co-finite language~$\{a\}\{a\}^+$.  Yet, both
%$L(_{q_0}A_{q_3})=\{aaa\}^+$ and $L(_{q_0}A_{q_6})=\{aa\}^+$
%are both \emph{not} star-free.
that accepts the language~$\{a\}^*$. The automaton is depicted in Figure~\ref{fig:non-totally-NFA}.
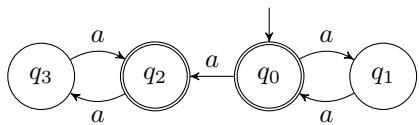
\begin{figure}
  \centering
  \begin{tikzpicture}[>=stealth',->,initial text =]
    \node[draw,circle,state] at (0,0) (3) {$q_3$};
    \node[draw,circle,state,accepting] at (1.5,0) (2) {$q_2$};
    \node[draw,circle,state,initial above,accepting] at (3,0) (0) {$q_0$};
    \node[draw,circle,state] at (4.5,0) (1) {$q_1$};

    \draw (0) to[bend left] node[above, midway] {$a$} (1);
    \draw (1) to[bend left] node[below, midway] {$a$} (0);
    \draw (2) to[bend left] node[below, midway] {$a$} (3);
    \draw (3) to[bend left] node[above, midway] {$a$} (2);
    \draw (0) to node[above, midway] {$a$} (2);
\end{tikzpicture}
  \caption{An example of a non-totally star-free NFA that accepts a star-free language.}
  \label{fig:non-totally-NFA}
\end{figure}
Yet, neither
$L(_{q_0}A_{q_0})=\{aa\}^*$ nor $L(_{q_0}A_{q_2})=\{a\}\{aa\}^*\cup\{\varepsilon\}$
are star-free.
%\end{toappendix}
%%In the general case of non-commutative star-free languages, \INE is \NP-hard even for binary alphabets.\todo{EA: I found this sentence a bit to long and hard to read. SH:Reformulated. HF: No idea what this last sentence should tell us.}.

\noindent The proof of the following theorem uses classical results of Chrobak and Schützenberger~\cite{DBLP:journals/tcs/Chrobak86,DBLP:journals/iandc/Schutzenberger65a}.
\nocite{DBLP:conf/wia/Gawrychowski11}

\begin{restatable}{theorem}{RestateTheoremXP}\label{thm:XP_case}
  The \INE{} problem for totally star-free NFAs accepting
  \emph{star-free commutative} languages, i.e., commutative languages
  in $\st{3/2}$, is contained in \XP\ (with the size of the
  alphabet as the parameter).
\end{restatable}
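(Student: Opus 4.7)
Fix $\Sigma = \{a_1, \ldots, a_r\}$. The plan is to design an algorithm running in time $n^{O(r)}$ that, using commutativity, reduces the problem to the inspection of polynomially many candidate words of the form $w = a_1^{k_1} a_2^{k_2} \cdots a_r^{k_r}$.

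The crucial preprocessing step uses the totally star-free hypothesis. For every $A_i$, every pair $(q,p) \in Q_i^2$, and every letter $a_j$, consider the unary projection
\[
U_{i,j,q,p} \;:=\; L({}_qA_i{}_p) \cap a_j^* .
\]
By assumption, $L({}_qA_i{}_p)$ is star-free; since $a_j^* = \overline{\Sigma^*(\Sigma \setminus \{a_j\})\Sigma^*}$ is itself star-free, so is $U_{i,j,q,p}$. Sch\"utzenberger's theorem then yields that $U_{i,j,q,p}$ is aperiodic, and, being unary, either finite or cofinite. Combining this with Chrobak's normal form for unary NFAs, one obtains a polynomial-time algorithm that computes an explicit representation of $U_{i,j,q,p}$ as a pair $(F_{i,j,q,p}, b_{i,j,q,p})$, where $F_{i,j,q,p} \subseteq \{0, \ldots, c \cdot |Q_i|^2\}$ is a finite ``exception set'' (for a constant $c$) and the bit $b_{i,j,q,p}$ indicates whether $U_{i,j,q,p} = F_{i,j,q,p}$ or $U_{i,j,q,p} = \mathbb{N} \setminus F_{i,j,q,p}$.

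By commutativity, $\bigcap_i L(A_i) \neq \emptyset$ iff there exist $k_1, \ldots, k_r$ with $a_1^{k_1} \cdots a_r^{k_r} \in \bigcap_i L(A_i)$; and for each $A_i$, the latter word is accepted iff there are intermediate states $q_{i,0}, q_{i,1}, \ldots, q_{i,r}$ (with $q_{i,0}$ initial and $q_{i,r} \in F_i$) such that $k_j \in U_{i,j,q_{i,j-1},q_{i,j}}$ for every~$j$. Consequently the acceptance status of $(k_1, \ldots, k_r)$ depends only on the ``cell'' of $\mathbb{N}^r$ to which this vector belongs, where cells are defined coordinate-wise by the thresholds $T_j := \bigcup_{i,q,p} F_{i,j,q,p}$, each of polynomial cardinality. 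The Cartesian product of the $T_j$-induced partitions of $\mathbb{N}$ yields at most $\prod_j O(|T_j|) = n^{O(r)}$ cells in $\mathbb{N}^r$. The algorithm enumerates all cells, picks from each a polynomial-size representative $(k_1, \ldots, k_r)$ (e.g.~$\max(T_j) + 1$ in the unbounded tail, and the appropriate threshold in singleton atoms), and tests whether $a_1^{k_1} \cdots a_r^{k_r}$ belongs to $L(A_i)$ for every~$i$ by direct NFA-simulation on a word of polynomial length; the total running time is $n^{O(r)}$.

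The main technical obstacle lies in the preprocessing step. To guarantee that the exception sets $F_{i,j,q,p}$ have polynomial size and can be extracted in polynomial time, one must combine the totally star-free hypothesis with Sch\"utzenberger's aperiodicity characterization to conclude that every cycle in the normal form of the unary projection collapses to period~$1$, and with Chrobak's normal form theorem to bound the preperiodic tail by $O(|Q_i|^2)$. Once the cells are in place, the enumeration and the per-cell membership tests are routine.
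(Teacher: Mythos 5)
Your proof is correct and follows essentially the same route as the paper's: decompose a canonical word $a_1^{k_1}\cdots a_r^{k_r}$ along intermediate states into per-letter unary projections, invoke total star-freeness so that each projection is a unary star-free language, derive a polynomial bound on the relevant exponents via Chrobak normal form and aperiodicity, and enumerate $n^{O(r)}$ candidates. The paper packages the bound into a single threshold lemma (Lemma~\ref{lem:unary_polybound}) and simply truncates each $k_j$ at a global $N=\mathcal{O}(n^2)$, whereas you compute explicit exception sets and partition $\mathbb{N}^r$ into cells; these are equivalent bookkeeping devices. One caution on your concluding technical remark: the cycles of the Chrobak normal form need not ``collapse to period~$1$''. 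A cofinite unary star-free language can perfectly well be recognized by a Chrobak NFA whose cycles all have length $2$ (say), each contributing a period-$2$ arithmetic progression, with cofiniteness arising only from the \emph{union} of these progressions. The fact that actually yields the $\mathcal{O}(|Q_i|^2)$ bound on the exception set is the number-theoretic Lemma~\ref{lem:threshold_arithmetic_progression_stuff}: once the union of the Chrobak progressions contains some tail, it already contains the tail from $\max_i t_i$ (tail length plus a cycle length) onward. With that correction, your preprocessing step and the remainder of the algorithm are sound.
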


%\begin{toappendix}
%We split the proof of~\autoref{thm:comm_case} into the two
%Lemmata~\ref{lem:com_containment_np}
%and~\ref{lem:com_np_hardness}. 

The proof of Theorem~\ref{thm:XP_case}
is based on a combinatorial number theoretical result that might be of
independent interest.

%Next, we show that for a fixed alphabet, the problem is in \XP\ if we
%have the promise that the input NFAs accept star-free commutative
%languages for each choice of initial or final states. To this end, we
%will need the following number-theoretic result.
\begin{lemma}%
\label{lem:threshold_arithmetic_progression_stuff}
Let $n\geq 1$ and $t_i, p_i \in \mathbb{N}_{\geq 0}$ for $1\leq i\leq n$. Set
$X= \bigcup_{i=1}^n \left(t_i + \mathbb{N}_{\geq 0} \cdot p_i\right)$,
where $\mathbb{N}_{\geq 0} \cdot p_i = \{\, x \cdot p_i \mid x \in \mathbb{N}_{\geq 0} \}$.
 If there exists a threshold $T \ge 0$ such that
 $ \{\, x \in \mathbb{N}_{\geq 0} \mid x \ge T\, \}
   \subseteq X$,
 then already for $T_{\max}=\max\{\, t_i \mid 1\leq i\leq n\,\}$, we find 
 $
 \{\, x \in \mathbb{N}_{\geq 0} \mid x \ge  T_{\max}\,\}
   \subseteq X
 $.
\end{lemma}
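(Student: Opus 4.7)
The plan is to argue by contradiction: suppose that some $x \geq T_{\max}$ satisfies $x \notin X$, and derive an infinite progression of integers avoiding $X$, contradicting the cofiniteness hypothesis.

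First I would observe that at least one $p_i$ must be strictly positive, for otherwise each $A_i := t_i + \mathbb{N}_{\geq 0} \cdot p_i$ reduces to the singleton $\{t_i\}$, making $X$ finite and already contradicting the inclusion $\{x : x \geq T\} \subseteq X$. Let $P := \lcm\{p_i : p_i > 0\} \geq 1$; this is the common period that every progression with $p_i > 0$ respects.

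The key step is to show that $x + kP \notin X$ for every integer $k \geq 0$. For $k=0$ this is the initial assumption. For $k \geq 1$, suppose towards a contradiction that $x + kP \in A_i$ for some index $i$. Two cases arise. If $p_i = 0$, then $A_i = \{t_i\}$ forces $x + kP = t_i \leq T_{\max} \leq x < x+kP$, which is absurd. If $p_i > 0$, then writing $x + kP = t_i + m p_i$ with $m \geq 0$ and using $p_i \mid P$, we get $x = t_i + (m - kP/p_i) p_i$; since $x \geq T_{\max} \geq t_i$ and $p_i > 0$, we conclude $m - kP/p_i \geq 0$, so $x \in A_i \subseteq X$, contradicting the choice of $x$.

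Consequently the entire arithmetic progression $\{x + kP : k \geq 0\}$ is disjoint from $X$, which is impossible since the hypothesis forces every sufficiently large integer to lie in $X$. The heart of the argument, and the step I expect to be the only subtle point, is the backward-transport observation that whenever $x + kP$ belongs to some $A_i$, the inequality $x \geq T_{\max} \geq t_i$ is exactly what is needed to guarantee that $x$ already lies in $A_i$; once this interaction between the threshold $T_{\max}$ and the common period $P$ is isolated, the contradiction is immediate.
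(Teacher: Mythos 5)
Your proof is correct and rests on the same key insight as the paper's: using the least common multiple $P$ of the periods as a common period, and observing that $x \geq T_{\max} \geq t_i$ lets you transport membership in an arithmetic progression backwards along the period. You phrase it as a contradiction (assume $x \notin X$, derive that the whole progression $x + kP$ avoids $X$), while the paper argues directly (pick a large $k$ so that $x + kP \geq T$ lands in some $A_\ell$, then transport back), but these are the same argument, and your version is, if anything, a touch more careful about the edge cases where some $p_i = 0$.
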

\begin{proof}
 The assumption basically says that every integer~$y$ greater than~$T-1$
 is congruent to~$t_\ell$ modulo~$p_\ell$ for some $1\leq\ell\leq n$.
 More specifically, if $x$ is an arbitrary number with $x \ge T_{\max}$, 
 then $y=x + T \cdot \lcm\{p_1, p_2,\ldots, p_n\}$ is congruent to~$t_\ell$
 modulo~$p_\ell$ for some $1\leq\ell\leq n$.
 But this implies that~$x$ itself is congruent to~$t_\ell$ modulo~$p_\ell$,
 and so, as $x \ge t_\ell$, we can write $x= t_\ell + k_\ell \cdot p_\ell$
 for some $k_\ell \ge 0$, i.e., $x \in X$.
\end{proof}

This number-theoretic result can be used to prove a polynomial bound for star-free unary languages on an equivalence resembling Schützenberger's characterization of star-freeness~\cite{DBLP:journals/iandc/Schutzenberger65a}.
%The following proof uses a classical result of Chrobak~\cite{DBLP:journals/tcs/Chrobak86}.

\begin{lemma}
\label{lem:unary_polybound}
Let $L$ be a unary star-free language specified by an NFA $A$ with $n$ states.
Then, there is a number $N$ of order $\mathcal{O}(n^2)$ such that $a^N\in L$ if and only if
for all $k\in \mathbb{N}_{\geq 0}$, $a^{N+k}\in L$.
\end{lemma}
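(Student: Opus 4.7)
The plan is to combine two classical results: the Chrobak normal form for unary NFAs and Sch\"utzenberger's characterisation of star-freeness (which, for a unary alphabet, forces $L$ to be finite or cofinite), and then to appeal to Lemma~\ref{lem:threshold_arithmetic_progression_stuff} in the cofinite case to convert the cofiniteness threshold into a quadratic one.

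First I would invoke Chrobak's theorem to convert $A$ into an equivalent NFA in Chrobak normal form, which presents $L(A)$ in the shape
\[
L(A) \;=\; F \;\cup\; \bigcup_{j=1}^{\ell}\bigl\{\, a^{t_j + k p_j} \;:\; k \in \mathbb{N}_{\geq 0}\,\bigr\},
\]
where $F\subseteq\{a^0,a^1,\ldots,a^M\}$ is a finite ``tail part'' and each offset $t_j$ satisfies $t_j\leq M$, with $M = O(n^2)$. Set $N := M+1$; clearly $N = O(n^2)$, and by construction, whether $a^x$ lies in $L$ for $x\geq N$ is decided solely by membership of $x$ in $X := \bigcup_{j=1}^{\ell}\bigl(t_j + \mathbb{N}_{\geq 0}\cdot p_j\bigr)$.

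Next I would split on the form of $L$. Since $L$ is unary and star-free, Sch\"utzenberger's theorem (or equivalently the fact that the syntactic monoid is aperiodic, which in the unary case forbids any non-trivial period) implies that $L$ is either finite or cofinite. If $L$ is finite, then every infinite progression contributing to $X$ must be disjoint from $L$ above the tail, so in particular $a^{N+k}\notin L$ for every $k\geq 0$; hence $a^N\in L \iff \forall k\; a^{N+k}\in L$ holds vacuously. If $L$ is cofinite, then there is some threshold $T$ with $\{x\in\mathbb{N}_{\geq 0} : x\geq T\}\subseteq L$, and because past the tail membership in $L$ coincides with membership in $X$, we obtain $\{x : x\geq \max(T,N)\}\subseteq X$. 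Applying Lemma~\ref{lem:threshold_arithmetic_progression_stuff} directly collapses this threshold to $T_{\max} = \max_j t_j\leq M < N$, so $\{x : x\geq N\}\subseteq X\subseteq L$, and again $a^N\in L \iff \forall k\; a^{N+k}\in L$ (both sides being true).

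The main obstacle I anticipate is bookkeeping around the Chrobak normal form: one has to argue that the quoted bound $M = O(n^2)$ genuinely governs both the length of the ``tail'' prefix and every cycle-entry offset $t_j$, since Lemma~\ref{lem:threshold_arithmetic_progression_stuff} is applied with the $t_j$'s playing the role of the progression starts. Once this bookkeeping is correct, the reduction to Lemma~\ref{lem:threshold_arithmetic_progression_stuff} and the dichotomy between finite and cofinite unary star-free languages immediately yield the desired quadratic~$N$.
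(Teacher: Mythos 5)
Your proposal is correct and follows essentially the same route as the paper's own proof: Chrobak normal form for the unary NFA, the finite-versus-cofinite dichotomy for unary star-free languages, and then Lemma~\ref{lem:threshold_arithmetic_progression_stuff} to collapse the cofiniteness threshold to $\max_j t_j = O(n^2)$. You have also correctly flagged the one bookkeeping subtlety the paper must (and does) address — that the progression offsets $t_j$ are bounded by the tail length \emph{plus} a cycle length, not the tail length alone, but this sum is still $O(n^2)$ since each cycle has at most $n$ states.
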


\begin{proof}
  By a classical result of Chrobak~\cite{DBLP:journals/tcs/Chrobak86},
  the given NFA~$A$ on~$n$ states can be transformed  into a normal
  form where we have an initial tail with length at most $\mathcal{O}(n^2)$
  that branches at a common endpoint into several cycles, where
  every cycle is of size at most $n$, see
  \cite[Lemma~4.3]{DBLP:journals/tcs/Chrobak86}.  Moreover, this
  transformation can be performed in polynomial
  time~\cite{DBLP:conf/wia/Gawrychowski11}.  Note that a unary
  star-free language is either finite or co-finite~\cite{Brz76}. If~$L$
  is finite, then there are no final states on the cycles and we can
  set~$N$ to be equal to the length of the tail, plus one.  Otherwise,
  if $L\subseteq \{a\}^*$ is co-finite, then it can be expressed as a union
  of a finite language corresponding to the final states on the tail and
  finitely many languages of the form $\{\,a^{\ell}\mid \ell\in
  (t+\mathbb{N}_{\geq 0}\cdot p)\,\}$, where the numbers~$t$ and~$p$ are induced by the Chrobak normal form. Then 
%, then there exists a number~$T$
%  such that $a^T a^* \subseteq L$.  In that case, $L$ is a union of a
%  finite language corresponding to the final states on the tail and
%  finitely many languages of the form
% \[
%  \left\{a^{s + k \cdot p} \mid k \ge 0, s \in O( n^2) \mbox{ and $p$ is the length of a cycle}\right\}\,. 
% \]
  we can apply Lemma~\ref{lem:threshold_arithmetic_progression_stuff},
  where the set $X$ is built from the $t$'s and~$p$'s, and where the~$t$'s
  are bounded by~$T_{\max}$, the sum of the longest tail and the
  largest cycle, plus one. Note that $T_{\max}$ is in~$\mathcal{O}(n^2)$ and that the threshold from Lemma~\ref{lem:threshold_arithmetic_progression_stuff} guarantees that
  every word~$a^\ell$ with $\ell\geq T_{\max}$ is a member of~$L$, as desired.
%results from the tail
% and the cycles in the Chrobak normal
% form (the $t_i$ correspond to the tail lengths, which are in $O(n^2)$) and Lemma~\ref{lem:threshold_arithmetic_progression_stuff}
% yields that the minimal threshold~$T$,
% with $a^T a^* \subseteq L$, 
% is at most the length of the longest tail, plus one.
% Hence, we can set $N$ equal to the length of the longest tail plus the length of the longest
% cycle plus one and this value is in $O(n^2)$ as claimed.
\end{proof}

% Another ingredient of the proof of the next lemma is the closure of the  star-free commutative languages under projections, shown in~\cite{Hoffmann2021}.
% By a \emph{projection} of a language, we mean
% the language resulting by deleting letters.
% More formally, an image under the projection homomorphism $\pi_\Gamma : \Sigma^* \to \Gamma^*$ is 
% given by $\pi_\Gamma(x) = x$ for $x \in \Gamma$
% and $\pi_\Gamma(x) = \varepsilon$ for $x\notin \Gamma$
% for $\Gamma\subseteq \Sigma$. 
%Note that the mentioned closure property is not true for general star-free languages,
%as for example the non-star-free language $(aa)^+$ is a projection of the star-free
%language $(aab)^+ = (aab\Sigma^* \cap \Sigma^*aab) \setminus ( \Sigma^* bb \Sigma^* \cup \Sigma^*aaa\Sigma^* \cup \Sigma^* bab \Sigma^* )$. 

\RestateTheoremXP*

\begin{proof}
%	[Proof of Theorem~\ref{thm:XP_case}]
 Let $A_i = (Q_i, \Sigma, \delta_i, q_i, F_i)$, for $i \in \{1,2,\ldots,m\}$, be
  totally star-free NFAs accepting commutative languages. Let $n_i=|Q_i|$ be the number of states of $A_i$. 
 Fix some order $\Sigma = \{a_1, a_2,\ldots, a_r\}$.
  
 For $1\leq i\leq m$ and $1\leq j\leq r$, as well as $q,p\in Q_i$, 
 let the automaton $B_{i,j,q,p} = (Q_i, \{a_j\}, \delta_i, q, \{p\})$
 be  obtained from~$A_i$ by deleting all transitions
 not labeled with the letter $a_j$ and only retaining those labeled with $a_j$.
Further, let $A_{i,q,p}$ be  obtained from $A_i$ by taking $q$ as (new) initial state and $p$ as the new (and only) final state.
%By definition, $\pi_{\{a_j\}}(L(A_{i,q,p}))=L(B_{i,j,q,p})$.
%Moreover, as $A_i$ is totally star-free, $L(A_{i,q,p}))$ is also star-free. As mentioned above~\cite{Hoffmann2021},
%$L(B_{i,j,q,p})$ is star-free, being a projection of a star-free language.
As $A_i$ is totally star-free, $L(A_{i,q,p})$ is also star-free. By Schützenberger's Theorem characterizing star-freeness~\cite{DBLP:journals/iandc/Schutzenberger65a}, 
it is immediate that 
$L(A_{i,q,p})\cap\Gamma^*$ is also star-free
for each $\Gamma\subseteq\Sigma$.
In particular, $L(B_{i,j,q,p})=L(A_{i,q,p})\cap\{a_j\}^*$ is star-free and commutative.

%\begin{quote}
%\underline{Claim:} The language $L(B_{i,j,q,p})$ is star-free.
%
%\noindent\emph{Proof.} By a characterization of star-freeness 
%from~\cite{DBLP:journals/iandc/Schutzenberger65a}, there exists $L \ge 0$
%such that, for any $u,v,w \in \Sigma^*$,
%\[
% uv^{L}w \in L(A_{i,q,p}) \Leftrightarrow uv^{L+1}w \in L(A_{i,q,p}).
%\]
%As we only restrict the set of possible paths in $B_{i,j,p,q}$ to those labeled with , we
%have, for any $u,v,w \in \{a_j\}^*$,
%\[
% uv^{L}w \in L(B_{i,j,q,p}) \Leftrightarrow uv^{L+1}w \in L(B_{i,j,q,p}).
%\]
%So, using again the mentioned characterization of star-freeness, we find
%that $L(B_{i,j,q,p})$ is star-free.\emph{[End, Proof of the Claim.]}
%\end{quote}
Recall that 
  $\comm(u)=\{\,w\in\Sigma^*\mid\mbox{$|u|_a=|w|_a$ for every
    $a\in\Sigma$}\,\}$, where $|w|_a$ is equal to the number of letters~$a$ in~$w$. Moreover, $\comm(L)=\bigcup_{v\in L}\comm(v)$.
%
%More general, we write $B_{i,\Gamma,Q,F}$ for the automaton obtained from $A_i$ by deleting all transitions  not labeled with letters from~$\Gamma$, having $Q$ as the set of initial states and $F$ as the set of final states.
 By commutativity, the following property is clear:
$$L(A_i)=\comm\left(\bigcup_{p_1,p_2,\dots,p_{r-1}\in Q_i}\bigcup_{p_r\in F_i}L(B_{i,1,q_i,p_1})\cdot L(B_{i,2,p_1,p_2})\cdots L(B_{i,r,p_{r-1},p_r})\right).$$

As $A_i$ accepts a commutative language, by ordering the letters, we find that $w\in L(A_i)$ if and only if $a_1^{\ell_1}a_2^{\ell_2} \cdots a_r^{\ell_r} 
  \in L( A_i)$, for $\ell_j$ being the number of occurrences of $a_j$ in $w$, with $1\leq j\leq r$.
%\begin{align*}\pi_{\{a_j\}}(L(A_i))&=\pi_{\{a_j\}}\left(\bigcup_{p_1,\dots,p_{r-1}\in Q_i}\bigcup_{p_r\in F_i}L(B_{i,1,q_i,p_1})\cdot L(B_{i,2,p_1,p_2})\cdots L(B_{i,r,p_{r-1},p_r})\right)\\
%&=\bigcup_{p\in Q_{i,j}}L(B_{i,1,q_i,p}) \text{ with } Q_{i,j}=\left\{p\in Q_i\mid L(B_{i,\Sigma\setminus\{a_j\},\{p\},F_i})\neq\emptyset\right\}
%\end{align*}
%\todo[inline]{HF :  I have to re-work the proof from here.}
%it could be shown that this automaton accepts the projected language of $L(A_i)$ resulting by deleting all letters except $a_j$.\todo{HF: Really? What about accepting states? See the proof of the previous lemma.}
 %As mentioned above~\cite{Hoffmann2021}, this projected language
 %s also star-free.
 %Hence, $ B_{i,j}$ accepts a star-free unary language.
 %Note that, by commutativity,
Furthermore, the 
word  $a_1^{\ell_1}a_2^{\ell_2} \cdots a_r^{\ell_r}$ is in $L(A_i)$ if and only if for all~$j$ with $1\leq j\leq r$, there is a state $p_j\in Q_i$ such that  
$a_j^{\ell_j} \in L( B_{i,j,p_{j-1},p_j})$,
 %\[
 % a_1^{\ell_1}a_2^{\ell_2} \cdots a_r^{\ell_r} 
 % \in L( A_i)
 % \Leftrightarrow 
 % \forall j \in \{1,2,\ldots,r\}\,\exists p_j\in Q_i : 
%a_j^{\ell_j} \in L( B_{i,j,p_{j-1},p_j})\,,
% \]
 where $p_0=q_i$ and $p_r\in F_i$. 
We can apply Lemma~\ref{lem:unary_polybound} to get constants $N_{i,1},N_{i,2},\dots, N_{i,r}\in \mathcal{O}(n_i^2)$ such that checking membership of $a_j^{\ell_j}$ in $L( B_{i,j,p_{j-1},p_j})$ can be restricted to checking membership for a word of length at most~$N_j$.  
  Now, we describe a polynomial-time
 procedure to solve \INE
 for fixed alphabets.
 Set $N_i = \max\{N_{i,1},N_{i,2}, \ldots, N_{i,r}\}$
 with the numbers $N_{i,j}$ from above.
 Then, we know that a word $a_1^{\ell_1}a_2^{\ell_2} \cdots a_r^{\ell_r}$
 is accepted by an input automaton~$A_i$ if and only if
 the word $a_1^{\min\{\ell_1, N_i\}}a_2^{\min\{\ell_2, N_i\}} \cdots a_r^{\min\{\ell_r,N_i\}}$
 is accepted by it. If we let $N=\max\{N_1,N_2,\ldots,N_r\}$ and $n=\max\{n_1,n_2,\ldots,n_m\}$,
 we only need to test the $(N+1)^r$
 many words $a_1^{i_1}a_2^{i_2} \cdots a_r^{i_r}$
 with $0\leq i_j \leq N$ and $1\leq j\leq r$ 
 if we can find a word among them that is accepted by all automata $A_i$ for $1\leq i\leq m$.
Altogether, ignoring polynomial factors, this leads to a running time of the form $(N+1)^r$. % which shows the claim.
\end{proof}
%\end{toappendix}

\begin{remark}
 Note that Theorem~\ref{thm:XP_case}
 does not hold for arbitrary commutative languages concerning a fixed alphabet, but only for star-free commutative languages, since
 in the general case, the problem is \NP-complete
 even for languages over a common unary alphabet~\cite{StMe73}.
\end{remark}

%%% Local Variables: 
%%% mode: latex
%%% TeX-master: "sample.tex"
%%% End: 

\begin{figure}
    %\centering
    %\input{diagram_commutative}
    %\caption{Straubing-Th{\'e}rien and dot-depth hierarchies in the commutative case}
\end{figure}
\section{\PSPACE-Completeness}

\newcommand\VS[1][.3em]{%
	\mbox{\kern.1em\vrule height.3ex}%
	\vbox{\hrule width#1}%
	\hbox{\vrule height.3ex}
}

%\todo[inline]{MH: I had a closer look on the languages that Kozen used in his proof, to check for a valid computation. Both languaages that verify that the $i$th, $(i+1)$st, and the $(i+2)$nd symbols of a configuration follow from the $i$th, $(i+1)$st, and the $(i+2)$nd symbol from the previous configuration are not piecewise testable, i.e., do not belong to $\st{1}$. This can be seen by a result of Cho und Hyunh ("`Finite-automaton aperiodicity is PSPACE-complete"') that describes a condition for (minimal) DFAs that allows one to determine whether the accepted language is piecewise testable or not. Part of the condition reads: A regular language~$L$ is not piecewise testable if and only if the minimal DFA~$A$ for~$L$ either (i) contains a nontrivial (non-self-loop) cycle or (iii) \ldots.
%HF: One should also check the proof of Theorem~3 in \cite{KMT17}. I fear that this was the key to shortening this section.}
% 
%Finally, we show that the problem reaches its full complexity at the first level of the dot-depth hierarchy,
%and the second level of the Straubing-Thérien hierarchy.
% \todo[inline]{Some text is needed}

Here, we prove that even when restricted to languages from $\dd{1}$ or $\st{2}$, \INE\ is \PSPACE-complete, as it is for unrestricted DFAs or NFAs.
We will profit from the close relations of \INE\ to  the \textsc{Non-universality} problem for NFAs: Given an NFA~$A$ with input alphabet~$\Sigma$, decide if $L(A)\neq\Sigma^*$.
Conversely, we can also observe that \textsc{Non-universality}  for NFAs is \PSPACE-complete for languages from $\dd{1}$.

\begin{theorem}
\label{theorem:PSPACE-Completeness}
The \INE\ problem for DFAs or NFAs accepting languages from
$\dd{1}$ or $\st{2}$ is \PSPACE-complete, even for binary input alphabets.
\end{theorem}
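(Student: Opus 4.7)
The plan for the upper bound is straightforward: \INE{} for arbitrary NFAs already lies in \PSPACE{} by Kozen's classical argument, and a syntactic promise on the input languages does not affect the algorithm, so any restricted version inherits the same upper bound. The work therefore concentrates entirely on the hardness direction.

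For \PSPACE{}-hardness, I plan to follow the hint indicated earlier in the paper: adapt Kozen's reduction from linear-bounded automaton acceptance by substituting the pattern-detecting components with the languages introduced by Masopust and Kr\"otzsch in \cite{DBLP:journals/corr/abs-1907-13115}. Recall that Kozen encodes a computation of a polynomial-space Turing machine $M$ on input $x$ as a string $C_0 \# C_1 \# \cdots \# C_T$ over a suitable alphabet and produces a polynomial family of finite automata $B_1, \ldots, B_m$ such that $\bigcap_{i=1}^m L(B_i)$ is exactly the set of valid accepting computations of $M$ on $x$. Each $B_i$ enforces one simple local property: correctness of the initial configuration, presence of an accepting state somewhere, or the correct application of the transition function at a pair of adjacent cells of two consecutive configurations.

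The key adaptation is to arrange each $B_i$ so that its complement $\Sigma^* \setminus L(B_i)$ is the corresponding set of ``violation patterns'', expressed as a finite union of languages of the form $\Sigma^* u_1 \Sigma^* u_2 \cdots \Sigma^* u_k \Sigma^*$ as in \cite{DBLP:journals/corr/abs-1907-13115}. Such a union lies in $\dd{1/2} \subseteq \st{3/2}$, and since $\dd{1}$ and $\st{2}$ are the Boolean closures of $\dd{1/2}$ and $\st{3/2}$, the language $L(B_i)$ itself then belongs simultaneously to $\dd{1}$ and $\st{2}$, thereby satisfying the promise. Because each $B_i$ only has to detect the \emph{absence} of a fixed, polynomially bounded collection of forbidden scattered factors, a polynomial-size DFA for $B_i$ can be written down explicitly, so the reduction runs in polynomial time; this simultaneously establishes hardness for DFA inputs and for NFA inputs, and for both language classes $\dd{1}$ and $\st{2}$.

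For the binary-alphabet part of the statement I plan to invoke the standard block encoding that sends each letter of the computation alphabet to a fixed-length word over $\{0,1\}$, together with an auxiliary automaton enforcing well-formedness of the encoding; a routine inspection shows that such encodings preserve membership in $\dd{1}$ and $\st{2}$. The main obstacle I foresee is the careful bookkeeping needed to verify that every automaton $B_i$ produced by the adapted Kozen construction really has its language in $\dd{1}$ (and not merely somewhere in $\st{2}$), and to check that the explicit complementation of each violation-pattern language yields a DFA of polynomial size uniformly across all the local consistency checks.
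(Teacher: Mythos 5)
Your high-level strategy — reduce from polynomial-space Turing machine acceptance, arrange each automaton's ``violation pattern'' language to lie in $\dd{1/2}$, and complement so that $L(B_i)\in\dd{1}\subseteq\st{2}$ — is the same strategy the paper uses. The paper implements this by re-analyzing the \PSPACE-hardness proof for \textsc{Non-universality} of poNFAs from \cite{KMT17}, which conveniently already works over a binary alphabet; you instead propose to adapt Kozen's original reduction over a large alphabet and then do a separate block encoding, which is workable but adds a non-trivial verification step that the paper's route avoids.

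However, there is a genuine gap in the specific form you insist on for the violation languages. You claim each complement $\Sigma^*\setminus L(B_i)$ can be written as a finite union of languages of the form $\Sigma^* u_1 \Sigma^* u_2 \cdots \Sigma^* u_k \Sigma^*$. These are all ideals: they are closed under inserting arbitrary letters at any position, including the front and back of a word. But Kozen's reduction fundamentally requires constraints that are anchored at the endpoints of the string. The constraint ``the first configuration is the correct initial one'' has a violation set consisting of words that begin wrongly, which is a union of sets $E\Sigma^*$ for finite $E$; this is \emph{not} an ideal — prepending a letter to a well-formed prefix breaks it — and hence is not a finite union of languages $\Sigma^* u_1\Sigma^*\cdots u_k\Sigma^*$. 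Likewise, ``the last configuration is accepting'' needs a suffix-anchored violation set $\Sigma^* E$, and some checks need plain finite languages $E$. None of these are captured by the form you propose. The fix is to allow the full $\dd{1/2}$ shape $u_0\Sigma^*u_1\Sigma^*\cdots\Sigma^*u_k$ with possibly nonempty $u_0,u_k$, i.e., violation languages that are polynomial unions of sets of the forms $E\Sigma^*$, $\Sigma^*E$, $\Sigma^*E\Sigma^*$, and $E$ for finite $E$. This is exactly what the paper observes about the construction of \cite{KMT17}, and it is the precise point on which your proposal, as stated, would fail.
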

As $\dd{1} \subseteq \st{2}$,
it is sufficient to show that the problem is \PSPACE-hard for $\dd{1}$.
%Also, Theorem~\ref{theorem:PSPACE-Completeness} is even true for binary input alphabets, while membership in \PSPACE\ is true in general.
While without paying attention to the size of the input alphabet, this result can be readily obtained by re-analyzing Kozen's original proof in \cite{Kozen1977Lower},
the restriction to binary input alphabets needs some more care. 
%Details can be found in the long version~\cite{arxiveVersion}. 
We modify  the proof of Theorem~3 in \cite{KMT17} that showed  \PSPACE-completeness for \textsc{Non-universality} for poNFAs (that characterize the level $3/2$ of the Straubing-Th\'erien hierarchy).
Also, it can be observed that the languages involved in the intersection are actually locally testable languages.
Without giving details of definitions,  we can therefore formulate:

 \begin{corollary}The \INE\ problem for DFAs or NFAs accepting locally testable languages is \PSPACE-complete, even for binary input alphabets.
\end{corollary}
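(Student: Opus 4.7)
The \PSPACE{} upper bound is immediate: locally testable languages are regular, so \INE{} for this class inherits Kozen's general \PSPACE{} upper bound for arbitrary NFAs~\cite{Kozen1977Lower}. For the matching lower bound, the plan is to keep the reduction used in the proof of Theorem~\ref{theorem:PSPACE-Completeness} verbatim and add an auxiliary structural analysis of the languages that appear in the produced intersection instance, showing that each such language is in fact locally testable.

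I will use the standard characterization that a language $L \subseteq \Sigma^*$ is \emph{locally testable} iff, for some $k\geq 1$, $L$ is a Boolean combination of atoms of the form $u\Sigma^*$, $\Sigma^*u$ and $\Sigma^*u\Sigma^*$ with $|u|\leq k$; equivalently, membership in $L$ depends only on the length-$(k-1)$ prefix and suffix and the set of length-$k$ factors of the input. The first step is therefore to open up the adapted Kozen/KMT construction referenced in Theorem~\ref{theorem:PSPACE-Completeness} and, for each NFA $A_i$ produced by the reduction, isolate the exact condition it enforces on the input word. In such encodings of polynomial-space Turing-machine computations, every automaton is typically in charge of only one of the following ``check one spot'' tasks: that the input carries the correct delimiter pattern (a constant-depth conjunction of prefix, suffix, and factor conditions); that the initial or a final configuration is correctly encoded at one end of the word (a prefix or suffix condition); and that one designated simulated tape cell is updated consistently between two consecutive configurations (a condition on a factor of constant length around the associated delimiter, where the constant depends only on the machine $M$, not on the input length). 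The second step is to verify formally that each of these three kinds of checks admits an expression as a Boolean combination of the atoms above for a common $k$ depending solely on $M$.

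The principal obstacle is the third kind of check. Because configurations of $M$ are binary-encoded, two cells that must agree across a single step of $M$ end up a non-constant distance apart in the input word, so naively the check does not look like a factor condition. To keep the two positions inside a single constant-length factor, I would arrange the construction so that each per-cell automaton identifies ``its'' position by matching a short window \emph{around a delimiter} rather than by counting from the start of the word; the long-range alignment between cells is then delegated to the skeleton automaton of the first kind, whose own specification is already locally testable. Once this separation of concerns is made precise, every automaton produced by the reduction accepts a locally testable language, and the reduction of Theorem~\ref{theorem:PSPACE-Completeness} transfers verbatim, giving \PSPACE-hardness of \INE{} over the locally testable fragment on a binary alphabet.
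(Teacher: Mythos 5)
Your proof plan starts from the right idea (reuse the reduction behind Theorem~\ref{theorem:PSPACE-Completeness} and then inspect the languages $L_i$ produced by it), but the ``principal obstacle'' you identify is not an obstacle, and the fix you sketch for it is both unnecessary and not obviously workable.

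You impose on yourself that the local-testability window $k$ must be a constant depending only on the machine $M$, not on the input length. That is not required by the definition you yourself state (``for \emph{some} $k\geq 1$''): $k$ may depend on the language, hence on the instance $(M,x)$, and in particular it may grow polynomially with $|x|$. Once you drop the constant-window requirement, the argument becomes immediate: the paper simply observes that, in the (KMT-style) construction used for Theorem~\ref{theorem:PSPACE-Completeness}, each $L_i$ is already written as a polynomial union of languages of the forms $E\{0,1\}^*$, $\{0,1\}^*E$, $\{0,1\}^*E\{0,1\}^*$, or $E$ for finite binary $E$. These are Boolean combinations of atoms $u\Sigma^*$, $\Sigma^*u$, $\Sigma^*u\Sigma^*$ (with $|u|$ bounded by the longest word in $E$, a polynomial in the instance size), hence locally testable; by closure of the locally testable class under Boolean operations, so is each $\overline{L_i}$, and the reduction transfers. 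No re-encoding, no ``separation of concerns,'' no re-engineering of the transition-consistency check is needed.

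Your proposed workaround (``delegate long-range alignment to a skeleton automaton'') would also not come for free: the skeleton would then be the one enforcing the polynomial-distance positional alignment across configuration boundaries, which is exactly the kind of check you claim is hard to express with a constant window, so the difficulty you are trying to evade would simply migrate to the skeleton. In short, you have correctly located where to look (inside the $L_i$'s of the existing reduction), but you should not change the construction, and you should read off local testability directly from the stated syntactic shapes of the $L_i$'s rather than insisting on a machine-only constant $k$.
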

%
%\begin{toappendix}
\begin{proof}
To see our claims, we re-analyze the proof of Theorem~3 in \cite{KMT17} that shows \PSPACE-completeness for the closely related  \textsc{Non-universality} problem for NFAs.
Similar to Kozen's original proof, this gives a reduction from the general word problem of deterministic polynomial-space bounded Turing Machines. 
In the proof of Theorem~3 in \cite{KMT17} that showed  \PSPACE-completeness for \textsc{Non-universality} for poNFAs (that characterize the level $3/2$ of the Straubing-Th\'erien hierarchy),  a polynomial number of binary languages~$L_i$ was constructed such that $\bigcup_i L_i\neq\{0,1\}^*$ if and only if the $p$-space-bounded Turing machine~$M$, where~$p$ is some polynomial, accepts a word $x\in\{0,1\}^*$ using space $p(|x|)$. Observe that each of the languages $L_i$
is a polynomial union of languages of the forms $E\{0,1\}^*$,\ $\{0,1\}^*E$,  $\{0,1\}^*E\{0,1\}^*$, or $E$ for finite binary languages~$E$.
%of one of the forms $E_i\{0,1\}^*$, $\{0,1\}^*E_i$ or $\{0,1\}^*E_i\{0,1\}^*$ for finite binary languages $E_i$. 
This means that each $L_i$ belongs to~$\dd{1/2}$.
Now, observe that $\bigcup_iL_i\neq\{0,1\}^*$ if and only if $\bigcap_i\overline{L_i}\neq\emptyset$. As $\overline{L_i}\in \dd{1}$ and each $L_i$ (and hence its complement $\overline{L_i}$) can be described by a polynomial size DFA, the claims follow.
\end{proof}
%\end{toappendix}

%Following up on our analysis of the
\noindent By the proof of Theorem~3 in \cite{KMT17}, also $\bigcup_i L_i$ belongs to $\dd{1}$, so that we can conclude:

\begin{corollary}
The \textsc{Non-universality} problem for NFAs accepting languages from
$\dd{1}$ is \PSPACE-complete, even for binary input alphabets.
\end{corollary}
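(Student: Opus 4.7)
The corollary is an almost immediate consequence of the construction underlying the preceding proof, so the plan is to package that reduction in complementary form rather than to design a new reduction. Specifically, the plan is to reuse the polynomial family $(L_i)_{i}$ of languages in $\dd{1/2}$ produced by the proof of Theorem~3 in \cite{KMT17}, along with the key property that $\bigcup_i L_i \neq \{0,1\}^*$ iff the fixed polynomial-space deterministic Turing machine~$M$ accepts the input $x \in \{0,1\}^*$. Each $L_i$ is recognized by a polynomial-size DFA $B_i$ over the binary alphabet, so this is the engine I will feed into an NFA non-universality instance.

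The first step is to form the single language $L := \bigcup_i L_i$ and exhibit a polynomial-size NFA $A$ for it: take the disjoint union of the $B_i$'s together with a fresh initial state from which nondeterministic (or $\varepsilon$-) transitions enter the initial states of each $B_i$; an $\varepsilon$-free equivalent is obtained by standard removal. Clearly $|A|$ is polynomial in $|M|+|x|$. The second step is to verify that $L \in \dd{1}$: since each $L_i$ belongs to $\dd{1/2}$ and $\dd{1}$ is by definition the Boolean closure of $\dd{1/2}$, it is closed under finite unions, so $L=\bigcup_i L_i \in \dd{1}$. (Alternatively, as already remarked just before the corollary, this is observed directly from the shape of the $L_i$'s used in the KMT17 construction.)

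The third step is the hardness transfer: by construction, $L(A) = L \neq \{0,1\}^*$ iff $M$ accepts $x$, so a polynomial-time algorithm deciding \textsc{Non-universality} on the promise class of NFAs whose language lies in $\dd{1}$ would solve the general deterministic polynomial-space acceptance problem; this yields \PSPACE-hardness, and the reduction stays over the binary alphabet since the original reduction already did. The fourth step is to note the matching upper bound: \textsc{Non-universality} for arbitrary NFAs is in \PSPACE\ by the classical guess-and-simulate argument combined with Savitch's theorem, and restricting the input to a subclass cannot raise the complexity. Taken together, these four steps give \PSPACE-completeness.

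I do not anticipate a serious obstacle; the only point needing mild care is making sure the disjoint-union NFA can be produced in logspace so that the reduction chain goes through under the reductions used throughout the paper, but this is routine. Everything else is a direct bookkeeping of what the preceding proof already established.
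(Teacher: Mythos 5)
Your proposal matches the paper's own reasoning: both obtain the corollary by noting that the union $\bigcup_i L_i$ from the KMT17 construction lies in $\dd{1}$ and that non-universality of this union is exactly the hardness question already reduced from polynomial-space Turing-machine acceptance, with the $\PSPACE$ upper bound being the standard one for NFA non-universality. The only minor difference is that you justify $\bigcup_i L_i \in \dd{1}$ via closure of $\dd{1}$ under finite union (in fact $\dd{1/2}$ already suffices, being closed under finite union by definition), whereas the paper simply cites the structure of the KMT17 proof; either justification is sound and the conclusion is the same.
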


%\begin{toappendix}
We now present all proof details, because the construction is somewhat subtle.

The proof is based on simulating a $p$-space-bounded Turing machine $M$. We are interested in simulating a run of $M$ on a string $x$.
Its configurations are encoded as words over an alphabet~$\Delta$, so that with the help of the enhanced alphabet $\Delta_\#=\Delta\cup\{\#\}$, runs of $M$ can be encoded,  with $\#$ serving as a separator between configurations. More precisely, if $\Sigma_M$ is the input alphabet of $M$, $\Gamma_M$ (containing a special \emph{blank} symbol~$\VS$) is the tape alphabet, and $Q_M$ is the state alphabet, then transitions take the form $f_M:Q_M \times \Gamma_M \to Q_M \times \Gamma_M \times \{L,R\}$, where $L,R$ indicate the movements of the head.
For simplicity, define $\Delta=\Gamma_M\times (Q_M\cup\{\$\})$. A configuration $\gamma\in\Delta^+$ has then the specific properties that it contains exactly one symbol from $\Gamma_M\times Q_M$ and that it has length $p(|x|)$ always, i.e., we are possibly filling up a string that is too short by the blank symbol~$\VS$.
Configuration sequences of $M$, or runs for short, can be encoded by words from $\#(\Delta^+\#)^*$, or more precisely, from $L_{simple-run}=\#((\Gamma_M\times\{\$\})^*(\Gamma_M\times Q_M)(\Gamma_M\times\{\$\})^*\#)^*$. The latter language can be encoded by a 3-state DFA. However, we will not make use of this language in the following, as it does not fit in the level of the dot-depth hierarchy that we are aiming at.
% $B_0$.

Let $\Sigma=\{0,1\}$ be the binary target alphabet. 
A letter $a\in \Delta_\#$ is first encoded by a binary word $\hat a$ of length $K=\lfloor \log_2(|\Delta_\#|)\rfloor$, but this is only an auxiliary encoding, used to define the block-encoding 
$$\mathop{enc}(a)=001\hat a[1]1\hat a[2]1\cdots \hat a[K]1$$ of length $L=2K+3$. This block-encoding is extended to words and sets of words as usual. In order to avoid some case distinctions, we assume that $|\Delta_\#|$ is a power of two, so that $\mathop{enc}(\Delta_\#)=001\Sigma 1\Sigma 1\cdots \Sigma 1$. Hence, $\mathop{\overline{enc}}(\Delta_\#)=00\Sigma^{L-2}\setminus \mathop{enc}(\Delta_\#)=00\{a_1b_1a_2b_2\cdots a_Kb_K\mid a_1a_2\cdots a_K\in\Sigma^K\land b_1b_2\cdots b_K\in 1^*0\Sigma^*\}$. Clearly, there are DFAs with $\mathcal{O}(L)$ many states accepting $\mathop{enc}(\Delta_\#)$ and $\mathop{\overline{enc}}(\Delta_\#)$ $(\dagger)$.  In this proof, we will call DFAs with $\mathcal{O}(L\cdot p(|x|))$ many states \emph{small}.
Any encoded word $\mathop{enc}(w)$, with $w\in\Delta_\#^*$, contains the factor $00$ only at positions (minus one) that are multiples of $L$, more precisely: $\mathop{enc}(w)[i]=\mathop{enc}(w)[i+1]=0$ if and only if $i-1$ is divisable by~$L$. This observation allows us to construct small DFAs for $\Sigma^*\mathop{enc}(\Delta_\#)c\Sigma^*$ (for $c\in\{1,01\}$) and for $\Sigma^*\mathop{\overline{enc}}(\Delta_\#)\Sigma^*$, based on $(\dagger)$. %Also, there is a small DFA $A_0$ (based on $B_0$) that accepts the complement of $\mathop{enc}(L_{simple-run})$. 
As shown in the proof of Theorem~3 in~\cite{KMT17}, the language of words that are \underline{not} encodings over $\Delta_\#$ at all is the union of the following languages:
\begin{enumerate}
\item $(1\cup 01)\Sigma^*$,
\item $\Sigma^*\mathop{\overline{enc}}(\Delta_\#)\Sigma^*$, 
\item  $\Sigma^*\mathop{enc}(\Delta_\#)(1\cup 01)\Sigma^*$, and
\item $\Sigma^*00(\bigcup_{i=1}^{L-3}\Sigma^i)=\{w\in\Sigma^*\mid \text{The factor }00\text{ is in the last }L-1\text{ positions}\}$.
\end{enumerate}

Each of these languages can be accepted by small DFAs $A_1, A_2, A_3, A_4$.

Then, we have to take care of the binary words that cannot be encodings of configuration sequences, because the first configuration is not initial.
%For now, it is easier to think of the language $L_{init}$ of all binary words that start with proper encodings of the initial configuration. 
By our construction, the (unique) initial configuration~$\gamma$ is encoded by
a binary string $\mathop{enc}(\gamma)$ of length  $L\cdot p(|x|)$, i.e., we consider a language $L'$ which is the complement of $\mathop{enc}(\#\gamma\#)\Sigma^*$, the language of all binary strings that do not start with the encoding of the initial configuration. Let $\#\gamma\#=a_1a_2\cdots a_{p(|x|)+2}$.
As we already described non-encodings by automata $A_1$ through $A_4$, instead of  $L'$, we describe $\bigcup_{j=0}^{p(|x|)+2} L_j'$, where
$L_0'=\bigcup_{i=0}^{L\cdot (p(|x|)+2)-1}$ is a finite language (of strings that are too short), $L_j'=\Sigma^{(j-1)L}\mathop{\overline{enc}}(a_j)\Sigma^*$ for $j=1$ to $p(|x|)+2$, describing a violation at symbol $a_j$ of the initial configuration $\gamma$.
Moreover, there are small DFAs $A_5,A_6,\ldots,A_{p(|x|)+7}$ that accept  $L_0',L_1',\ldots, L_{p(|x|)+2}$.

%: Hence, there is a small DFA $A_5$ accepting the complement of
%We are now going to verify that a word $w$ is \underline{not} the encoding of a run of~$M$. We already know that the automaton $A_0$ can check that $w$ is not in $L_{simple-run}$. As all the configurations we are interested in have length $p(|x|)$, we can also construct a small DFA $A_6$ that accepts the complement of $\mathop{enc}(\#)(\Delta^{L\cdot p(|x|)}\mathop{enc}(\#))^*$.

Assuming a unique final state and also assuming that $M$ cleans up the tape after processing, there is a unique final configuration $\gamma_f$ that should be reached. Then, invalidity of a computation with respect to the final configuration can be checked as for the initial configuration, giving us small DFAs $A_{p(|x|)+8},A_{p(|x|)+9},\ldots,A_{2p(|x|)+10}$.

Finally, we want to check the (complement of the) following property of a valid configuration sequence~$\rho\in \#(\Delta^+\#)^*$: any sequence of three letters $a,b,c$ in $\rho$ determines the  letter $f(a,b,c)$ that should be present at a distance of $p(|x|)-1$ to the right. More precisely, we are interested in any factor  $abcvdf(a,b,c)e$ of $\rho$ where $|vd|=p(|x|)-1$.
Different scenarios can occur; we only describe three typical situations in the following.
\begin{itemize}
\item If $a=({a}',\$)$, $b=({b}',\$)$, $c=({c}',\$)$, then $f(a,b,c)=b$. For $d$, we know $d\in \{a'\}\times(Q\cup\{\$\})$ and similarly for $e$, we know $e\in\{c'\}\times(Q\cup\{\$\})$.
\item If $a=({a}',\$)$, $b=\#$, $c=({c}',\$)$, then $f(a,b,c)=\#$. For $d$, we know $d\in \{a'\}\times(Q\cup\{\$\})$ and similarly for $e$, we know $e\in\{c'\}\times(Q\cup\{\$\})$.
\item If $a=({a}',\$)$, $b=({b}',q)$, $c=({c}',\$)$ and if $f_M(q,b')=(p,\hat b',L)$, then $d=(a',p)$, $f(a,b,c)=(\hat b,\$)$, and $e=c$.
 \end{itemize}

We refrain from describing all such situations in detail. Yet with some more sloppiness, we write $\mathop{\overline{enc}}(d(f(a,b,c)e)$ for all situations that do not obey the rules for $df(a,b,c)e$
as tentatively formulated before.
Now, for each triple $a,b,c\in\Delta_\#$, consider the binary language $L_{a,b,c}=\Sigma^*\cdot \mathop{enc}(abc)\cdot \Sigma^{L\cdot (p(|x|)-1)}\cdot \mathop{\overline{enc}}(d(f(a,b,c)e)\cdot \Sigma^*$. This language can be accepted by a small DFA $A_{a,b,c}$.

Altogether, we described $2p(|x|)+10+(|\Delta_\#1)^3$ many languages from $\dd{1}$ such that their union does not yield $\Sigma^*$ if and only if $M$ accepts $x$ using $p(|x|)$ space.
Moreover, for each of the languages, we can build small DFAs.

\section{Conclusion and Open Problems}
\label{section:Conclusion}

We have investigated how the increase in complexity within the dot-depth and the 
Straubing-Th\'{e}rien hierarchies is reflected in the complexity of the \INE problem. We have 
shown the complexity of this problem is already completely determined by the very first 
levels of either hierarchy.

Our work leaves open some very interesting questions and  directions of research. 
First, we were not able to prove containment in $\NP$ for the 
\INE problem when the input automata are allowed to be NFAs accepting a language in the level 
$3/2$ or in the level $1$ of the Straubing-Th\'erien hierarchy. Interestingly, we have shown that 
such containment holds in the case of DFAs, but have shown that the technique we have used to 
prove this containment does not carry over to the context of NFAs. In particular, to show this we 
have provided the first exponential separation between the state complexity 
of general NFAs and partially ordered NFAs. The most immediate open question is if \INE{} for NFAs 
accepting languages in~$\dd{1/2}$,\ $\st{1}$, or $\st{3/2}$ is complete for some level higher up in the polynomial-time 
hierarchy ($\PH$), or if this case is already $\PSPACE$-complete. 
Another tantalizing open question is whether 
one can capture the levels of $\PH$ in terms of the \INE problem when the input 
automata are assumed to accept languages belonging to levels of a sub-hierarchy of $\st{2}$. Such sub-hierarchies 
have been considered for instance in \cite{klima2011subhierarchies}.  

It would also be interesting to have a systematic study of these two well-known subregular hierarchies for related problems like \textsc{Non-universality} for NFAs or \textsc{Union Non-universality} for DFAs. Notice the technicality that \textsc{Union Non-universality} (similar to \INE) has an implicit Boolean operation (now union instead of intersection) within the problem statement, while  \textsc{Non-universality} lacks this implicit Boolean operation. This might lead to a small ``shift'' in the discussions of the hierarchy levels that involve Boolean closure. %\todo{HF: Added paragraph}
%\todo[inline]{Should we also mention other hierarchies as the group hierarchies discussed in \cite{Pin98}?}
Another interesting hierarchy is the group hierarchy~\cite{Pin98}, where we start with the group languages, i.e., languages
acceptable by automata in which every letter induces a permutation of the state set,
at level 0. Note that for group languages, \INE\  is \NP-complete even for a unary alphabet~\cite{StMe73}.
As $\Sigma^*$ is a group language, the Straubing-Th{\'e}rien hierarchy
is contained in the corresponding levels of the group hierarchy, and hence, 
we get \PSPACE-hardness for level 2 and above in this hierarchy. However, we do not know
what happens in the levels in between.%\todo{SH: Probably level one-half is in \NP in the group hierarchy, as it might be possible to guess a polynomial in a finite union of marked products.}

%\todo[inline]{The abbreviations like ic in the bibliography are not resolved yet.}

%\bibliographystyle{abbrv}
\bibliography{sample}

\def\lastmodified{23.12.2008}\def\id#1{#1}
\begin{thebibliography}{10}

\bibitem{Abdulla12}
Parosh~Aziz Abdulla.
\newblock Regular model checking.
\newblock {\em International Journal on Software Tools for Technology
  Transfer}, 14(2):109--118, 2012.

\bibitem{klima2010new}
Jorge Almeida and Ondrej Kl{\'{\i}}ma.
\newblock New decidable upper bound of the second level in the
  {S}traubing-{T}h{\'e}rien concatenation hierarchy of star-free languages.
\newblock {\em Discrete Mathematics \& Theoretical Computer Science},
  12(4):41--58, 2010.

\bibitem{BouajjaniJNT00}
Ahmed Bouajjani, Bengt Jonsson, Marcus Nilsson, and Tayssir Touili.
\newblock Regular model checking.
\newblock In E.~Allen Emerson and A.~Prasad Sistla, editors, {\em Computer
  Aided Verification, 12th International Conference, {CAV}}, volume 1855 of
  {\em Lecture Notes in Computer Science}, pages 403--418. Springer, 2000.

\bibitem{BMT07}
Ahmed Bouajjani, Anca Muscholl, and Tayssir Touili.
\newblock Permutation rewriting and algorithmic verification.
\newblock {\em Information and Computation}, 205:199--224, 2007.

\bibitem{Brz76}
Janusz~A. Brzozowski.
\newblock Hierarchies of aperiodic languages.
\newblock {\em {RAIRO} Informatique th\'eorique et Applications/Theoretical
  Informatics and Applications}, 10(2):33--49, 1976.

\bibitem{BrFi80}
Janusz~A. Brzozowski and Faith~E. Fich.
\newblock Languages of $\mathcal{R}$-trivial monoids.
\newblock {\em Journal of Computer and System Sciences}, 20(1):32--49, February
  1980.

\bibitem{brzozowski1978dot}
Janusz~A. Brzozowski and Robert Knast.
\newblock The dot-depth hierarchy of star-free languages is infinite.
\newblock {\em Journal of Computer and System Sciences}, 16(1):37--55, 1978.

\bibitem{ChHu91}
Sang Cho and Dung~T. Huynh.
\newblock Finite-automaton aperiodicity is {PSPACE}-complete.
\newblock {\em Theoretical Computer Science}, 88(1):99--116, September 1991.

\bibitem{DBLP:journals/tcs/Chrobak86}
Marek Chrobak.
\newblock Finite automata and unary languages.
\newblock {\em Theoretical Computer Science}, 47(3):149--158, 1986.

\bibitem{CoBr71}
Rina~S. Cohen and Janusz~A. Brzozowski.
\newblock Dot-depth of star-free events.
\newblock {\em Journal of Computer and System Sciences}, 5(1):1--16, 1971.

\bibitem{FernauK17}
Henning Fernau and Andreas Krebs.
\newblock Problems on finite automata and the exponential time hypothesis.
\newblock {\em Algorithms}, 10(1):24, 2017.

\bibitem{FluGro2006}
J\"org Flum and Martin Grohe.
\newblock {\em Parameterized Complexity Theory}.
\newblock Springer, 2006.

\bibitem{DBLP:conf/wia/Gawrychowski11}
Pawel Gawrychowski.
\newblock Chrobak normal form revisited, with applications.
\newblock In B{\'{e}}atrice Bouchou{-}Markhoff, Pascal Caron, Jean{-}Marc
  Champarnaud, and Denis Maurel, editors, {\em Implementation and Application
  of Automata - 16th International Conference, {CIAA}}, volume 6807 of {\em
  Lecture Notes in Computer Science}, pages 142--153. Springer, 2011.

\bibitem{gla2000decidable}
Christian Gla{\ss}er and Heinz Schmitz.
\newblock Decidable hierarchies of starfree languages.
\newblock In Sanjiv Kapoor and Sanjiva Prasad, editors, {\em Foundations of
  Software Technology and Theoretical Computer Science, 20th Conference, {FST}
  {TCS}}, volume 1974 of {\em Lecture Notes in Computer Science}, pages
  503--515. Springer, 2000.

\bibitem{glasser2001level}
Christian Gla{\ss}er and Heinz Schmitz.
\newblock Level 5/2 of the {S}traubing-{T}h{\'{e}}rien hierarchy for two-letter
  alphabets.
\newblock In Werner Kuich, Grzegorz Rozenberg, and Arto Salomaa, editors, {\em
  Developments in Language Theory, 5th International Conference, {DLT}}, volume
  2295 of {\em Lecture Notes in Computer Science}, pages 251--261. Springer,
  2001.

\bibitem{HIM78}
Juris Hartmanis, Neil Immerman, and Stephen~R. Mahaney.
\newblock One-way log-tape reductions.
\newblock In {\em 19th Annual Symposium on Foundations of Computer Science,
  FOCS}, pages 65--72. {IEEE} Computer Society, 1978.

\bibitem{Hoffmann2021ciaa}
Stefan Hoffmann.
\newblock Regularity conditions for iterated shuffle on commutative regular
  languages.
\newblock {\em accepted at CIAA}, 2021.

\bibitem{HunRos78}
Harry~B. {Hunt~{III}} and Daniel~J. Rosenkrantz.
\newblock Computational parallels between the regular and context-free
  languages.
\newblock {\em {SIAM} Journal on Computing}, 7(1):99--114, 1978.

\bibitem{Lipton2003Intersection}
George Karakostas, Richard~J. Lipton, and Anastasios Viglas.
\newblock On the complexity of intersecting finite state automata and {NL}
  versus {NP}.
\newblock {\em Theoretical Computer Science}, 302(1):257--274, 2003.

\bibitem{ie85}
Takumi Kasai and Shigeki Iwata.
\newblock Gradually intractable problems and nondeterministic log-space lower
  bounds.
\newblock {\em Mathematical Systems Theory}, 18(1):153--170, 1985.

\bibitem{KleinZ14}
Felix Klein and Martin Zimmermann.
\newblock How much lookahead is needed to win infinite games?
\newblock {\em Logical Methods in Computer Science}, 12(3), 2016.
\newblock \href {https://doi.org/10.2168/LMCS-12(3:4)2016}
  {\path{doi:10.2168/LMCS-12(3:4)2016}}.

\bibitem{klima2011subhierarchies}
Ondrej Kl{\'{\i}}ma and Libor Pol{\'{a}}k.
\newblock Subhierarchies of the second level in the straubing-th{\'{e}}rien
  hierarchy.
\newblock {\em International Journal of Algebra and Computation},
  21(7):1195--1215, 2011.

\bibitem{Kozen1977Lower}
Dexter Kozen.
\newblock Lower bounds for natural proof systems.
\newblock In {\em 18th Annual Symposium on Foundations of Computer Science,
  FOCS}, pages 254--266. {IEEE} Computer Society, 1977.

\bibitem{KMT17}
Markus Kr{\"o}tsch, Tom{\'{a}}s Masopust, and Micha{\"{e}}l Thomazo.
\newblock Complexity of universality and related problems for partially ordered
  {NFAs}.
\newblock {\em Information and Computation}, 255:177--192, 2017.

\bibitem{ie92}
Klaus{-}J{\"{o}}rn Lange and Peter Rossmanith.
\newblock The emptiness problem for intersections of regular languages.
\newblock In Ivan~M. Havel and V{\'{a}}clav Koubek, editors, {\em Mathematical
  Foundations of Computer Science 1992, 17th International Symposium, MFCS},
  volume 629 of {\em Lecture Notes in Computer Science}, pages 346--354.
  Springer, 1992.

\bibitem{DBLP:journals/tcs/Masopust18}
Tom{\'{a}}s Masopust.
\newblock Separability by piecewise testable languages is {PTime}-complete.
\newblock {\em Theoretical Computer Science}, 711:109--114, 2018.

\bibitem{DBLP:journals/corr/abs-1907-13115}
Tom{\'{a}}s Masopust and Markus Kr{\"{o}}tzsch.
\newblock Partially ordered automata and piecewise testability.
\newblock {\em CoRR}, abs/1907.13115, 2019.
\newblock URL: \url{http://arxiv.org/abs/1907.13115}, \href
  {http://arxiv.org/abs/1907.13115} {\path{arXiv:1907.13115}}.

\bibitem{MaTh15}
Tom{\'{a}}s Masopust and Micha{\"{e}}l Thomazo.
\newblock On the complexity of $k$-piecewise testability and the depth of
  automata.
\newblock In Igor Potapov, editor, {\em Developments in Language Theory - 19th
  International Conference, {DLT}}, number 9168 in Lecture Notes in Computer
  Science, pages 364--376. Springer, 2015.

\bibitem{Nay2011}
Mike Naylor.
\newblock Abacaba! – using a mathematical pattern to connect art, music,
  poetry and literature.
\newblock {\em Bridges}, pages 89--96, 2011.

\bibitem{Pa94a}
Christos~H. Papadimitriou.
\newblock {\em Computational Complexity}.
\newblock Addison-Wesley, 1994.

\bibitem{Pin98}
Jean{-}\'Eric Pin.
\newblock Bridges for concatenation hierarchies.
\newblock In Kim~Guldstrand Larsen, Sven Skyum, and Glynn Winskel, editors,
  {\em Automata, Languages and Programming, 25th International Colloquium,
  ICALP}, volume 1443 of {\em Lecture Notes in Computer Science}, pages
  431--442. Springer, 1998.

\bibitem{place2017concatenation}
Thomas Place and Marc Zeitoun.
\newblock Generic results for concatenation hierarchies.
\newblock {\em Theory of Computing Systems}, 63(4):849--901, 2019.

\bibitem{Rampersad2010detecting}
Narad Rampersad and Jeffrey Shallit.
\newblock Detecting patterns in finite regular and context-free languages.
\newblock {\em Information Processing Letters}, 110(3):108--112, 2010.

\bibitem{Sa70a}
Walter~J. Savitch.
\newblock Relationships between nondeterministic and deterministic tape
  complexities.
\newblock {\em Journal of Computer and System Sciences}, 4(2):177--192, 1970.

\bibitem{DBLP:journals/iandc/Schutzenberger65a}
Marcel~Paul Sch{\"{u}}tzenberger.
\newblock On finite monoids having only trivial subgroups.
\newblock {\em Information and Control}, 8(2):190--194, 1965.

\bibitem{STV01}
Thomas Schwentick, Denis Th{\'{e}}rien, and Heribert Vollmer.
\newblock Partially-ordered two-way automata: {A} new characterization of {DA}.
\newblock In Werner Kuich, Grzegorz Rozenberg, and Arto Salomaa, editors, {\em
  Developments in Language Theory, 5th International Conference, {DLT}}, volume
  2295 of {\em Lecture Notes in Computer Science}, pages 239--250. Springer,
  2001.

\bibitem{StMe73}
Larry~J. Stockmeyer and Albert~R. Meyer.
\newblock Word problems requiring exponential time: Preliminary report.
\newblock In Alfred~V. Aho, Allan Borodin, Robert~L. Constable, Robert~W.
  Floyd, Michael~A. Harrison, Richard~M. Karp, and H.~Raymond Strong, editors,
  {\em 5th Annual Symposium on Theory of Computing, STOC}, pages 1--9. ACM,
  1973.

\bibitem{St81}
Howard Straubing.
\newblock A generalization of the {S}ch{\"{u}}tzenberger product of finite
  monoids.
\newblock {\em Theoretical Computer Science}, 13:137--150, 1981.

\bibitem{straubing1985finite}
Howard Straubing.
\newblock Finite semigroup varieties of the form {$V^* D$}.
\newblock {\em Journal of Pure and Applied Algebra}, 36:53--94, 1985.

\bibitem{Su75b}
Ivan~Hal Sudborough.
\newblock On tape-bounded complexity classes and multihead finite automata.
\newblock {\em Journal of Computer and System Sciences}, 10(1):62--76, February
  1975.

\bibitem{Th81}
Denis Th{\'e}rien.
\newblock Classification of finite monoids: the language approach.
\newblock {\em Theoretical Computer Science}, 14(2):195--208, 1981.

\bibitem{ie01}
Todd Wareham.
\newblock The parameterized complexity of intersection and composition
  operations on sets of finite-state automata.
\newblock In Sheng Yu and Andrei Paun, editors, {\em Implementation and
  Application of Automata, 5th International Conference, {CIAA}}, volume 2088
  of {\em Lecture Notes in Computer Science}, pages 302--310. Springer, 2000.

\bibitem{Wehar2014hardness}
Michael Wehar.
\newblock Hardness results for intersection non-emptiness.
\newblock In Javier Esparza, Pierre Fraigniaud, Thore Husfeldt, and Elias
  Koutsoupias, editors, {\em Automata, Languages, and Programming - 41st
  International Colloquium, {ICALP}, Part {II}}, volume 8573 of {\em Lecture
  Notes in Computer Science}, pages 354--362. Springer, 2014.

\bibitem{Wehar2016thesis}
Michael Wehar.
\newblock {\em On the Complexity of Intersection Non-Emptiness Problems}.
\newblock PhD thesis, University at Buffalo, 2016.

\end{thebibliography}

\end{document}